\tikzstyle{vertex} = [fill,shape=circle,node distance=80pt]
\tikzstyle{edge} = [fill,opacity=.5,fill opacity=.5,line cap=round, line join=round, line width=50pt]
\tikzstyle{elabel} =  [fill,shape=circle,node distance=30pt]
\def\ignore #1 {}
\definecolor{darkorange}{RGB}{210,50,30}
\newcommand{\norm}[1]{\left\|\,#1\,\right\|}       
\newcommand{\set}[1]{{\left\{#1\right\}}}    
\newcommand{\ve}[1]{\mathbf{#1}}
\newcommand{\abs}[1]{\left\lvert #1 \right\rvert}
\newcommand{\complex}{{\mathbb C}}
\newcommand{\nats}{{\mathbb N}}
\def\ket#1{ | #1 \rangle}
\def\bra#1{{\langle #1 | }}
\newcommand{\ketbra}[2]{\ket{#1}\!\bra{#2}}        
\newcommand{\braket}[2]{\mbox{$\langle #1  | #2 \rangle$}}
\newcommand\tqsat{3-QSAT}
\newcommand{\class}[1]{\textup{#1}}
\newcommand{\spa}[1]{\mathcal{#1}}
\newcommand{\lin}{\mathcal{L}}
\newcommand{\qi}{{\color{darkorange}(QI)}}
\newtheorem*{rep@theorem}{\rep@title}
\newcommand{\newreptheorem}[2]{%
\newenvironment{rep#1}[1]{%
 \def\rep@title{#2 \ref{##1}}%
 \begin{rep@theorem}}%
 {\end{rep@theorem}}}
\newtheorem{theorem}{Theorem}
\newtheorem{lemma}[theorem]{Lemma}
\newtheorem{corollary}[theorem]{Corollary}
\theoremstyle{definition}
\newtheorem{definition}[theorem]{Definition}
\newtheorem{rem}[theorem]{Remark}
\newtheorem{example}[theorem]{Example}
\def\X{{\mathcal X}}
\def\K{{\mathbb K}}
\def\C{{\mathcal C}}
\def\S{{\mathcal S}}
\title{On efficiently solvable cases of Quantum $k$-SAT\footnote{A short version of this article has appeared as a long abstract in Proceedings of the 43rd International Symposium on Mathematical Foundations of Computer Science (MFCS), volume 117 of Leibniz International Proceedings in Informatics (LIPIcs), pages 38:1--38:16, 2018.}}
\author{Marco Aldi\footnote{Department of Mathematics and Applied Mathematics, Virginia Commonwealth University, Richmond, VA, U.S.A., maldi2@vcu.edu.} \and Niel de Beaudrap\footnote{Department of Computer Science, University of Oxford, Oxford, U.K., niel.debeaudrap@cs.ox.ac.uk.} \and Sevag Gharibian\footnote{Department of Computer Science, University of Paderborn, Paderborn North-Rhine-Westphalia, Germany, and Virginia Commonwealth University, Richmond, VA, U.S.A., sevag.gharibian@upb.de, +49-0176-5888-7082.} \and Seyran Saeedi\footnote{Department of Computer Science, Virginia Commonwealth University, Richmond, VA, U.S.A., saeedis@mymail.vcu.edu.}}
\date{}
\begin{document}

\maketitle

\begin{abstract}
Estimating ground state energies of local Hamiltonian models is a central problem in quantum physics. The question of whether a given local Hamiltonian is \emph{frustration-free}, meaning the ground state is the simultaneous ground state of \emph{all} local interaction terms, is known as the Quantum $k$-SAT ($k$-QSAT) problem. In analogy to its classical Boolean constraint satisfaction counterpart, the NP-complete problem $k$-SAT, Quantum $k$-SAT is QMA$_1$-complete (for $k\geq 3$, and where QMA$_1$ is a quantum generalization of NP with one-sided error), and thus likely intractable. But whereas $k$-SAT has been well-studied for special tractable cases, as well as from a ``parameterized complexity'' perspective, much less is known in similar settings for $k$-QSAT. Here, we study the open problem of computing satisfying assignments to $k$-QSAT instances which have a  ``dimer covering'' or ``matching''; such systems are known to be frustration-free, but it remains open whether one can \emph{efficiently compute} a ground state.

Our results fall into three directions, all of which relate to the ``dimer covering'' setting: (1) We give a polynomial-time classical algorithm for $k$-QSAT when all qubits occur in at most two interaction terms or clauses. (2) We give a ``parameterized algorithm'' for $k$-QSAT instances from a certain non-trivial class, which allows us to obtain exponential speedups over brute force methods in some cases. This is achieved by reducing the problem to solving for a single root of a single univariate polynomial. An explicit family of hypergraphs, denoted Crash, for which such a speedup is obtained is introduced. (3) We conduct a structural graph theoretic study of $3$-QSAT interaction graphs which have a ``dimer covering''. We remark that the results of (2), in particular, introduce a number of new tools to the study of Quantum SAT, including graph theoretic concepts such as transfer filtrations and blow-ups from algebraic geometry.
 \end{abstract}

\section{Introduction}\label{scn:intro}
Estimating ground state energies is a central problem in quantum physics. Specifically, given a $k$-local Hamiltonian $H=\sum_i H_i$ acting on $n$ qubits, and where each local interaction term or \emph{clause} $H_i$ acts non-trivially on a subset of at most $k$ qubits\footnote{For clarity, in the general formulation of $k$-LH, there is no geometric restriction on where the interaction terms act (e.g. on a 1D chain or a 2D lattice).}, the aim is to estimate the smallest eigenvalue of $H$. In 1999, Kitaev showed that this problem, dubbed the $k$-local Hamiltonian problem ($k$-LH), is complete for a quantum generalization of NP known as Quantum Merlin Arthur (QMA)~\cite{KSV02}. This not only demonstrated that $k$-LH is likely intractable in the worst case, but also spawned an entire area of research at the intersection of mathematics, condensed matter physics, and computational complexity theory known as \emph{Quantum Hamiltonian Complexity} (see e.g.~\cite{AN02,O11,GHLS14} for surveys).

Let us explore this connection to classical complexity theory further. In the canonical NP-complete Boolean constraint satisfaction problem MAX-$k$-SAT, one is given a set of $k$-local Boolean functions or \emph{constraints/clauses} $\phi:=\set{\phi_i}_{i=1}^m$, such that each clause acts on $k$ out of all $n$ bits in the system. (Moreover, for MAX-$k$-SAT, each $\phi_i$ is a logical OR of $k$ literals, such as $\phi_i =x_2\vee\overline{x_3} \vee\cdots\vee x_1$, for $\overline{x}$ the negation of $x$.) Given such a set of clauses $\phi$, the aim is to compute the maximum number of clauses which can be simultaneously satisfied over all assignments to the bits $x_1,\ldots, x_n$. Note the \emph{maximization} flavor of this problem; the correct generalization of this to the quantum setting is $k$-LH, since in $k$-LH the ground space may also be \emph{frustrated}, meaning it may be impossible to satisfy \emph{all} constraints simultaneously. If we instead demand that the ground space be \emph{frustration-free}, i.e. the ground state $\ket{\psi}\in(\complex^2)^n$ lives in the simultaneous ground spaces of each constraint $H_i$, then we are more accurately considering a quantum analogue of the NP-complete problem $k$-SAT, in which the question is given $\phi$, can we simultaneously satisfy \emph{all} constraints? This quantum analogue, denoted Quantum $k$-SAT ($k$-QSAT), was introduced by Bravyi in 2006~\cite{B06}, and is QMA$_1$-complete for $k\geq 3$~\cite{B06,GN13}. (See ``previous work'' below for further known results.) (Here, QMA$_1$ is QMA with perfect completeness, meaning in a YES instance of a given problem, the quantum proof must be accepted with certainty by the quantum verifier.)

Given the importance of such constraint satisfaction problems (classical or quantum) and their intractability (up to standard complexity theoretic conjectures such as $\class{P}\neq\class{NP}$ and $\class{BQP}\neq\class{QMA}_1$), much effort has been devoted by the classical community to approaches for MAX-$k$-SAT and $k$-SAT, including approximation algorithms, heuristic algorithms, and exact algorithms\footnote{The distinction between these three approaches is as follows. Approximation algorithms run in polynomial time, and provably provide a solution which is within some fixed prescribed relative error of optimal. A heuristic typically is also polynomial time, but does not provide any guarantee as to the accuracy of the solution found. An exact algorithm produces the optimal solution, but either takes superpolynomial time (as expected) but better than brute force time, or exactly solves a restricted case of the problem in polynomial time.}.  A particular direction of research involving \emph{exact} algorithms is what restrictions to the input are sufficient to allow polynomial-time solvability. In this paper, we focus on this theme, and ask:
\begin{quote}
\emph{Which special cases of $k$-QSAT can be solved efficiently on a classical computer?}
\end{quote}

Unfortunately (but also intriguingly), this problem appears to be markedly more difficult quantumly than classically. Consider the motivating example for this work: Classically, if each clause $c$ of a $k$-SAT instance can be matched with a unique variable $v_c$ (such that $v_c$ appears in clause $c$), then clearly the $k$-SAT instance is satisfiable. For example, in the instance $\phi=\set{\phi_1=x_1\vee \overline{x_2},\phi_2=\overline{x_1}\vee x_2}$, we can match $x_1$ with clause $\phi_1$ and $x_2$ with clause $\phi_2$. For any such ``matched'' $k$-SAT instance, finding a solution is trivial: Set variable $v_c$ to satisfy clause $c$. In our example, this means setting $x_1=1$ and $x_2=1$. (Note that the matching can be found efficiently via a reduction to network flow, which can then be solved in polynomial time via, e.g., the Ford-Fulkerson algorithm~\cite{FF56}.) Let us compare and contrast with the quantum setting. Quantumly, it has been known~\cite{LLMSS10} since 2010 that $k$-QSAT instances with ``matchings'' (also called a ``dimer covering'' in~\cite{LLMSS10}) are also satisfiable/frustration-free. Moreover, a ground state exists which is of a particularly nice form --- it can be assumed to be a {tensor product state}. Despite this, \emph{finding} said ground state efficiently has proven elusive (indeed, the proof of~\cite{LLMSS10} is \emph{non-constructive}). In complexity theoretic terms, we have a trivial NP \emph{decision} problem (since the answer is always YES, the system is frustration-free, and the proof is a tensor product state which can be described via polynomially many bits) whose analogous \emph{search} version is not known to be efficiently solvable. It should be pointed out here that classically, the question of whether decision versus search complexity coincide for NP problems is a long-standing open problem (see, e.g.,~\cite{BG94}), and so it is perhaps surprising that this phenomenon seems to potentially also play a role in ``matched'' QSAT. This is the starting point of the present work.

\paragraph*{Results and techniques.} Our results fall under three directions, all of which are related to $k$-QSAT with matchings, and which can very broadly be summarized as follows: (1) We show how to efficiently solve $k$-QSAT with sufficiently bounded occurrence of variables, (2) we give a general framework for parameterized algorithms for Quantum $k$-SAT, which in principle applies to \emph{any} Quantum $k$-SAT instance, (3) we attempt to characterize the set of all $3$-uniform hypergraphs (which encode interaction graphs for $3$-QSAT, formal definitions given shortly) with a ``dimer covering'', or formally, a ``system of distinct representatives''. Results (1) and (2) partially resolve the open problem of~\cite{LLMSS10} by giving efficient classical algorithms for special cases of ``matched QSAT'', and result (2) in particular gives (as far as we are aware) the first known parameterized algorithm for QSAT, obtaining exponential speedups over brute force in some cases (Section~\ref{sscn:runtime}). However, we are not yet able to fully resolve whether the decision and search complexity of ``matched QSAT'' coincide; indeed, as the structural graph theoretic study of (3) suggests, the geometry of all $3$-QSAT instances with matchings/dimer coverings appears quite complex.\\

\noindent\emph{Definitions and precise statements of results.} To more precisely state our results, we now require formal definitions. For this, we first define Quantum $k$-SAT ($k$-QSAT)~\cite{B06} and the notion of a system of distinct representatives (SDR). For $k$-QSAT, the input is a two-tuple $\Pi=(\set{\Pi_i=\ketbra{\psi_i}{\psi_i}}_i, \alpha)$ of rank $1$ projectors or \emph{clauses}\footnote{The original definition~\cite{B06} of $k$-QSAT did not require each $\Pi_i$ to be rank $1$. As in~\cite{LLMSS10}, we require the rank-$1$ condition to make our definitions and results well-defined and valid, respectively. Nevertheless, we do allow one to ``stack'' multiple clauses on a fixed set of $k$ qubits to simulate higher rank clauses.} $\Pi_{i}\in\lin{(\complex^2)^{\otimes k}}$, each acting non-trivially on a set of $k$ (out of $n$) qubits, and non-negative real number $\alpha>1/p(n)$ for some fixed polynomial $p$. The output is to decide whether there exists a satisfying assignment on $n$ qubits $\ket{\psi}\in (\complex^2)^{\otimes n}$, i.e. to distinguish between the cases $\Pi_i\ket{\psi}=0$ for all $i$ (YES case), or whether $\bra{\psi}\sum_i\Pi_i\ket{\psi}\geq \alpha$ (NO case). Note that $k$-QSAT generalizes $k$-SAT. As for a \emph{system of distinct representatives (SDR)} (which is formal terminology for the notion of ``matching'' for $k$-QSAT instances, as adopted from the combinatorics literature~\cite{SJ011}), given a set system such as a hypergraph $G=(V,E)$, an SDR is a set of vertices $V'\subseteq V$ such that each edge in $e\in E$ is paired with a distinct vertex $v_e\in V'$ such that $v_e\in e$. In previous work on QSAT, an SDR has been referred to as a ``dimer covering''~\cite{LLMSS10}.\\

\noindent \emph{1. Quantum $k$-SAT with bounded occurrence of variables.} Our first result concerns the natural restriction of limiting the number of times a variable can appear in a clause. For example, $3$-SAT with at most $3$ occurrences per variable is known to be NP-hard. We now show this threshold is tight\footnote{In addition to being NP-hard, whether $k$-QSAT with at most $3$ occurrences of variables is also QMA$_1$-hard is not clear. For example, the QMA$_1$-completeness constructions of~\cite{B06,GN13} utilize more than $3$ occurrences per variable.}, in the following sense.
\begin{theorem}\label{thm:bounded}
        There exists a polynomial time classical algorithm which, given an instance $\Pi$ of $k$-QSAT in which each variable occurs in at most $2$ clauses, outputs a satisfying product state if $\Pi$ is satisfiable, and otherwise rejects. Moreover, the algorithm works for clauses ranging from $1$-local to $k$-local in size.
\end{theorem}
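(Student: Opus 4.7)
\medskip
\noindent\emph{Proof plan.} My plan is to reduce the instance via unit propagation and a clause-graph decomposition, then solve each connected component by a tree DP augmented, in the cyclic case, with an algebraic elimination step.

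I would first perform unit propagation. A $1$-local clause $\ketbra{\psi}{\psi}$ on a qubit $v$ forces $v = \ket{\psi^\perp}$ in any product-state solution, so I substitute this value into the (at most one) other clause containing $v$, reducing that clause's locality by one; two $1$-local clauses on the same qubit whose defining states are not proportional trigger immediate rejection. The iteration runs in polynomial time, preserves the $2$-occurrence property, and leaves every remaining clause with locality at least $2$. Next I form the clause multigraph $H=(\mathcal{C},E)$ whose vertices are the surviving clauses and whose edges are qubits shared between two clauses (a qubit in exactly one clause is a \emph{leaf} of that clause), and I process each connected component of $H$ independently. The simple inequality $\sum_{c \in S} k_c \leq 2|N(S)|$ combined with $k_c \geq 2$ verifies Hall's condition for every clause-subset $S$, so an SDR exists; by~\cite{LLMSS10}, each component then admits a product-state satisfying assignment, and the task is to construct one efficiently.

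For a component whose clause graph is a tree, I root $H$ arbitrarily and process it top-down using the following key observation: for any rank-$1$ projector on $k \geq 2$ qubits, fixing any $k-1$ of them to product states leaves a product-state extension on the remaining qubit that lies in the kernel (easily computed in $O(k)$ time). At each clause $c$, its parent-edge qubit is already fixed, leaving $k_c - 1 \geq 1$ free qubits, so $c$ can be satisfied and its child-edge qubits become the next inputs. For a cyclic component I fix a spanning tree of $H$, treat the qubits on the non-tree edges as free parameters in $\mathbb{P}(\complex^2)$, and run the tree DP symbolically in these parameters; clauses whose remaining freedom is insufficient yield polynomial constraints on the parameters. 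Since an SDR exists, the resulting system is consistent, and since the number of parameters equals the cyclomatic complexity and the polynomial degrees are controlled by clause locality, iterated elimination reduces the problem to finding a root of a single univariate polynomial in polynomial time.

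The main obstacle I anticipate is the cyclic step: bounding the degree and size of the polynomial system produced by the symbolic tree DP, and arguing that iterated elimination yields a univariate polynomial of polynomial degree whose root can be found by standard algorithms. The remaining steps --- unit propagation, bipartite matching for Hall's check, spanning-tree construction, and tree DP --- are all routine classical polynomial-time procedures.
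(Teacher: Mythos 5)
Your preprocessing (unit propagation with conflict rejection) and your tree case are sound, and the tree case is in fact handled more directly than in the paper: fixing the parent-edge qubit and using the rank-$1$ kernel observation is a clean generalization of the paper's chain reactions. The genuine gap is the cyclic case, which is precisely where the whole difficulty of the theorem lives. First, the number of symbolic parameters is the cyclomatic number of the clause multigraph, which is $\Theta(m)$ in general (e.g.\ for $2$-regular $3$-uniform instances one has $|E(H)|=3m/2$ versus $m$ clause-vertices), so ``iterated elimination'' over $\Theta(m)$ projective parameters is not a polynomial-time step --- resultant/Gr\"obner-type elimination blows up exponentially in the number of variables, and even writing the propagated constraints down can be exponential-size, since a polynomial of bounded degree in each of $\Theta(m)$ variables can have exponentially many monomials. (Compare the paper's Section~\ref{scn:parameterized}, where exactly this kind of symbolic propagation is carried out and the resulting runtime is exponential in the number of ``foundation'' qubits and the radius; your parameter set plays the role of a foundation of linear size.) Second, your consistency claim is unjustified: when a clause has several free qubits you fix the ``extra'' ones arbitrarily before forcing the last one, so the parametrized family of assignments explored by the symbolic DP is a proper subset of all product states; the existence of \emph{some} product solution via the SDR and~\cite{LLMSS10} does not imply that a solution survives those arbitrary choices, so the polynomial system you produce may be inconsistent even though the instance is satisfiable. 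Making those extra qubits symbolic as well repairs consistency but only worsens the elimination problem.

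The paper avoids any multivariate elimination by a purely combinatorial schedule: after handling $1$-local clauses, it first assigns (say $\ket{0}$ to) every qubit that occurs only in clauses of locality $\ge 3$, and then maintains the invariant that the remaining instance always contains a $2$-local clause and no $1$-local clause. It then repeatedly solves $2$-local paths and cycles with transfer matrices and the $2\times 2$ cycle matrix (the only algebraic step is a $2\times 2$ eigenvector computation), and when two chain reactions collide on a $3$-local clause the clause collapses to a $1$-local constraint that launches a new, \emph{fused} chain reaction; higher-locality clauses hit by a chain reaction simply retain enough freedom to be dealt with later. The degree-$2$ occurrence bound is what guarantees this local propagation never gets stuck, so no global parameter system ever has to be solved. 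To close your gap you would need an argument of this combinatorial type (or an explicit bound showing your elimination stays polynomial, which it does not in general), rather than the appeal to SDR existence plus ``iterated elimination.''
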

 \noindent To show this, our idea is to ``partially reduce'' the $k$-QSAT instance to a $2$-QSAT instance. We then use the transfer matrix techniques of~\cite{B06,LMSS10,dBG15} (particularly the notion of \emph{chain reactions} from~\cite{dBG15}), along with a new notion of ``fusing'' chain reactions, to deal with the remaining clauses of locality at least $3$ in the instance.

 Whereas \emph{a priori} this setting may seem unrelated to the open question of computing solutions to $k$-QSAT instances with SDRs, we connect the two settings as follows. Denote the \emph{interaction hypergraph} $G=(V,E)$ of a $k$-QSAT instance as a $k$-uniform hypergraph (i.e. all edges have size precisely $k$), in which the vertices correspond to qubits, and each clause $c$ acting on a set of $k$ qubits $S_c$, is represented by a hyperedge of size $k$ containing the vertices corresponding to $S_c$.
\begin{theorem}\label{thm:matchgraph}
    Let $G=(V,E)$ be a hypergraph with all hyperedges of size at least $2$, and such that each vertex has degree at most $2$. Then, $G$ has an SDR.
\end{theorem}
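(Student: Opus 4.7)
The plan is to reduce Theorem \ref{thm:matchgraph} to (the hypergraph version of) Hall's marriage theorem, which says that an SDR for $G=(V,E)$ exists if and only if, for every subset $S\subseteq E$, we have $\bigl|\bigcup_{e\in S} e\bigr|\geq |S|$. This is a natural route because an SDR is exactly an injection $\phi\colon E\to V$ with $\phi(e)\in e$, i.e.\ a perfect matching saturating $E$ in the bipartite incidence graph between $E$ and $V$, and Hall's condition governs the existence of such matchings.

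Given Hall's criterion, the strategy is a simple double-counting argument on vertex-edge incidences inside an arbitrary subset $S\subseteq E$. First I would write
\[
\sum_{e\in S}|e|\;=\;\sum_{v\in \bigcup_{e\in S}e}\deg_S(v),
\]
where $\deg_S(v)$ denotes the number of edges of $S$ containing $v$. The assumption that every vertex of $G$ has degree at most $2$ gives $\deg_S(v)\leq 2$ for all $v$, so the right-hand side is at most $2\,\bigl|\bigcup_{e\in S}e\bigr|$. The assumption that every hyperedge has size at least $2$ gives $|e|\geq 2$ for every $e\in S$, so the left-hand side is at least $2|S|$. Combining these two bounds yields $|S|\leq \bigl|\bigcup_{e\in S}e\bigr|$, which is precisely Hall's condition, and the theorem follows.

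Since this is a counting argument, there is no real obstacle — the only thing to be careful about is that the two hypotheses (edges of size $\geq 2$ and vertex degrees $\leq 2$) are used on opposite sides of the incidence count, and that both bounds are tight exactly when every edge in $S$ has size $2$ and every involved vertex appears in exactly two edges of $S$ (so the argument is tight on disjoint unions of even cycles in the $2$-uniform case, which is reassuring). It is worth noting, for later use in the paper, that the proof is constructive in the sense that any polynomial-time bipartite matching algorithm (e.g.\ Ford–Fulkerson, as cited earlier in the introduction) can then be applied to the incidence bipartite graph to actually compute the SDR efficiently, which is presumably how the theorem will feed back into the $k$-QSAT algorithms of Theorems \ref{thm:bounded} and the surrounding results.
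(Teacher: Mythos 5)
Your proof is correct, and it takes a genuinely different route from the paper. You verify Hall's condition directly by double counting incidences: for any $S\subseteq E$, the identity $\sum_{e\in S}|e|=\sum_{v\in\bigcup_{e\in S}e}\deg_S(v)$ combined with $|e|\geq 2$ and $\deg_S(v)\leq\deg_G(v)\leq 2$ gives $2|S|\leq 2\bigl|\bigcup_{e\in S}e\bigr|$, which is exactly the hypothesis of the Marriage Theorem (stated in Section~\ref{scn:hypergraphs} of the paper and used there for Theorem~\ref{thm:general}, so the tool is very much in the paper's spirit). The paper instead gives a direct, self-contained construction: it first matches off pairs of edges intersecting in at least two vertices and all degree-one vertices, reduces to a linear $2$-regular remainder, passes to its line graph, finds a cycle, and runs a modified depth-first search to build a ``reverse SDR'' that is then translated back into an SDR of the hypergraph; this yields an explicit polynomial-time procedure without invoking Hall or matching theory. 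Your argument is shorter, more elementary, and generalizes immediately (edges of size at least $d$ and degrees at most $d$), at the cost of being existential on its face; as you correctly note, constructiveness is recovered by running any polynomial-time bipartite matching algorithm on the incidence graph, so nothing essential is lost relative to the paper's version. One small caveat: since the paper allows $E$ to be a family (possibly with repeated edges), you should count $\deg_S(v)$ with multiplicity, which your identity already does implicitly.
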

\noindent Thus, Theorem~\ref{thm:bounded} resolves the open question of~\cite{LLMSS10} for $k$-QSAT instances with SDRs in which additionally (1) each variable occurs in at most two clauses and (2) there are no $1$-local clauses. (The latter of these restrictions is necessary, as allowing edges of size $1$ easily makes Theorem~\ref{thm:matchgraph} false in general; see Section~\ref{sscn:SDR}.)
\\

 \noindent \emph{2. Parameterized algorithms for Quantum $k$-SAT.} Our next result, and the main contribution of this paper, gives a parameterized algorithm\footnote{Roughly, parameterized complexity characterizes the complexity of computational problems with respect to specific \emph{parameters} of interest other than just the input size. For example, for an NP-complete graph theoretic problem $\Pi$ on $n$ vertices, a parameterized algorithm might run in time polynomial in $n$, but potentially exponentially with respect to some other graph parameter, such as the treewidth of the input graph. In this example, $\Pi$ would be tractable on input graphs of logarithmic treewidth. Parameterized complexity is the focus of an entire research community classically; see ``previous work'' for further details.} for explicitly computing (product state) solutions for a non-trivial class of $k$-QSAT instances. As discussed in Section~\ref{sscn:runtime}, this algorithm in some cases provides an \emph{exponential} speedup over brute force diagonalization.\\\vspace{-1mm}

 \noindent\emph{Transfer filtrations.} To sketch the idea of the algorithm at a high level, we first introduce a new graph theoretic notion of a \emph{transfer filtration} for a $k$-uniform hypergraph $G=(V,E)$ (see Definition~\ref{def:2} for a formal definition). Intuitively, a ``transfer filtration of type $b>0$'' should be thought of as a set of $b$ of qubits (out of all $n$ qubits the $k$-QSAT instance acts on) which form the ``hard core'' or \emph{foundation} of the instance. Roughly, this is in the sense that \emph{if} one could determine what states to assign to the qubits in the foundation, then resolving the remainder of the instance would be ``easy'' (this is not entirely accurate, hence the quotes on ``easy'').\\\vspace{-1mm}

 \noindent\emph{Algorithm sketch.} With the notion of transfer filtration in hand, our framework for attacking $k$-QSAT can be sketched at a high level as follows.
 \begin{enumerate}
    \item (Decoupling step) First, given a $k$-QSAT instance $\Pi$ on $G$ with transfer filtration of type $b$, we ``blow-up'' $\Pi$ to a larger, \emph{decoupled} instance $\Pi^+$ (Decoupling Lemma, Lemma~\ref{lem:p}).

    \item (Solve decoupled instance) The decoupled nature of $\Pi^+$ makes it ``easier'' to solve (Transfer Lemma, Lemma~\ref{prop:7}), in that any assignment to the $b$ ``foundation'' qubits can be \emph{extended} to a solution to all of $\Pi^+$. This raises the question --- how does one map the solution of $\Pi^+$ back to a solution of $\Pi$?

    \item (Map solutions back to original instance) We next give a set of ``qualifier'' constraints $\set{h_s}$ (Qualifier Lemma, Lemma~\ref{l:qualifier}) acting on \emph{only the $b$ foundation qubits}, with the following strong property: If a (product state) assignment $\ve{v}$ to the $b$ foundation qubits satisfies the constraints $\set{h_s}$, then not only can we extend $\ve{v}$ via the Transfer Lemma to a full solution for $\Pi^+$ as in Step 1 above, but we can also map this extended solution {back} to one for the \emph{original} $k$-QSAT instance $\Pi$. (For clarity, there is an additional required technical condition on the \emph{transfer functions} $g_i$ in the Transfer Lemma, which we circumvent in our main Theorem~\ref{theorem}.)
 \end{enumerate}
 Once the framework above is developed, we show that it applies to the non-trivial family of $k$-QSAT instances whose $k$-uniform hypergraph $G=(V,E)$ has a transfer filtration of type $b=\abs{V}-\abs{E}+1$. This family includes, e.g., the semi-cycle of Figure~\ref{fig:chain}, (a slight modification of) the tiling of the torus (Figure~\ref{fig:torus}), and ``fir tree'' (Figure~\ref{fig:firtree}). Our main result (Theorem~\ref{theorem}) says the following: For any $k$-QSAT instance $\Pi$ on such a $G$ and whose constraints are generic (see Section~\ref{scn:parameterized}), computing a (product state) solution to $\Pi$ reduces to solving for a root of a \emph{single univariate} (see Remark~\ref{rem:reduction}) polynomial $P$ --- \emph{any} root thereof (which always exists if the field $\K$ is algebraically closed) can then be extended back to a full solution for $\Pi$.\\\vspace{-1mm}

\noindent \emph{Advantages of framework.} The key advantage of this approach, and what makes it a parameterized algorithm, is the following --- the \emph{degree} of polynomial $P$, and hence the runtime of the algorithm, scale exponentially only in the foundation size $b$ and a ``radius'' parameter $r$ of the transfer filtration (assuming $k\in O(1)$, see Equation~(\ref{eqn:runtime}) for an explicit bound on runtime). Thus, given a transfer filtration where $b$ and $r$ are at most logarithmic, finding a (product state) solution to $k$-QSAT reduces to solving for a single root over $\complex$ for a single univariate polynomial $P$ of polynomial degree, which can be done in polynomial time~\cite{SCHONHAGE1985118,S86}. Indeed, in Section~\ref{scn:parameterized} we give a non-trivial family of $k$-uniform hypergraphs, denoted Crash (Figure~\ref{fig:crash}), for which our algorithm runs in polynomial time, whereas brute force diagonalization would require exponential time.

If, on the other hand, the foundation size $b$ and radius $r$ of the input $k$-QSAT instance are \emph{super}logarithmic, then our algorithm requires superpolynomial time. Even here, however, the framework yields a new result: it gives a \emph{constructive} proof that all $k$-QSAT instances satisfying the preconditions of Theorem~\ref{theorem} have a (product state) solution. In particular, in Corollary~\ref{cor:actually} and Theorem~\ref{thm:general}, we observe that such hypergraphs must have SDRs, and so we {constructively} reproduce the result of \cite{LLMSS10} that any $3$-QSAT instance with an SDR is satisfiable by a product state (again, assuming the additional conditions of Theorem~\ref{theorem} are met).

 Finally, although this result stems primarily from tools of projective algebraic geometry (AG), the presentation herein avoids any explicit mention of AG terminology (with the exception of defining the term ``generic'' in Section~\ref{sscn:generic}) to be accessible to readers without an AG background. For completeness, a brief overview of the ideas in AG terms is given at the end of Section~\ref{scn:parameterized}.\\

 \noindent \emph{3. A study of $3$-uniform hypergraphs with SDRs.} Given that computing ground states for $k$-QSAT instances with SDRs appears challenging, in the hope of helping guide future studies on the topic, our final contribution aims to better understand the set of all $k$-QSAT instances with SDRs, particularly in the boundary case when $\abs{E}=\abs{V}$. Unfortunately, this in itself appears to be a difficult task (if not potentially impossible, see comments about a ``finite characterization'' below). Our results here are as follows. We first give various characterizations involving intersecting families (i.e. when each pair of edges has non-empty intersection). We then study the setting of \emph{linear} hypergraphs (i.e. each pair of edges intersects in at most one vertex), which are generally more complex. (For example, the set of edge-intersection graphs of $3$-uniform linear hypergraphs is known not to have a ``finite'' characterization in terms of a finite list of forbidden induced subgraphs~\cite{NAIK1982159}.) We study ``extreme cases'' of linear hypergraphs with SDRs, such as the Fano plane and ``tiling of the torus'', and in contrast to these two examples, demonstrate a (somewhat involved) linear hypergraph we call the iCycle which also satisfies the \emph{Helly property} (which generalizes the notion of ``triangle-free'').

 A main conclusion of this study is that even with multiple additional restrictions in place (e.g. linear, Helly), the set of $3$-uniform hypergraphs with SDRs remains non-trivial. To complement these results, we show how to fairly systematically construct large linear hypergraphs with $\abs{E}=\abs{V}$ without SDRs. We hope this work highlights the potential complexity involved in dealing with even the ``simple'' case of \tqsat\ with SDRs.

 \paragraph*{Discussion.}
Regarding our parameterized algorithm, we stated earlier that our framework applies to the non-trivial family of $k$-QSAT instances whose $k$-uniform hypergraph $G=(V,E)$ has a transfer filtration of ``type $b=\abs{V}-\abs{E}+1$''. Let us elaborate on this further, and in particular stress which parts of our framework require this additional assumption on $b$, and which parts of the framework apply to arbitrary instances of $k$-QSAT.

First, our notions of transfer filtrations and blow-ups apply to \emph{any} instance of $k$-QSAT (and thus also\footnote{For the special case of $k$-SAT, note that it is not \emph{a priori} clear that having a transfer filtration with a small foundation suffices to solve the system trivially. This is because the genericity assumption on constraints, which $k$-SAT constraints do not satisfy, is required to ensure that any assignment to the foundation propagates to all bits in the instance. Thus, the brute force approach of iterating through all $2^b$ assignments to the foundation does not obviously succeed.} $k$-SAT), including $\class{QMA}_1$-complete instances. (For example, every $k$-uniform hypergraph has a trivial foundation obtained by iteratively removing vertices until the resulting set contains no edges. A key question\footnote{Whereas our algorithm's runtime benefits greatly from small foundation size, a natural question is whether a ``small'' versus ``large'' foundation generally should make $k$-QSAT easier or harder. Intuitively, indeed one expects a smaller foundation to yield ``easier'' instances of $k$-QSAT, as recall the foundation captures the hard ``core'' of the $k$-QSAT instance (in that, roughly speaking, assignments to the foundation qubits generically force assignments onto the remaining, non-foundation, qubits). In the ``Open Questions'' paragraph of Section~\ref{scn:intro}, we conjecture that computing the minimum foundation size is NP-hard of a given input hypergraph is actually NP-hard (similar to how computing common parameterized algorithmic parameters such as treewidth is NP-hard~\cite{ACP87}).} is how \emph{small} the foundation size $b$ and radius $r$ of the filtration can be chosen for a given hypergraph, as our algorithm's runtime scales exponentially in these parameters; see Equation~(\ref{eqn:runtime}).) More precisely, our techniques in Section~\ref{scn:parameterized}, up to and including the Qualifier Lemma, apply to arbitrary $k$-QSAT instances. If one also assumes that constraints are \emph{generic} (but still on an arbitrary hypergraph), then the Surjectivity Lemma (Lemma~\ref{l:surjectivity}) also holds.

The remaining question is Phase 3 of the framework (``Map solutions back to original instance'' above) --- when can \emph{local} solutions to the qualifier constraints be extended to \emph{global} solutions for the entire $k$-QSAT instance? Answering this requires solving for a common root of a set of high-degree multi-variate polynomials (hence the connection to algebraic geometry). This is where we consider the restriction to input $k$-uniform hypergraphs of transfer type $b=\abs{V}-\abs{E}+1$, which allows us to reduce the entire framework to the solution of a \emph{single univariate} (high-degree) polynomial (to which one can now apply the root-finding algorithm of Sch\"{o}nhage~\cite{S86}; see Section~\ref{sscn:runtime}). We stress that the family of $k$-QSAT instances with $b=\abs{V}-\abs{E}+1$ is non-trivial, and includes the semi-cycle (Figure~\ref{fig:chain}), tiling of the torus (Figure~\ref{fig:torus}), fir tree (Figure~\ref{fig:firtree}, and crash (Figure~\ref{fig:crash}) families of hypergraphs. In particular, for the latter family we obtain exponential speedups over brute force diagonalization in Section~\ref{sscn:runtime}. For clarity, all runtimes in this paper are rigorous worst-case runtimes.

\paragraph*{Previous work.} Quantum $k$-SAT was introduced by Bravyi \cite{B06}, who gave an efficient (quartic time) algorithm for $2$-QSAT, and showed that $4$-QSAT is QMA$_1$-complete. Subsequently, Gosset and Nagaj~\cite{GN13} showed that Quantum $3$-SAT is also QMA$_1$-complete, and independently and concurrently, Arad, Santha, Sundaram, Zhang~\cite{ASSZ15} and de Beaudrap, Gharibian~\cite{dBG15} gave linear time algorithms for $2$-QSAT. The original inspiration for this paper was the work of Laumann, L\"{a}uchli, Moessner, Scardicchio and Sondhi~\cite{LLMSS10}, which showed \emph{existence} of a product state solution for any $k$-QSAT instance with an SDR. Thus, the decision version of $k$-QSAT with SDRs is in NP and trivially efficiently solvable. However, whether the search version (i.e. compute an explicit satisfying assignment) is also polynomial-time solvable remains open. The question of whether the decision and search complexities of NP problems are the same is a longstanding open problem in complexity theory; conditional results separating the two are known (see e.g. Bellare and Goldwasser~\cite{BG94}).

In terms of approaches to $k$-QSAT, a complementary and well-studied tool which should also be mentioned is the Quantum Lov\'{a}sz Local Lemma (QLLL), introduced by Ambainis, Kempe, and Sattath~\cite{AKS12}. The QLLL non-constructively guarantees frustration-freeness of the input Hamiltonian under certain conditions. Subsequently, Schwarz, Cubitt and Verstraete~\cite{SCV13} and Sattath and Arad~\cite{SA15} provided \emph{constructive} (i.e. algorithmic) versions of QLLL under the assumption of commuting local terms~\cite{SCV13, SA15}, and more recently Gily\'{e}n and Sattath~\cite{GS17} gave an algorithm QLLL for non-commuting terms by introducing a uniformly gapped assumption. The QLLL has since been extended to Shearer's bound by Sattath, Morampudi, Laumann and Moessner~\cite{SMLM16}, in that a quantum analogue of Shearer's bound is a sufficient criterion for frustration-freeness. Most recently, He, Li, Sun, and Zhangit showed~\cite{HLSZ18} that in fact Shearer's bound is tight in the quantum setting. We remark that the approaches in these works appear distinct from those taken here; for example, the QLLL allows certain cases of $k$-QSAT to be resolved even when it is not clear the ground space contains a tensor product state, whereas our algorithms search for product state solutions (guaranteed to exist for any $k$-QSAT instance with an SDR, which recall was the motivation for this work). This also means algorithmic QLLL results such as~\cite{GS17} use \emph{quantum} algorithms to prepare a ground state, whereas our framework here is classical. As a result, our work should be viewed as a complementary approach for tackling $k$-QSAT instances, which introduces genuinely distinct tools such as parameterized algorithms and blowups from algebraic geometry.

 Finally, in terms of classical $k$-SAT, in stark contrast to $k$-QSAT, solutions to $k$-SAT instances with an SDR can be efficiently computed. As for parameterized complexity,  classically it is a well-established field of study (see, e.g.,~\cite{Downey:2012:PC:2464827} for an overview). The parameterized complexity of SAT and $\#$SAT, in particular, has been studied by a number of works, such as~\cite{Szeider2004,FISCHER2008511,SAMER201050,Ganian:2013:BAS:2594865.2594870,Saether:2015:SSM:2910557.2910559,Paulusma:2016:MCC:2989743.2989766,Gaspers2016OnSP}, which consider a variety of parameterizations including based on tree-width, modular tree-width, branch-width, clique-width, rank-width, and incidence graphs which are interval bipartite graphs. Regarding parameterized complexity of \emph{Quantum} SAT, as far as we are aware, our work is the first to initiate a ``formal'' study of the subject. However, we should be clear that existing works in Quantum Hamiltonian Complexity~\cite{O11,GHLS14} have long implicitly used ``parameterized'' ideas. For example, Markov and Shi~\cite{MS08} give a classical simulation algorithm for quantum circuits whose runtime scales polynomially in the size of the circuit, but exponentially in its treewidth. Their algorithm is based on \emph{tensor networks} (e.g.~\cite{RTPCTSL20}), whose bond dimension can be viewed as a parameter constraining the complexity of the contraction. In particular, 1d tensor networks whose bond dimension is at most polynomial in the length of the chain can be contracted efficiently (a fact used, for example, in the polynomial time algorithm of Landau, Vidick and Vazirani~\cite{LVV13} for solving the 1D gapped Local Hamiltonian~\cite{KSV02} problem). This is in stark contrast to 2D tensor networks (often denoted PEPS, for Projected Entangled Pair States), which are $\#$P-hard to contract even for constant bond dimension~\cite{SWVC07}.

 \paragraph{Open questions.} We close with a number of open questions. We showed that $k$-QSAT with at most $2$ occurences per variable is efficiently solvable (Theorem~\ref{thm:bounded}), but $3$ occurrences per variable is well-known to be at least NP-hard (for $3$-SAT, which is a special case of $3$-QSAT). Is $3$-QSAT with at most $3$ occurrences per variable similarly QMA$_1$-complete? Can ideas from classical parameterized complexity be generalized to the quantum setting? We have developed a number of tools herein for studying Quantum SAT --- can these be applied in more general settings, for example beyond the families of $k$-QSAT instances considered in Theorem~\ref{theorem}? The ``parameters'' in our results of Section~\ref{scn:parameterized} include the radius of a transfer filtration --- whether a transfer filtration (of a fixed type $b$) of \emph{minimum} radius can be computed efficiently, however, is left open for future work. Similarly, it is not clear that given $b\in \nats$, the problem of deciding whether a given hypergraph $G$ has a transfer filtration of type at most $b$ is in P. We conjecture, in fact, that this latter problem is NP-complete. Finally, the question of whether solutions to arbitrary instances of $k$-QSAT with SDRs can be computed efficiently (recall they are guaranteed to exist~\cite{LLMSS10}) remains open.

 \paragraph*{Organization.} Section~\ref{scn:preliminaries} gives basic notation and definitions. Section~\ref{scn:bounded} gives an efficient algorithm for \tqsat\ with bounded occurrence of variables, which also introduces the notions of transfer matrices (which are generalized via transfer functions in Section~\ref{scn:parameterized}). Our main result is given in Section~\ref{scn:parameterized}, and concerns a new parameterized complexity approach for solving $k$-QSAT. The algorithm and its runtime, along with the study of asymptotic speedups, are given in Section~\ref{sscn:runtime}. Finally, Section~\ref{scn:hypergraphs} conducts a structural graph theoretic study of hypergraphs with SDRs.

\section{Preliminaries}\label{scn:preliminaries}

\paragraph*{Notation.} For complex Euclidean space $\spa{X}$, $\lin(\spa{X})$ denotes the set of linear operators mapping $\spa{X}$ to itself. For unit vector $\ket{\psi}\in\complex^2$, the unique orthogonal unit vector (up to phase) is denoted $\ket{\psi^\perp}$, i.e. $\braket{\psi}{\psi^\perp}=0$.

We now give some definitions from graph theory, some of which are used primarily in Section~\ref{scn:hypergraphs}.

\begin{definition}[Hypergraph]
A $\emph{hypergraph}$ is a pair $G=(V,E)$ of a set $V$ (vertices), and a family $E$ (edges) of subsets of $V$. If each vertex has degree $d$, we say $G$ is $d$-regular. Alternatively, when convenient we use $V(G)$ and $E(G)$ to denote the vertex and edge sets of $G$, respectively. A \emph{simple} hypergraph has no repeated edges, i.e. if $e_i \subseteq e_j$, then $i=j$. We say $G$ is $k$-uniform if all edges have size $k$.
\end{definition}

\begin{definition}[Chain~\cite{KATONA20161884}, Figure~\ref{fig:chain}]\label{definition:chain}
A $k$-uniform hypergraph $G=(V,E)$ is a \emph{chain} if there exists a sequence $(v_1,v_2,...,v_l)\in V^l$ for $l\geq n$ such that (1) the sequence contains all elements of $V$ at least once, (2) $v_1\neq v_l$, and (3) $E=\bigcup_{1\leq i \leq l-k+1}e_i$ for distinct edges $e_i=\set{v_i,v_{i+1},...,v_{i+k-1}}$. The length of the chain $G$ is $m=l-k+1$.
\end{definition}
\begin{figure}[t]
\begin{center}
\resizebox{370pt}{!}{%
\begin{tikzpicture}

    \coordinate (v1) at (-4,0);
    \coordinate (v2) at (-2,0) {};
    \coordinate (v3) at (0,0);
    \coordinate (v4) at (2,0) {};
    \coordinate (v5) at (4,0) {};

    \foreach \v in {v1,v2,v3,v4,v5}{
      \node [circle, minimum size=0.4cm, line width=0pt] (\v') at (\v) {};
    }

    \filldraw [draw=black, fill=cyan, opacity=0.2, line width=0.5pt]
		plot[smooth, tension=1] coordinates { (v1'.120) (-2,0.5) (v3'.60)} arc (60:-60:0.2)--plot[smooth, tension=1] coordinates { (v3'.-60) (-2,-0.5) (v1'.-120)} arc(-120:-240:0.2);
    \filldraw [draw=black, fill=red, opacity=0.2, line width=0.5pt]
		plot[smooth, tension=1] coordinates { (v2'.120) (0,0.5) (v4'.60)} arc (60:-60:0.2)--plot[smooth, tension=1] coordinates { (v4'.-60) (0,-0.5) (v2'.-120)} arc(-120:-240:0.2);
    \filldraw [draw=black, fill=green, opacity=0.2, line width=0.5pt]
		plot[smooth, tension=1] coordinates { (v3'.120) (2,0.5) (v5'.60)} arc (60:-60:0.2)--plot[smooth, tension=1] coordinates { (v5'.-60) (2,-0.5) (v3'.-120)} arc(-120:-240:0.2);
  	
    \foreach \l in {1,...,5}{
      \filldraw [black] (v\l) circle (2pt) node [inner sep=5pt, label=below:$v_{\l}$] {};
    }

\end{tikzpicture}
\hspace{5mm}
\begin{tikzpicture}
	    \coordinate (v1) at (-4,0);
    \coordinate (v2) at (-2,0) {};
    \coordinate (v3) at (0,0);
    \coordinate (v4) at (2,0) {};
    \coordinate (v5) at (4,0) {};

    \foreach \v in {v1,v2,v3,v4,v5}{
      \node [circle, minimum size=0.4cm, line width=0pt] (\v') at (\v) {};
    }

    \filldraw [draw=black, fill=cyan, opacity=0.2, line width=0.5pt]
		plot[smooth, tension=1] coordinates { (v1'.120) (-2,0.5) (v3'.60)} arc (60:-60:0.2)--plot[smooth, tension=1] coordinates { (v3'.-60) (-2,-0.5) (v1'.-120)} arc(-120:-240:0.2);
    \filldraw [draw=black, fill=red, opacity=0.2, line width=0.5pt]
		plot[smooth, tension=1] coordinates { (v2'.120) (0,0.5) (v4'.60)} arc (60:-60:0.2)--plot[smooth, tension=1] coordinates { (v4'.-60) (0,-0.5) (v2'.-120)} arc(-120:-240:0.2);
    \filldraw [draw=black, fill=green, opacity=0.2, line width=0.5pt]
		plot[smooth, tension=1] coordinates { (v3'.120) (2,0.5) (v5'.60)} arc (60:-60:0.2)--plot[smooth, tension=1] coordinates { (v5'.-60) (2,-0.5) (v3'.-120)} arc(-120:-240:0.2);
  	\filldraw [draw=black, fill=yellow, opacity=0.2, line width=0.5pt]
		plot [smooth, tension=2] coordinates {(v1'.120) (2,1.5) (v5'.0)}
	arc(0: -90:0.2) -- (v4'.270) arc(270:120:0.2)
	plot [smooth, tension=2] coordinates {(v4'.120) (1,1) (v1'.-60)} arc(-60:-240:0.2);
	
	\filldraw [draw=black, fill=blue, opacity=0.2]
	plot [smooth, tension=2] coordinates {(v1'.270) (3,-1.5) (v5'.0)}
	arc(0:180:0.2)
	plot [smooth, tension=2] coordinates {(v5'.180) (2,-1) (v2'.0)}
	arc(0:90:0.2) -- (v1'.90) arc(90:270:0.2)
	;

    \foreach \l in {1,...,5}{
      \filldraw [black] (v\l) circle (2pt) node [inner sep=5pt, label=below:$v_{\l}$] {};
    }
\end{tikzpicture}
}
\\
\vspace{5mm}
\resizebox{210pt}{!}{%
\begin{tikzpicture}
	
    \coordinate (v1) at (-4,0);
    \coordinate (v2) at (-2,0) {};
    \coordinate (v3) at (0,0);
    \coordinate (v4) at (2,0) {};
    \coordinate (v5) at (4,0) {};

    \foreach \v in {v1,v2,v3,v4,v5}{
      \node [circle, minimum size=0.4cm, line width=0pt] (\v') at (\v) {};
    }

    \filldraw [draw=black, fill=cyan, opacity=0.2, line width=0.5pt]
		plot[smooth, tension=1] coordinates { (v1'.120) (-2,0.5) (v3'.60)} arc (60:-60:0.2)--plot[smooth, tension=1] coordinates { (v3'.-60) (-2,-0.5) (v1'.-120)} arc(-120:-240:0.2);
    \filldraw [draw=black, fill=red, opacity=0.2, line width=0.5pt]
		plot[smooth, tension=1] coordinates { (v2'.120) (0,0.5) (v4'.60)} arc (60:-60:0.2)--plot[smooth, tension=1] coordinates { (v4'.-60) (0,-0.5) (v2'.-120)} arc(-120:-240:0.2);
    \filldraw [draw=black, fill=green, opacity=0.2, line width=0.5pt]
		plot[smooth, tension=1] coordinates { (v3'.120) (2,0.5) (v5'.60)} arc (60:-60:0.2)--plot[smooth, tension=1] coordinates { (v5'.-60) (2,-0.5) (v3'.-120)} arc(-120:-240:0.2);
  	\filldraw [draw=black, fill=yellow, opacity=0.2, line width=0.5pt]
		plot [smooth, tension=2] coordinates {(v1'.120) (2,1.5) (v5'.0)}
	arc(0: -90:0.2) -- (v4'.270) arc(270:120:0.2)
	plot [smooth, tension=2] coordinates {(v4'.120) (1,1) (v1'.-60)} arc(-60:-240:0.2);

    \foreach \l in {1,...,5}{
      \filldraw [black] (v\l) circle (2pt) node [inner sep=5pt, label=below:$v_{\l}$] {};
    }

\end{tikzpicture}
}
\end{center}
\caption{(Top left) A chain. (Top right) A cycle. (Bottom middle) A semicycle. Note that, in contrast to the cycle, the semicycle is missing the blue edge $\set{v_1,v_2,v_5}$.}
\label{fig:chain}
\end{figure}

\begin{definition}[Cycle~\cite{KATONA20161884}, Figure~\ref{fig:chain}]\label{definition:cycle}
A $k$-uniform hypergraph $G=(V,E)$ is a \emph{cycle} if there exists a sequence $(v_1,v_2,...,v_l)\in V^l$ for $l\geq n$ such that (1) the sequence contains all elements of $V$ at least once, (2) for all $1\leq i \leq l$, $e_i=\set{v_i,v_{i+1},...,v_{i+k-1}}$ are distinct edges in $E$, where indices are understood modularly. The length of the cycle $G$ is $m=l$.	
\end{definition}
\begin{definition}[Semicycle~\cite{KATONA20161884}, Figure~\ref{fig:chain}]\label{definition:semicyle}
A $k$-uniform hypergraph $G=(V,E)$ is a \emph{semicycle} if there exists a sequence $(v_1,v_2,...,v_l)\in V^l$ for $l\geq n$ such that (1) the sequence contains all elements of $V$ at least once, (2) $v_1=v_l$, and (3) for all $1\leq i \leq l-k+1$, $e_i=\set{v_i,v_{i+1},...,v_{i+k-1}}$ are distinct edges of $G$. The length of the semicycle is $m=l-k+1$.	
\end{definition}
\begin{definition}[Tight star~\cite{KATONA20161884}, Figure~\ref{fig:hyperstar}]
A $k$-uniform hypergraph $G=(V,E)$ is a \emph{tight star} if there exists a set of $k-1$ vertices $S$, such that the intersection of any pair of edges is precisely $S$.
\end{definition}

\begin{definition}[$t$-stacked set]
Let $G=(V,E)$ be a $k$-uniform hypergraph. A subset $S\subseteq E$ of $t$ edges is called a \emph{$t$-stacked set} if every edge in $S$ is incident to the same $k$ vertices in $V$, i.e.\ $\abs{\bigcup_{e\in S}e}=k$.
\end{definition}

\section{Quantum SAT with bounded occurrence of variables}\label{scn:bounded}

In this section, we study $k$-QSAT when each qubit occurs in at most two constraints. For this, we first recall tools from the study of Quantum $2$-SAT~\cite{B06,LMSS10,dBG15}. Recall that throughout this paper, each clause is assumed to be rank $1$ (in general, however, one can ``stack'' multiple rank $1$ clauses on a set of vertices to simulate a higher rank clause).

\paragraph*{Transfer matrices, chain reactions, and cycle matrices.} For any rank-$1$ constraint $\Pi_i=\ketbra{\psi}{\psi}\in \lin((\complex^2)^{\otimes k})$, consider Schmidt decomposition $\ket{\psi}=\alpha\ket{a_0}\ket{b_0}+\beta\ket{a_1}\ket{b_1}$, where $\ket{a_i}\in(\complex^2)^{\otimes (k-1)}$ lives in the Hilbert space of the first $k-1$ qubits and $\ket{b_i}\in\complex^2$ the last qubit. Then, the \emph{transfer} matrix $T_{\psi}:(\complex^2)^{\otimes k-1}\mapsto\complex^2$ is given by $T_{\psi} = \beta \ket{b_0}\bra{a_1} - \alpha \ket{b_1}\bra{a_0}$. In words, given any assignment $\ket{\phi}$ to the first $k-1$ qubits, if $T_{\psi}\ket{\phi}\in\complex^2$ is non-zero, then it is the \emph{unique} assignment to qubit $k$ (given $\ket{\phi}$ on qubits $1$ to $k-1$) which satisfies $\Pi_i$.

In the special case of $k=2$, transfer matrices are particularly useful. Consider first a $2$-QSAT interaction graph (which is a $2$-uniform hypergraph, or just a graph) $G=(V,E)$ which is a path, i.e. a sequence of edges $e_1=(v_1,v_2),e_2=(v_2,v_3),\ldots,e_m=(v_{m-1},v_m)$ for distinct $v_i\in V$, and where edge $e_i$ corresponds to constraint $\ket{\psi_i}$. Then, any assignment $\ket{\phi}\in\complex^2$ to qubit $1$ \emph{induces a chain reaction (CR)} in $G$, meaning qubit $2$ is assigned $T_{\psi_1}\ket{\phi}$, qubit $3$ is assigned $T_{\psi_2}T_{\psi_1}\ket{\phi}$, and so forth. If this CR terminates before all qubits labelled by $V$ receive an assignment, which occurs if $T_{\psi_i}\ket{\phi'}=0$ for some $i$, this means that constraint $i$ (acting on qubits $i$ and $i+1$) is satisfied by the assignment $\ket{\phi'}$ to qubit $i$ alone, and no residual constraint is imposed on qubit $i+1$. Thus, the graph $G$ is reduced to a path $e_{i+1},\ldots, e_m$. In this case, we say the CR is \emph{broken}. Note that if $G$ is a path, then it is a satisfiable $2$-QSAT instance with a product state solution.

Finally, consider a $2$-QSAT instance whose interaction graph $G$ is a cycle $C=(v_1,\ldots, v_{m+1})$ with $m$. Then, a CR induced on vertex $v_1$ with any assignment $\ket{\psi}\in\complex^2$ will in general propagate around the cycle and impose a consistency constraint on $v_1$. Formally, denote the product $T_C=T_{\psi_{m}}\cdots T_{\psi_1}\in\lin(\complex^2)$ as the \emph{cycle matrix} of $C$. Then, if the cycle matrix is not the zero matrix, it be shown that the satisfying assignments for the cycle are precisely the eigenvectors of $T_C$. (If $T_C=0$, any assignment on $v_1$ will only propagate partially around the cycle, thus decoupling the cycle into two paths.) Thus, if $G$ is a cycle, then it also has a product state solution.

In this section, when we refer to ``solving the path or cycle'', we mean applying the transfer matrix techniques above to efficiently compute a product state solution to the path or cycle.

\paragraph*{$k$-QSAT with bounded occurence of variables.} We now restate and prove Theorem~\ref{thm:bounded}. Subsequently, we demonstrate the algorithm on an example (Figure~\ref{fig:example}) and discuss its applicability more generally. Section~\ref{sscn:SDR} shows the connection between Theorem~\ref{thm:bounded} and $k$-QSAT with SDRs.

\begin{reptheorem}{thm:bounded}
        There exists a polynomial time classical algorithm which, given an instance $\Pi$ of $k$-QSAT in which each variable occurs in at most two clauses, outputs a satisfying product state if $\Pi$ is satisfiable, and otherwise rejects. Moreover, the algorithm works for clauses ranging from $1$-local to $k$-local in size.
\end{reptheorem}
\begin{proof}
We begin by setting terminology. Let $\Pi$ be an instance of $k$-QSAT with $k$-uniform interaction graph $G=(V,E)$. For any clause $c$, let $Q_c$ denote the set of qubits acted on $c$, i.e. $Q_c$ is the edge in $G$ representing $c$. We say $c$ is \emph{stacked} if $Q_c$ is contained within another edge/clause $Q_{c'}$, i.e. if there exists $c'\neq c$ such that $Q_c\subseteq Q_{c'}$. For a qubit $v$, we use shorthand $\ket{v}$ to denote the current assignment from $\complex^2$ to $v$. For a clause $c$, $\ket{c}$ denotes the bad subspace of $c$, i.e. clause $c$ is given by rank-$1$ projector $I-\ketbra{c}{c}$. The set of clauses vertex $v$ appears in is denoted $C_v$. For any assignment $\ket{v}$, we introduce shorthand $S_{\ket{v}}=\set{\braket{v}{c}\mid c\in C_v}\subseteq\bigcup_{i=0}^{k-1}\complex^{2^i}$, where recall $c$ can be a clause on $1,\ldots,k$ qubits, and we implicitly assume $\bra{v}$ acts as the identity on the qubits of $c$ which are not $v$. For example, if $c$ acts on qubits $v,w,x$, then $\braket{v}{c}=(\bra{v}\otimes I_{w,x})\ket{c}\in\complex^4$ is the residual constraint on qubits $w,x$, given assignment $\ket{v}$ to $v$. Thus, $S_{\ket{v}}$ is the set of constraints we obtain by taking the clauses in $C_v$, and projecting down qubit $v$ in each clause onto assignment $\ket{v}$. As a result, the clauses in $S_{\ket{v}}$ no longer act on $v$. The algorithm we design will satisfy that the only possible element of $\complex$ in $S_{\ket{v}}$ is $0$, which can only be obtained by projecting a constraint $\ket{c}\in\complex^2$ onto its orthogonal complement to satisfy it; thus, we can assume without loss of generality that $S_{\ket{v}}\subseteq\set{\braket{v}{c}\mid c\in C_v}\subseteq\bigcup_{i=1}^{k-1}\complex^{2^i}$. Finally, we say two $1$-local clauses $\ket{c},\ket{c'}\in\complex^2$ \emph{conflict} if $\ket{c}$ and $\ket{c'}$ are linearly independent (i.e. $\ketbra{c}{c}+\ketbra{c'}{c'}$ has an empty null space, meaning there is no satisfying assignment).

The algorithm proceeds as follows. Let $\Pi$ satisfy the conditions of our claim. For clarity, any time a CR on a path is broken by a transfer matrix $T_{\psi}$ on edge $(u,v)$, i.e. $T_{\psi}\ket{u}=0$, we shall implicitly assume we continue by choosing assignment $\ket{0}$ on $v$ to induce a new CR and continue solving the path. (This is important for the correctness analysis later in which Step 3 must not create a $1$-local constraint.)

\paragraph*{Statement of algorithm.} The intuitive idea behind the algorithm is to ``partially reduce'' $\Pi$ to a $2$-QSAT instance, and use the transfer matrix techniques outlined above to solve this segment. Combining this with a new notion of fusing CRs, the technique can be applied iteratively to reduce $k$-local constraints to $2$-local ones until the entire instance is solved.  More informally, the algorithm exploits the extra degrees of freedom which arise when each qubit appears in at most two clauses. In particular, by considering clauses in the ``right'' order below, we are able to (iteratively) greedily satisfy a given clause locally, and propagate the implications of such an assignment via transfer matrix techniques. \\

\noindent\emph{Algorithm A.}
\begin{enumerate}
    \item While there exists a $1$-local constraint $c$ acting on some qubit $v$:
        \begin{enumerate}
            \item If $c$ conflicts with another $1$-local clause on $v$, reject. Else, set $\ket{v}=\ket{c^\perp}\in\complex^2$. Set\footnote{Setting $C_v=S_{\ket{v}}$ means we update all clauses $c$ acting on $v$ by projecting qubit $v$ of $c$ onto $\ket{v}$. Note that each $v$ does not keep a local copy of $C_v$, i.e. all clauses are referenced in a global fashion.} $C_v=S_{\ket{v}}$, and remove $v$ from $\Pi$.
        \end{enumerate}
    \item While there exists a qubit $v$ appearing only in clauses of size at least $ k'\geq 3$:
        \begin{enumerate}
            \item Set $\ket{v}=\ket{0}$ and $C_v=S_{\ket{v}}$. Remove $v$ from $\Pi$.
        \end{enumerate}
    \item While there exists a $2$-local clause:

    \begin{enumerate}
        \item If there exists a {stacked} $2$-local clause $c$, i.e. $c'\neq c$ such that $Q_c\subseteq Q_{c'}$:
    \begin{enumerate}
        \item If $Q_c= Q_{c'}$, remove the qubits $c$ acts on, and set their values to satisfy $c$ and $c'$.
        \item Else, $Q_c\subset Q_{c'}$. Thus, $c'$ is $k'$-local for $3\leq k'\leq k$. Set the values of the qubits in $Q_c$ so as to satisfy $c$. This collapses $c'$ to a $(k'-2)$-local constraint on qubits $Q_{c'}\setminus Q_c$.
            \begin{enumerate}
                \item  If $k'-2=1$, then $c'$ has been collapsed to a $1$-local constraint on some vertex $v\in Q_{c'}\setminus Q_c$, creating a path rooted at $v$. Set $v$ so as to satisfy $c'$, and use a CR to solve the resulting path until either the path ends, or a $k''$-local constraint is hit for $3\leq k''\leq k'$. In the latter case (Figure~\ref{fig:path}, Left), the $k''$-local constraint is reduced to a $(k''-1)$-local constraint and we return to the beginning of Step 3.
            \end{enumerate}
        \end{enumerate}
        \item Else, pick an arbitrary $2$-local clause $c$ acting on variables $v_1$ and $v_2$. Then, $v_1$ ($v_2$) is the start of a path $h_1$ ($h_2$) (e.g., Figure~\ref{fig:path}, Middle).
            \begin{enumerate}
                \item If the path forms a cycle from $v_1$ to $v_2$, use the cycle matrix to solve the cycle. Remove the corresponding qubits and clauses from $\Pi$.
                \item Else, set $v_1$ and $v_2$ so as to satisfy $c$. Solve the resulting paths $h_1$ ($h_2$) until a $k'$-local ($k''$-local) constraint $l_1$ ($l_2$) is hit for $3\leq k'\leq k$ ($3\leq k''\leq k$). If both $l_1$ and $l_2$ are found:
                \begin{enumerate}
                    \item If $l_1=l_2$ (i.e. $k'=k''$) and $k'-2=1$, then fuse the paths $h_1$ and $h_2$ into a new path beginning at the qubit in $l_1$ which is not in $h_1$ or $h_2$ (e.g., Figure~\ref{fig:path}, Right). Iteratively solve the resulting path until a $k'$-local constraint is hit for $3\leq k'\leq k$.
                \end{enumerate}
            \end{enumerate}
    \end{enumerate}
    \item If any qubits are unassigned, set their values to $\ket{0}$.
\end{enumerate}
\begin{figure}[t]
	\begin{center}
\resizebox{400pt}{!}{%
\begin{tikzpicture}

    \coordinate (v1) at (-4,0);
    \coordinate (v2) at (-2,0) {};
    \coordinate (v3) at (0,0);
    \coordinate (v4) at (2,0) {};
    \coordinate (v5) at (4,0) {};

    \foreach \v in {v1,v2,v3,v4,v5}{
      \node [circle, minimum size=0.4cm, line width=0pt] (\v') at (\v) {};
    }

    \filldraw [draw=black, fill=cyan, opacity=0.2, line width=0.5pt]
		plot[smooth, tension=1] coordinates { (v1'.120) (-3,0.5) (v2'.60)} arc (60:-60:0.2)--plot[smooth, tension=1] coordinates { (v2'.-60) (-3,-0.5) (v1'.-120)} arc(-120:-240:0.2);
    \filldraw [draw=black, fill=red, opacity=0.2, line width=0.5pt]
		plot[smooth, tension=1] coordinates { (v2'.120) (-1,0.5) (v3'.60)} arc (60:-60:0.2)--plot[smooth, tension=1] coordinates { (v3'.-60) (-1,-0.5) (v2'.-120)} arc(-120:-240:0.2);
    \filldraw [draw=black, fill=green, opacity=0.2, line width=0.5pt]
		plot[smooth, tension=1] coordinates { (v3'.120) (2,0.5) (v5'.60)} arc (60:-60:0.2)--plot[smooth, tension=1] coordinates { (v5'.-60) (2,-0.5) (v3'.-120)} arc(-120:-240:0.2);
  	
    \foreach \l in {1,...,5}{
      \filldraw [black] (v\l) circle (2pt) node [inner sep=5pt, label=below:$v_{\l}$] {};
    }
\end{tikzpicture}
\hspace{10mm}
\begin{tikzpicture}

    \coordinate (v1) at (-4,0);
    \coordinate (v2) at (-2,0) {};
    \coordinate (v3) at (0,0);
    \coordinate (v4) at (2,0) {};

    \foreach \v in {v1,v2,v3,v4}{
      \node [circle, minimum size=0.4cm, line width=0pt] (\v') at (\v) {};
    }

    \filldraw [draw=black, fill=cyan, opacity=0.2, line width=0.5pt]
		plot[smooth, tension=1] coordinates { (v1'.120) (-3,0.5) (v2'.60)} arc (60:-60:0.2)--plot[smooth, tension=1] coordinates { (v2'.-60) (-3,-0.5) (v1'.-120)} arc(-120:-240:0.2);
    \filldraw [draw=black, fill=red, opacity=0.2, line width=0.5pt]
		plot[smooth, tension=1] coordinates { (v2'.120) (-1,0.5) (v3'.60)} arc (60:-60:0.2)--plot[smooth, tension=1] coordinates { (v3'.-60) (-1,-0.5) (v2'.-120)} arc(-120:-240:0.2);
    \filldraw [draw=black, fill=green, opacity=0.2, line width=0.5pt]
		plot[smooth, tension=1] coordinates { (v3'.120) (1,0.5) (v4'.60)} arc (60:-60:0.2)--plot[smooth, tension=1] coordinates { (v4'.-60) (1,-0.5) (v3'.-120)} arc(-120:-240:0.2);
  	
    \foreach \l in {1,...,4}{
      \filldraw [black] (v\l) circle (2pt) node [inner sep=5pt, label=below:$v_{\l}$] {};
    }
\end{tikzpicture}

\hspace{10mm}
\begin{tikzpicture}

	\coordinate (v0) at (0,0) {};
    \coordinate (v1) at (-2,1){};
    \coordinate (v2) at (0,1) {};
    \coordinate (v3) at (2,1){};
    \coordinate (v4) at (4,0) {};
    \coordinate (v5) at (2,-1) {};
    \coordinate (v6) at (0,-1) {};
    \coordinate (v7) at (-2,-1) {};

    \foreach \v in {v1,v2,v3,v4,v5,v6,v7}{
      \node [circle, minimum size=0.4cm, line width=0pt] (\v') at (\v) {};
    }

    \filldraw [draw=black, fill=cyan, opacity=0.2, line width=0.5pt]
		plot[smooth, tension=1] coordinates { (v1'.120) (-1,1.5) (v2'.60)} arc (60:-60:0.2)--plot[smooth, tension=1] coordinates { (v2'.-60) (-1,0.5) (v1'.-120)} arc(-120:-240:0.2);
    \filldraw [draw=black, fill=red, opacity=0.2, line width=0.5pt]
		plot[smooth, tension=1] coordinates { (v2'.120) (1,1.5) (v3'.60)} arc (60:-60:0.2)--plot[smooth, tension=1] coordinates { (v3'.-60) (1,0.5) (v2'.-120)} arc(-120:-240:0.2);
    \filldraw [draw=black, fill=blue, opacity=0.2, line width=0.5pt]
		plot[smooth, tension=1] coordinates { (v7'.120) (-1,-0.5) (v6'.60)} arc (60:-60:0.2)--plot[smooth, tension=1] coordinates { (v6'.-60) (-1,-1.5) (v7'.-120)} arc(-120:-240:0.2);
  	\filldraw [draw=black, fill=yellow, opacity=0.2, line width=0.5pt]
		plot[smooth, tension=1] coordinates { (v6'.120) (1,-0.5) (v5'.60)} arc (60:-60:0.2)--plot[smooth, tension=1] coordinates { (v5'.-60) (1,-1.5) (v6'.-120)} arc(-120:-240:0.2);
    \filldraw [draw=black, fill=green, opacity=0.2]
		(v3'.45) -- (v4'.45) arc (45:-45:0.2)
		-- (v5'.-45) arc (-45:-180:0.2)
		-- (v3'.180) arc (180:45:0.2) -- cycle;
    \foreach \l in {1,...,7}{
      \filldraw [black] (v\l) circle (2pt) node [inner sep=5pt, label=below:$v_{\l}$] {};
    }
\end{tikzpicture}
}
\end{center}

\caption{(Left) Solving the path rooted at $v_1$ via CR allows us to satisfy clauses $(v_1,v_2)$ and $(v_2,v_3)$, and since $v_3$ receives an assignment in this process, the clause $(v_3,v_4,v_5)$ is projected onto a $2$-local residual clause on $(v_4,v_5)$. The CR then stops. (Middle) Letting $c$ denote the clause on $(v_2,v_3)$, $v_2$ is the start of a path $(v_2,v_1,\ldots)$ to the left, and $v_3$ is the start of a path $(v_3,v_4,\ldots)$ to the right. (Right) Inducing CRs on $v_1$ and $v_7$, we eventually assign values to $v_3$ and $v_5$. This collapses the $3$-local clause on $(v_3,v_4,v_5)$ into a $1$-local clause on $v_4$ with a unique satisfying assignment, which in turn induces a new CR starting at $v_4$. Thus, the two CR's are ``fused'' into one CR through the $3$-local clause.}
\label{fig:path}
\end{figure}
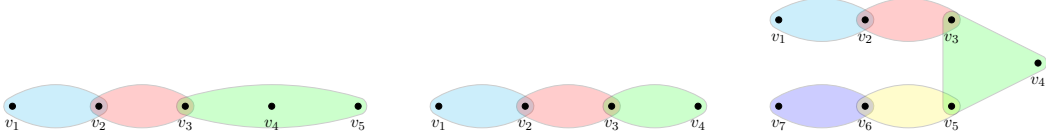

\noindent An illustration of algorithm A on an example input is given after this proof (see also Figure~\ref{fig:example}). It is clear that the algorithm runs in polynomial time. We now prove correctness.\\

\noindent\textbf{Correctness.} In Step $1$, any qubit $v$ acted on by a $1$-local constraint $c$ has only one possible satisfying assignment $\ket{c^\perp}$ (up to phase). Thus, if there are conflicting $1$-local clauses acting on $v$, we must reject; else, we must set $v$ to $\ket{c^\perp}$, and subsequently simplify all remaining clauses acting on $v$ (i.e. map $C_v$ to $S_{\ket{v}}$). Note that since $v$ is removed at this point, we can never have a conflict on it in a future iteration of the algorithm.

If we reach Step 2, we claim that $\Pi$ is satisfiable. To show this, note first that each time Step $3$ is run, if there exists a $2$-local clause, then at least one new clause is satisfied and subsequently removed from $\Pi$. Thus, in order to prove that $\Pi$ is satisfiable, it suffices to show that the following loop invariant holds.\\

\noindent\emph{Loop Invariant:} Before each execution of Step $3$, either $\Pi$ contains some $2$-local clause and no $1$-local clauses, or $\Pi$ contains no clauses. \\

\noindent Two notes are in order here. First, the invariant's constraint on the absence of $1$-local clauses is necessary, as otherwise we are not guaranteed that paths can be iteratively solved as in Step 3(a)(ii)(A). Second, the invariant implies that once the algorithm reaches Step $4$, all clauses must have been satisfied. Thus, all unused qubits at that point can be set arbitrarily.

We now show that the loop invariant holds throughout the execution of the algorithm. First, since after Step $1$, $\Pi$ contained no $1$-local contraints, and since in Step $2$ we only reduce $k'$-local clauses to $k''$-local ones for $k''\geq 2$, it follows that the invariant holds the first time Step $3$ is run.

We next show that if the invariant holds just before Step $3$ is run, then it also holds after Step $3$ is run. We divide the analysis into cases depending on which line of Step 3 is run.
\begin{itemize}
    \item (Step 3(a)(i)) Since each qubit appears in at most $2$ clauses by assumption, it holds that $c$ and $c'$ must be disjoint from all other clauses in $\Pi$. Informally, the joint satisfiability of $c$ and $c'$ is hence independent of the satisfiability of the rest of the instance. In the language of our loop invariant, the removal of $c$ and $c'$ does not create any $1$-local constraints. Moreover, if subsequently $\Pi$ is non-empty, then it is not the case that all remaining clauses are $k'$-local for $3\leq k'\leq k$. This is because otherwise the set of clauses disjoint from $c$ and $c'$ must contain a variable involved only in clauses of size at least $k'\geq 3$, contradicting Step 2. Thus, the loop invariant holds. (Aside: If $c$ and $c'$ act on vertices $u$ and $v$, a satisfying assignment to $c$ and $c'$ can be computed by thinking of $(u,v)$ and $(v,u)$ as a cycle, and subsequently solving the cycle using the cycle matrix technique.)
    \item (Step 3(a)(ii)) Once $c$ and the variables it acts on are removed, we have a $(k'-2)$-local constraint on $c'$. Thus, if $k'-2=1$, this induces a CR on the incident path of $2$-local constraints on $c'$ (this path may have length $0$). Note that this path can never loop back to $c'$, nor can it contain a cycle, as otherwise a variable would appear in more than $2$ clauses. Thus, either the path consists solely of $2$-local constraints, which are all satisfied via the CR (recall we assume that any broken CR's are continued automatically by assignment $\ket{0}$ to the next vertex in the path to induce a new CR), or eventually we hit a $k''$-local constraint for $k''\geq 3$, which we now collapse to a $(k''-2)$-local constraint; call the latter $\phi$. Thus, since the CR removes any possible $1$-local constraint on $c'$ and creates no further $1$-local constraints, no $1$-local constraints exist after Step 3.

        Also, on the existence of a $2$-local clause after Step $3$ (assuming clauses remain): If the path above consisted solely of $2$-local constraints, then the CR removed a connected component of the interaction graph, in which case the remaining connected components must contain a $2$-local clause (otherwise, we again contradict Step 2). If the path encountered a $k''$-local clause for $k''=3$, on the other hand, the CR itself created a $2$-local clause. Similarly, if $k''\geq 4$, then $\phi$ is at least $3$-local, and we claim that at least one of the vertices of $\phi$ must be incident on a $2$-local edge. Indeed, otherwise we again contradict Step 2. Thus, the loop invariant holds.

    \item (Step 3(b)(i)) Since any variable occurs in at most $2$ clauses, the cycle must be disjoint from all other clauses in $\Pi$. Thus, its removal does not create a $1$-local constraint. That there must exist a $2$-local constraint if $\Pi$ is non-empty now follows by similar arguments as in Steps 3(a)(i) and 3(a)(ii).

    \item (Step 3(b)(ii)) There are $2$ possible scenarios: Both $h_1$ and $h_2$ are disjoint paths, or they intersect on a $k'$-local clause. The first case is analogous to Step 3(a)(ii)'s analysis. In the second case, suppose $h_1$ and $h_2$ intersect on clause $c$. Then, since the paths do not form a cycle (otherwise, we would be in Step 3(b)(i)), $c$ must be $k'$-local for $k'\geq 3$. If $k'-2=1$, let $v$ denote the qubit in $c$ not acted on by $h_1$ or $h_2$. Then, solving the paths $h_1$ and $h_2$ collapses $c$ into a $1$-local constraint on $v$. But this, in turn, creates a new path rooted at $v$, whose analysis follows from Step 3(a)(ii) above. If $k'-2> 1$, the CR stops once $c$ is collapsed to a new clause of size at least $2$. That a $2$-local clause still exists at this point (assuming the remaining instance is non-empty) follows by the argument of Step 3(a)(ii).
\end{itemize}
This concludes the proof.
\end{proof}

\begin{figure}[t]
	\begin{center}
\resizebox{400pt}{!}{%
\begin{tikzpicture}

    \coordinate (v1) at (-1,3);
    \coordinate (v2) at (1,3) {};
    \coordinate (v3) at (4,0);
    \coordinate (v4) at (1,-3) {};
    \coordinate (v5) at (-1,-3) {};
    \coordinate (v6) at (-4,0) {};
    \coordinate (v7) at (0,0) {};
    \coordinate (v8) at (-1,1) {};
    \coordinate (v9) at (-1,-1) {};
    \coordinate (v14) at (0,3) {};
    \coordinate (v15) at (1,2) {};
    \coordinate (v12) at (2,2) {};
    \coordinate (v13) at (3,1) {};
    \coordinate (v10) at (1,1) {};
    \coordinate (v11) at (1,-1) {};
    \coordinate (v16) at (3,-1) {};
    \coordinate (v17) at (3,0) {};
    \coordinate (v18) at (2,-2) {};
    \coordinate (v19) at (1,-2) {};
    \coordinate (v20) at (0,-3) {};
    \coordinate (v21) at (-1,-2) {};
    \coordinate (v22) at (-2,-2) {};
    \coordinate (v23) at (-3,-1) {};
    \coordinate (v24) at (-3,0) {};
    \coordinate (v25) at (-3,1) {};
    \coordinate (v26) at (-2,2) {};
    \coordinate (v27) at (-1,2) {};

    \foreach \v in {v1,v2,v3,v4,v5,v6,v7,v8,v9,v10,v11,v12,v13,v14,v15,v16,v17,v18,v19,v20,v21,v22,v23,v24,v25,v26,v27}{
      \node [circle, minimum size=0.4cm, line width=0pt] (\v') at (\v) {};
    }

   \filldraw [draw=black, fill=green, opacity=0.2]
		(v1'.90) -- (v14'.90) arc (90:-45:0.2cm)
		-- (v27'.-45) arc (-45:-180:0.2)
		-- (v1'.180) arc (180:90:0.2) -- cycle;
   \filldraw [draw=black, fill=green, opacity=0.2]
		(v14'.90) -- (v2'.90) arc (90:0:0.2cm)
		-- (v15'.0) arc (0:-135:0.2)
		-- (v14'.220) arc (220:90:0.2) -- cycle;
   \filldraw [draw=black, fill=green, opacity=0.2]
		(v2'.180) -- (v15'.180) arc (180:270:0.2cm)
		-- (v12'.-90) arc (-90:45:0.2)
		-- (v2'.45) arc (45:180:0.2) -- cycle;
    \filldraw [draw=black, fill=green, opacity=0.2]
		(v1'.0) -- (v27'.0) arc (0:-90:0.2cm)
		-- (v26'.-90) arc (-90:-220:0.2)
		-- (v1'.130) arc (130:0:0.2) -- cycle;
   \filldraw [draw=black, fill=red, opacity=0.2]
		(v12'.135) -- (v10'.135) arc (135:270:0.2cm)
		-- (v13'.-90) arc (-90:45:0.2)
		-- (v12'.45) arc (45:135:0.2) -- cycle;
   \filldraw [draw=black, fill=red, opacity=0.2]
		(v26'.135) -- (v25'.135) arc (135:270:0.2cm)
		-- (v8'.-90) arc (-90:45:0.2)
		-- (v26'.45) arc (45:135:0.2) -- cycle;
   \filldraw [draw=black, fill=red, opacity=0.2]
		(v11'.90) -- (v16'.90) arc (90:-60:0.2cm)
		-- (v18'.-60) arc (-60:-140:0.2)
		-- (v11'.220) arc (220:90:0.2) -- cycle;
   \filldraw [draw=black, fill=red, opacity=0.2]
		(v23'.90) -- (v9'.90) arc (90:-60:0.2cm)
		-- (v22'.-60) arc (-60:-140:0.2)
		-- (v23'.220) arc (220:90:0.2) -- cycle;
   \filldraw [draw=black, fill=red, opacity=0.2]
		(v10'.0) -- (v11'.0) arc (0:-120:0.2cm)
		-- (v7'.220) arc (220:140:0.2)
		-- (v10'.120) arc (120:0:0.2) -- cycle;
    \filldraw [draw=black, fill=red, opacity=0.2]
		(v8'.180) -- (v9'.180) arc (180:300:0.2cm)
		-- (v7'.-45) arc (-45:45:0.2)
		-- (v8'.45) arc (45:180:0.2) -- cycle;

   \filldraw [draw=black, fill=green, opacity=0.2]
		(v13'.180) -- (v17'.180) arc (180:270:0.2cm)
		-- (v3'.-90) arc (-90:45:0.2)
		-- (v13'.45) arc (45:180:0.2) -- cycle;
   \filldraw [draw=black, fill=green, opacity=0.2]
		(v17'.90) -- (v3'.90) arc (90:-45:0.2cm)
		-- (v16'.-45) arc (-45:-180:0.2)
		-- (v17'.180) arc (180:90:0.2) -- cycle;
  \filldraw [draw=black, fill=green, opacity=0.2]
		(v6'.90) -- (v24'.90) arc (90:0:0.2cm)
		-- (v23'.0) arc (0:-135:0.2)
		-- (v6'.220) arc (220:90:0.2) -- cycle;
   \filldraw [draw=black, fill=green, opacity=0.2]
		(v25'.0) -- (v24'.0) arc (0:-90:0.2cm)
		-- (v6'.-90) arc (-90:-220:0.2)
		-- (v25'.130) arc (130:0:0.2) -- cycle;
  \filldraw [draw=black, fill=green, opacity=0.2]
		(v19'.90) -- (v18'.90) arc (90:-45:0.2cm)
		-- (v4'.-45) arc (-45:-180:0.2)
		-- (v19'.180) arc (180:90:0.2) -- cycle;
  \filldraw [draw=black, fill=green, opacity=0.2]
		(v19'.0) -- (v4'.0) arc (0:-90:0.2cm)
		-- (v20'.-90) arc (-90:-220:0.2)
		-- (v19'.130) arc (130:0:0.2) -- cycle;
  \filldraw [draw=black, fill=green, opacity=0.2]
		(v21'.180) -- (v5'.180) arc (180:270:0.2cm)
		-- (v20'.-90) arc (-90:45:0.2)
		-- (v21'.45) arc (45:180:0.2) -- cycle;
 \filldraw [draw=black, fill=green, opacity=0.2]
		(v22'.90) -- (v21'.90) arc (90:0:0.2cm)
		-- (v5'.0) arc (0:-135:0.2)
		-- (v22'.220) arc (220:90:0.2) -- cycle;

  \foreach \l in {1,...,11}{
      \filldraw [black] (v\l) circle (2pt) node [inner sep=5pt, label=above:$v_{\l}$] {};
    }
    \foreach \l in {12,...,27}{
      \filldraw [black] (v\l) circle (2pt) node [inner sep=5pt] {};
    }
\end{tikzpicture}

\hspace{10mm}
\begin{tikzpicture}

    \coordinate (v1) at (0,3);
    \coordinate (v2) at (1,2.5) {};
    \coordinate (v3) at (2,2);
    \coordinate (v10) at (1,1) {};
    \coordinate (v5) at (3,1) {};
    \coordinate (v6) at (3,0) {};
    \coordinate (v7) at (0,0) {};
    \coordinate (v8) at (-1,1) {};
    \coordinate (v9) at (-1,-1) {};
    \coordinate (v4) at (3,-1) {};
    \coordinate (v11) at (1,-1) {};
    \coordinate (v12) at (2,-2) {};
    \coordinate (v13) at (1,-2.5) {};
    \coordinate (v14) at (0,-3) {};
    \coordinate (v15) at (-1,-2.5) {};
    \coordinate (v16) at (-2,-2) {};
    \coordinate (v17) at (-3,-1) {};
    \coordinate (v18) at (-3,0) {};
    \coordinate (v19) at (-3,1) {};
    \coordinate (v20) at (-2,2) {};
    \coordinate (v21) at (-1,2.5) {};

    \foreach \v in {v1,v2,v3,v4,v5,v6,v7,v8,v9,v10,v11,v12,v13,v14,v15,v16,v17,v18,v19,v20,v21}{
      \node [circle, minimum size=0.4cm, line width=0pt] (\v') at (\v) {};
    }

   \filldraw [draw=black, fill=red, opacity=0.2]
		(v3'.135) -- (v10'.135) arc (135:270:0.2cm)
		-- (v5'.-90) arc (-90:45:0.2)
		-- (v3'.45) arc (45:135:0.2) -- cycle;
   \filldraw [draw=black, fill=red, opacity=0.2]
		(v20'.135) -- (v19'.135) arc (135:270:0.2cm)
		-- (v8'.-90) arc (-90:45:0.2)
		-- (v20'.45) arc (45:135:0.2) -- cycle;
   \filldraw [draw=black, fill=red, opacity=0.2]
		(v11'.90) -- (v4'.90) arc (90:-60:0.2cm)
		-- (v12'.-60) arc (-60:-140:0.2)
		-- (v11'.220) arc (220:90:0.2) -- cycle;
   \filldraw [draw=black, fill=red, opacity=0.2]
		(v17'.90) -- (v9'.90) arc (90:-60:0.2cm)
		-- (v16'.-60) arc (-60:-140:0.2)
		-- (v17'.220) arc (220:90:0.2) -- cycle;
   \filldraw [draw=black, fill=red, opacity=0.2]
		(v10'.0) -- (v11'.0) arc (0:-120:0.2cm)
		-- (v7'.220) arc (220:140:0.2)
		-- (v10'.120) arc (120:0:0.2) -- cycle;
    \filldraw [draw=black, fill=red, opacity=0.2]
		(v8'.180) -- (v9'.180) arc (180:300:0.2cm)
		-- (v7'.-45) arc (-45:45:0.2)
		-- (v8'.45) arc (45:180:0.2) -- cycle;

    \filldraw [draw=black, fill=green, opacity=0.2]
		(v1'.60) -- (v2'.60) arc (60:-120:0.2)
		-- (v1'.-120) arc (-120:-300:0.2)-- cycle;
     \filldraw [draw=black, fill=green, opacity=0.2]
		(v2'.60) -- (v3'.60) arc (60:-120:0.2)
		-- (v2'.-120) arc (-120:-300:0.2)-- cycle;
	\filldraw [draw=black, fill=green, opacity=0.2]
		(v16'.60) -- (v15'.60) arc (60:-120:0.2)
		-- (v16'.-120) arc (-120:-300:0.2)-- cycle;
     \filldraw [draw=black, fill=green, opacity=0.2]
		(v15'.60) -- (v14'.60) arc (60:-120:0.2)
		-- (v15'.-120) arc (-120:-300:0.2)-- cycle;
	\filldraw [draw=black, fill=green, opacity=0.2]
		(v1'.120) -- (v21'.120) arc (120:300:0.2)
		-- (v1'.-60) arc (-60:120:0.2)-- cycle;
     \filldraw [draw=black, fill=green, opacity=0.2]
		(v21'.120) -- (v20'.120) arc (120:300:0.2)
		-- (v21'.-60) arc (-60:120:0.2)-- cycle;
	\filldraw [draw=black, fill=green, opacity=0.2]
		(v12'.120) -- (v13'.120) arc (120:300:0.2)
		-- (v12'.-60) arc (-60:120:0.2)-- cycle;
     \filldraw [draw=black, fill=green, opacity=0.2]
		(v13'.120) -- (v14'.120) arc (120:300:0.2)
		-- (v13'.-60) arc (-60:120:0.2)-- cycle;
	\filldraw [draw=black, fill=green, opacity=0.2]
		(v5'.0) -- (v6'.0) arc (0:-180:0.2)
		-- (v5'.180) arc (180:0:0.2)-- cycle;
	\filldraw [draw=black, fill=green, opacity=0.2]
		(v6'.0) -- (v4'.0) arc (0:-180:0.2)
		-- (v6'.180) arc (180:0:0.2)-- cycle;
	\filldraw [draw=black, fill=green, opacity=0.2]
		(v19'.0) -- (v18'.0) arc (0:-180:0.2)
		-- (v19'.180) arc (180:0:0.2)-- cycle;
	\filldraw [draw=black, fill=green, opacity=0.2]
		(v18'.0) -- (v17'.0) arc (0:-180:0.2)
		-- (v18'.180) arc (180:0:0.2)-- cycle;	
  \foreach \l in {7,...,11}{
      \filldraw [black] (v\l) circle (2pt) node [inner sep=5pt, label=above:$v_{\l}$] {};
    }
    \foreach \l in {1,...,6}{
      \filldraw [black] (v\l) circle (2pt) node [inner sep=5pt] {};
    }
    \foreach \l in {12,...,21}{
      \filldraw [black] (v\l) circle (2pt) node [inner sep=5pt] {};
    }
\end{tikzpicture}
\hspace{10mm}
\begin{tikzpicture}

    \coordinate (v1) at (0,3);
    \coordinate (v2) at (1,2.5) {};
    \coordinate (v3) at (2,2);
    \coordinate (v10) at (1,1) {};
    \coordinate (v5) at (3,1) {};
    \coordinate (v6) at (3,0) {};
    \coordinate (v7) at (0,0) {};
    \coordinate (v8) at (-1,1) {};
    \coordinate (v9) at (-1,-1) {};
    \coordinate (v4) at (3,-1) {};
    \coordinate (v11) at (1,-1) {};
    \coordinate (v12) at (2,-2) {};
    \coordinate (v13) at (1,-2.5) {};
    \coordinate (v14) at (0,-3) {};
    \coordinate (v15) at (-1,-2.5) {};
    \coordinate (v16) at (-2,-2) {};
    \coordinate (v17) at (-3,-1) {};
    \coordinate (v18) at (-3,0) {};
    \coordinate (v19) at (-3,1) {};
    \coordinate (v20) at (-2,2) {};
    \coordinate (v21) at (-1,2.5) {};

    \foreach \v in {v1,v2,v3,v4,v5,v6,v7,v8,v9,v10,v11,v12,v13,v14,v15,v16,v17,v18,v19,v20,v21}{
      \node [circle, minimum size=0.4cm, line width=0pt] (\v') at (\v) {};
    }

   \filldraw [draw=black, fill=red, opacity=0.2]
		(v3'.135) -- (v10'.135) arc (135:270:0.2cm)
		-- (v5'.-90) arc (-90:45:0.2)
		-- (v3'.45) arc (45:135:0.2) -- cycle;
   \filldraw [draw=black, fill=red, opacity=0.2]
		(v20'.135) -- (v19'.135) arc (135:270:0.2cm)
		-- (v8'.-90) arc (-90:45:0.2)
		-- (v20'.45) arc (45:135:0.2) -- cycle;
   \filldraw [draw=black, fill=red, opacity=0.2]
		(v11'.90) -- (v4'.90) arc (90:-60:0.2cm)
		-- (v12'.-60) arc (-60:-140:0.2)
		-- (v11'.220) arc (220:90:0.2) -- cycle;
   \filldraw [draw=black, fill=red, opacity=0.2]
		(v17'.90) -- (v9'.90) arc (90:-60:0.2cm)
		-- (v16'.-60) arc (-60:-140:0.2)
		-- (v17'.220) arc (220:90:0.2) -- cycle;

       \filldraw [draw=black, fill=green, opacity=0.2]
		(v1'.60) -- (v2'.60) arc (60:-120:0.2)
		-- (v1'.-120) arc (-120:-300:0.2)-- cycle;
     \filldraw [draw=black, fill=green, opacity=0.2]
		(v2'.60) -- (v3'.60) arc (60:-120:0.2)
		-- (v2'.-120) arc (-120:-300:0.2)-- cycle;
	\filldraw [draw=black, fill=green, opacity=0.2]
		(v16'.60) -- (v15'.60) arc (60:-120:0.2)
		-- (v16'.-120) arc (-120:-300:0.2)-- cycle;
     \filldraw [draw=black, fill=green, opacity=0.2]
		(v15'.60) -- (v14'.60) arc (60:-120:0.2)
		-- (v15'.-120) arc (-120:-300:0.2)-- cycle;
	\filldraw [draw=black, fill=green, opacity=0.2]
		(v1'.120) -- (v21'.120) arc (120:300:0.2)
		-- (v1'.-60) arc (-60:120:0.2)-- cycle;
     \filldraw [draw=black, fill=green, opacity=0.2]
		(v21'.120) -- (v20'.120) arc (120:300:0.2)
		-- (v21'.-60) arc (-60:120:0.2)-- cycle;
	\filldraw [draw=black, fill=green, opacity=0.2]
		(v12'.120) -- (v13'.120) arc (120:300:0.2)
		-- (v12'.-60) arc (-60:120:0.2)-- cycle;
     \filldraw [draw=black, fill=green, opacity=0.2]
		(v13'.120) -- (v14'.120) arc (120:300:0.2)
		-- (v13'.-60) arc (-60:120:0.2)-- cycle;
	\filldraw [draw=black, fill=green, opacity=0.2]
		(v5'.0) -- (v6'.0) arc (0:-180:0.2)
		-- (v5'.180) arc (180:0:0.2)-- cycle;
	\filldraw [draw=black, fill=green, opacity=0.2]
		(v6'.0) -- (v4'.0) arc (0:-180:0.2)
		-- (v6'.180) arc (180:0:0.2)-- cycle;
	\filldraw [draw=black, fill=green, opacity=0.2]
		(v19'.0) -- (v18'.0) arc (0:-180:0.2)
		-- (v19'.180) arc (180:0:0.2)-- cycle;
	\filldraw [draw=black, fill=green, opacity=0.2]
		(v18'.0) -- (v17'.0) arc (0:-180:0.2)
		-- (v18'.180) arc (180:0:0.2)-- cycle;
	\filldraw [draw=black, fill=green, opacity=0.2]
		(v8'.0) -- (v9'.0) arc (0:-180:0.2)
		-- (v8'.180) arc (180:0:0.2)-- cycle;
	\filldraw [draw=black, fill=green, opacity=0.2]
		(v10'.0) -- (v11'.0) arc (0:-180:0.2)
		-- (v10'.180) arc (180:0:0.2)-- cycle;
     \foreach \l in {8,...,11}{
      \filldraw [black] (v\l) circle (2pt) node [inner sep=5pt, label=above:$v_{\l}$] {};
    }
    \foreach \l in {1,...,6}{
      \filldraw [black] (v\l) circle (2pt) node [inner sep=5pt] {};
    }
    \foreach \l in {12,...,21}{
      \filldraw [black] (v\l) circle (2pt) node [inner sep=5pt] {};
    }
\end{tikzpicture}
}\\
\vspace{8mm}
\resizebox{120pt}{!}{%

\begin{tikzpicture}

    \coordinate (v1) at (0,3);
    \coordinate (v2) at (1,2.5) {};
    \coordinate (v3) at (2,2);
    \coordinate (v10) at (1,1) {};
    \coordinate (v5) at (3,1) {};
    \coordinate (v6) at (3,0) {};
    \coordinate (v7) at (0,0) {};
    \coordinate (v8) at (-1,1) {};
    \coordinate (v9) at (-1,-1) {};
    \coordinate (v4) at (3,-1) {};
    \coordinate (v11) at (1,-1) {};
    \coordinate (v12) at (2,-2) {};
    \coordinate (v13) at (1,-2.5) {};
    \coordinate (v14) at (0,-3) {};
    \coordinate (v15) at (-1,-2.5) {};
    \coordinate (v16) at (-2,-2) {};
    \coordinate (v17) at (-3,-1) {};
    \coordinate (v18) at (-3,0) {};
    \coordinate (v19) at (-3,1) {};
    \coordinate (v20) at (-2,2) {};
    \coordinate (v21) at (-1,2.5) {};

    \foreach \v in {v1,v2,v3,v4,v5,v6,v7,v8,v9,v10,v11,v12,v13,v14,v15,v16,v17,v18,v19,v20,v21}{
      \node [circle, minimum size=0.4cm, line width=0pt] (\v') at (\v) {};
    }

   \filldraw [draw=black, fill=red, opacity=0.2]
		(v3'.135) -- (v10'.135) arc (135:270:0.2cm)
		-- (v5'.-90) arc (-90:45:0.2)
		-- (v3'.45) arc (45:135:0.2) -- cycle;

   \filldraw [draw=black, fill=red, opacity=0.2]
		(v11'.90) -- (v4'.90) arc (90:-60:0.2cm)
		-- (v12'.-60) arc (-60:-140:0.2)
		-- (v11'.220) arc (220:90:0.2) -- cycle;
          \filldraw [draw=black, fill=green, opacity=0.2]
		(v1'.60) -- (v2'.60) arc (60:-120:0.2)
		-- (v1'.-120) arc (-120:-300:0.2)-- cycle;
     \filldraw [draw=black, fill=green, opacity=0.2]
		(v2'.60) -- (v3'.60) arc (60:-120:0.2)
		-- (v2'.-120) arc (-120:-300:0.2)-- cycle;
	\filldraw [draw=black, fill=green, opacity=0.2]
		(v16'.60) -- (v15'.60) arc (60:-120:0.2)
		-- (v16'.-120) arc (-120:-300:0.2)-- cycle;
     \filldraw [draw=black, fill=green, opacity=0.2]
		(v15'.60) -- (v14'.60) arc (60:-120:0.2)
		-- (v15'.-120) arc (-120:-300:0.2)-- cycle;
	\filldraw [draw=black, fill=green, opacity=0.2]
		(v17'.60) -- (v16'.60) arc (60:-120:0.2)
		-- (v17'.-120) arc (-120:-300:0.2)-- cycle;
	\filldraw [draw=black, fill=green, opacity=0.2]
		(v1'.120) -- (v21'.120) arc (120:300:0.2)
		-- (v1'.-60) arc (-60:120:0.2)-- cycle;
     \filldraw [draw=black, fill=green, opacity=0.2]
		(v21'.120) -- (v20'.120) arc (120:300:0.2)
		-- (v21'.-60) arc (-60:120:0.2)-- cycle;
    \filldraw [draw=black, fill=green, opacity=0.2]
		(v20'.120) -- (v19'.120) arc (120:300:0.2)
		-- (v20'.-60) arc (-60:120:0.2)-- cycle;
	\filldraw [draw=black, fill=green, opacity=0.2]
		(v12'.120) -- (v13'.120) arc (120:300:0.2)
		-- (v12'.-60) arc (-60:120:0.2)-- cycle;
     \filldraw [draw=black, fill=green, opacity=0.2]
		(v13'.120) -- (v14'.120) arc (120:300:0.2)
		-- (v13'.-60) arc (-60:120:0.2)-- cycle;
	\filldraw [draw=black, fill=green, opacity=0.2]
		(v5'.0) -- (v6'.0) arc (0:-180:0.2)
		-- (v5'.180) arc (180:0:0.2)-- cycle;
	\filldraw [draw=black, fill=green, opacity=0.2]
		(v6'.0) -- (v4'.0) arc (0:-180:0.2)
		-- (v6'.180) arc (180:0:0.2)-- cycle;
	\filldraw [draw=black, fill=green, opacity=0.2]
		(v19'.0) -- (v18'.0) arc (0:-180:0.2)
		-- (v19'.180) arc (180:0:0.2)-- cycle;
	\filldraw [draw=black, fill=green, opacity=0.2]
		(v18'.0) -- (v17'.0) arc (0:-180:0.2)
		-- (v18'.180) arc (180:0:0.2)-- cycle;
		\filldraw [draw=black, fill=green, opacity=0.2]
		(v10'.0) -- (v11'.0) arc (0:-180:0.2)
		-- (v10'.180) arc (180:0:0.2)-- cycle;

    \foreach \l in {1,...,6}{
      \filldraw [black] (v\l) circle (2pt) node [inner sep=5pt] {};
    }
    \foreach \l in {10,...,11}{
      \filldraw [black] (v\l) circle (2pt) node [inner sep=5pt, label=above:$v_{\l}$] {};
    }
    \foreach \l in {12,...,21}{
      \filldraw [black] (v\l) circle (2pt) node [inner sep=5pt] {};
    }
\end{tikzpicture}
}
\hspace{10mm}
\includegraphics[height=3.2cm]{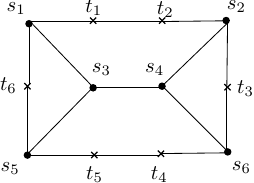}
\end{center}

\caption{(Top left) The initial $3$-QSAT instance, denoted $H(G)$ for the purposes of this example, and where each hyperedge is an arbitrary (rank $1$) constraint. (Top middle) The hypergraph obtained after applying Step 2 to vertices $v_1,v_2,v_3,v_4,v_5,v_6$. (Top right) The hypergraph obtained by applying Step 2 to vertex $v_7$. (Bottom left) The hypergraph obtained by applying Step 3bii to edge $(v_8,v_9)$. Applying Step 3bii to edge $(v_{10},v_{11})$ would thus reduce us to a $2$-QSAT cycle, which is then solved via Step 3bi. Note the order of vertices/edges processed in this example is arbitrary, and was chosen here simply for symmetry. (Bottom right) A pseudo-line graph corresponding to the hypergraph in the top left image.}
\label{fig:example}
\end{figure}
\paragraph*{An example and applicability.} Having shown Theorem~\ref{thm:bounded}, we make two remarks.
\begin{enumerate}
    \item (Demonstration) We illustrate in Figure~\ref{fig:example} how algorithm A repeatedly partially reduces parts of a $3$-QSAT instance to $2$-QSAT instances (this particular example does not require fusing of chain reactions).
    \item (Applicability) Whereas the class of hypergraphs considered in Theorem~\ref{thm:bounded} may seem \emph{a priori} rather restricted, as a ``proof of concept'' we observe that the example of Figure~\ref{fig:example} can be generalized to an entire family of non-trivial examples to which Theorem~\ref{thm:bounded} applies. (We focus on $3$-uniform hypergraphs for simplicity.) Specifically, encode the hypergraph $H(G)$ in the top left of Figure~\ref{fig:example} by a graph $G=(V_d,V_c,E)$ reminiscent of a line graph, which we call a \emph{pseudo-line graph}, as depicted in the bottom right of Figure~\ref{fig:example}. $G$ is defined as follows. Each disc vertex $s_i\in V_d$ corresponds to a red hyperedge (color online; e.g. hyperedge $\set{v_7,v_8,v_9}$ is red). Each cross vertex $t_i\in V_c$ corresponds to a pair of green hyperedges (e.g. the hyperedges containing $v_6$) which intersect on a set of size $2$. An edge in $G$ means the two hyperstructures corresponding to each end point of the edge intersect in precisely one element (since each vertex has degree $2$, this is well-defined). That each disc and cross vertex have degree $3$ and $2$, respectively, ensures the corresponding hypergraph, $H(G)$, is $2$-regular (we assume $G$ has no self-loops or parallel edges). It is not difficult to see that any such pseudo-line graph $G$ gives rise to a $2$-regular $3$-uniform hypergraph $H(G)$, thus yielding an entire family of $3$-QSAT instances to which Theorem~\ref{thm:bounded} applies. Finally, we connect this example with the notion of \emph{transfer filtrations} in Section~\ref{scn:parameterized}. Namely, the hypergraph $H(G)$ in Figure~\ref{fig:example} is of transfer type at most $12$ --- for all pairs of distinct hyperedges $e_i,e_j$ with $e_i\cap e_j=2$, simply add $e_i\cap e_j$ to the foundation.
\end{enumerate}

\subsection{Connection to $k$-QSAT instances with a SDR}\label{sscn:SDR}

Recall that the initial motivation for this study was the open question~\cite{LLMSS10} of constructing satisfying assignments to $k$-QSAT instances which have a system of distinct representatives (SDR). We now show that the class of $k$-QSAT instances solved by Theorem~\ref{thm:bounded} is closely related to this motivation.

Specifically, note that the only line of algorithm A which can reject is 1a, and this is due to the presence of $1$-local clauses. Since any $k$-QSAT instance with an SDR is satisfiable (by a product state)~\cite{LLMSS10}, it follows that hypergraphs with hyperedges of size $1$ in general cannot have SDRs. Indeed, consider the hypergraph with edge set $E=\set{\set{1,2,3},\set{1},\set{2},\set{3}}$ --- clearly, this has no SDR, and if we set the clauses to (respectively) project onto $\ket{000}$, $\ket{1}$, $\ket{1}$, and $\ket{1}$, then the corresponding $3$-QSAT instance is unsatisfiable.

However, if our $k$-QSAT instance has no $1$-local clauses in Theorem~\ref{thm:bounded}, then algorithm A always produces a (product-state) solution. We now show that this is no coincidence --- such hypergraphs must have an SDR. Thus, Theorem~\ref{thm:bounded} answers the open question of~\cite{LLMSS10} for all $k$-QSAT instances in which (1) there are no $1$-local clauses and (2) each qubit occurs in at most $2$ clauses.

\begin{reptheorem}{thm:matchgraph}
    Let $G=(V,E)$ be a hypergraph with all hyperedges of size at least $2$, and such that each vertex has degree at most $2$. Then, $G$ has an SDR.
\end{reptheorem}
\begin{proof}
    The claim is vacuously true if $G$ is empty; thus, assume $G$ is non-empty. We claim the following (poly-time) algorithm constructs an SDR.
    \begin{enumerate}
        \item While there exists a pair of distinct edges $e_i,e_j\in E$ such that $\abs{e_i\cap e_j}\geq 2$:
            \begin{enumerate}
                \item Pick an arbitrary pair of distinct vertices $v_i,v_j\in e_i\cap e_j$. Match $v_i$ to $e_i$ and $v_j$ to $e_j$, and remove $v_i,v_j,e_i,e_j$ from $G$.
            \end{enumerate}
        \item While there exist vertices $v$ of degree $1$ in some hyperedge $c_v$, match $v$ to $c_v$ and remove $v$ and $c_v$.

        \item Remove any vertices of degree $0$.

        \item Let $G'$ denote the remaining hypergraph. (We assume $G'$ is connected; if not, simply repeat the following steps on each connected component.) Construct its line graph $L$ (defined in proof of correctness below). Find a cycle $C$ in $L$, and let $v$ be an arbitrary vertex on $C$.
        \item (Build a ``reverse SDR'' on $L$) Run a depth-first search (DFS) rooted at $v$ on $L$, with the following two modifications: (1) Each time an edge $x=(u,w)$ is used to get to a previously unseen vertex $w$, add $2$-tuple $(w,x)$ to set $M\subseteq V(L)\times E(L)$. (2) When the DFS begins, do not mark the root $v$ as having been ``seen''. (In other words, $v$ should only be marked as ``seen'' once some edge $(w,v)$ is used in the DFS.)
        \item (Convert the ``reverse'' SDR on $L$ to an SDR on $G'$) For each $(w,x)\in M$:
        \begin{enumerate}
            \item Suppose $x=(u,w)$ for some $u\in V(L)$. Then, let $e_u$ and $e_w$ denote the hyperedges in $G'$ corresponding to $u$ and $w$, respectively. Match $e_w$ to the unique element of $e_u\cap e_w$.
        \end{enumerate}
    \end{enumerate}

    \paragraph*{Correctness.} We analyze each step in sequence.
    \begin{enumerate}
        \item (Step 1) Since by assumption $v_i$ and $v_j$ have degree at most $2$, they only appear in edges $e_i$ and $e_j$. Thus, we can safely match them to $e_i$ and $e_j$, respectively, and remove $v_i,v_j,e_i,e_j$.
        \item (Steps 2 and 3) Since $v$ occurs only in $c_v$, we can safely match it to $c_v$ and remove $v$ and $c_v$.

        \item (Step 4) After Steps 1, 2, and 3, $G'$ has two properties: (a) each pair of edges intersects in at most one vertex, and (b) each vertex has degree precisely $2$. By property (a), the ``usual'' (i.e. we do not need to distinguish between different intersection sizes between edges) definition of line graph can be applied to $G'$ to construct $L$: Each hyperedge of $G'$ is a vertex in $L$, and two distinct vertices $u$ and $v$ in $L$ are neighbors if and only if their corresponding hyperedges in $G'$ intersect.

            Moreover, by properties (a) and (b) of $G'$, each vertex of $L$ has degree at least $2$. It follows that $L$ contains a cycle $C$ (e.g. since summing degrees on each vertex yields there are at least as many edges in $L$ as vertices, and $L$ is connected by the assumption that $G'$ is connected).

        \item (Step 5) Since a DFS sees each vertex (at least) once, and since the first time a vertex $w$ is seen it must be that the edge $x=(u,w)$ just used to get to $w$ must have been used for the first time, it follows that no other $2$-tuple in $M$ contains $x$ or $w$. Moreover, as stated in Step 5, note that the root $v$ of the DFS is not marked until some edge $(w,v)$ is used; the latter is guaranteed to happen since (1) we chose $v$ to lie on a cycle $C$, and (2) if the two edges on the cycle $C$ which $v$ is incident on are $\set{w,v}$ and $\set{v,w'}$, then by definition of a DFS at least one of them will be traversed in direction $(w,v)$ or $(w',v)$ (as opposed to both being traversed in directions $(v,w)$ and $(v,w')$; note this property holds since we use a DFS instead of a breadth-first search). It follows that each vertex $w$ of $L$ appears in precisely one $2$-tuple of $(w,x)\in M$. We conclude that $M$ is a ``reverse SDR'' for $L$, by which we mean each \emph{vertex} of $L$ is matched with a unique \emph{edge} of $L$ (whereas our desired SDR for hypergraph $G$ is supposed to match each \emph{hyperedge} to a unique \emph{vertex}\footnote{For further clarity, the distinction is that in a reverse SDR, there may be more edges in the graph than vertices, whereas in an SDR, there may be more vertices than edges.}).

        \item (Step 6) This step converts each two-tuple in $M$ to a matched hyperedge $e_w$ and vertex in $v\in e_u\cap e_w$. Note that $\abs{e_u\cap e_w}=1$ by property (a) of $G'$. Since $M$ is a ``reverse'' SDR, it follows that the matching produced by Step 6 completes the desired SDR.
    \end{enumerate}
\end{proof}

\section{Quantum SAT and parameterized algorithms}\label{scn:parameterized}

We next develop a parameterized algorithm for computing an explicit (product state) solution to a non-trivial class of $k$-QSAT instances (Theorem~\ref{theorem}). Although the inspiration stems from algebraic geometry (AG), we generally avoid AG terminology to increase accessibility. For those versed in the topic, however, we include a brief overview of the ideas of this section in AG terms at the end of Section~\ref{sscn:generic}. The final algorithm and its runtime, along with the study of asymptotic speedups, are given in Section~\ref{sscn:runtime}. For readers who wish to see the outline of the algorithm before reading the details, see Figure~\ref{alg:statement} on Page 29.

\subsection{The transfer type of a hypergraph}\label{sscn:transfertype}

\paragraph{Transfer filtrations.} The core graph theoretic idea on which our framework will be built is the new notion of \emph{transfer type} of a hypergraph, which we now introduce.

\begin{definition}\label{def:2}
A hypergraph $G=(V,E)$ is of {\it transfer type $b$} if there exists a chain of subhypergraphs (denoted a {\it transfer filtration of type $b$})  $G_0\subseteq G_1\subseteq \cdots \subseteq G_m=G$ and an ordering of the edges $E(G)=\{E_1,\dots,E_m\}$ such that
\begin{enumerate}
\item $E(G_i)=\{E_1,\ldots,E_i\}$ for each $i\in \{0,\ldots,m\}$,
\item $|V(G_i)|\le |V(G_{i-1})|+1$ for each $i\in \{1,\ldots,m\}$,
\item if $|V(G_i)|= |V(G_{i-1})|+1$, then $V(G_i)\setminus V(G_{i-1}) \subseteq E_i$,
\item $|V(G_0)|=b$, where we call $V(G_0)$ the \emph{foundation},
\item and each edge of $G$ has at least one vertex not in $V(G_0)$.
\end{enumerate}
\end{definition}
\noindent In other words, a transfer filtration of type $b$ builds up $G$ iteratively by choosing $b$ vertices as a ``foundation'', and in each iteration adding precisely one new edge $E_i$ and \emph{at most} one new vertex. If a new vertex is added in iteration $i$, condition (3) says it must be in edge $E_i$ added in iteration $i$. Note that under certain conditions (i.e. when the foundation size is bounded by $b\le |V(G)|-|E(G)|+k-1$), we later show (Theorem~\ref{thm:general}) that this iterative construction can be exploited to prove the hypergraph $G$ must have an SDR. Also, recall that as stated in Section~\ref{scn:intro}, we conjecture that given $G$ as input, finding a foundation of minimum size is NP-hard. (In the context of our parameterized algorithms, this is reminiscent of common parameters such as treewidth being NP-hard to compute~\cite{ACP87}.) Finally, condition $5$ in Definition~\ref{def:2} is a ``minimality'' condition. For example, the entire vertex set is also a valid foundation, but contains more vertices than required to satisfy conditions $1$ through $4$. To prevent such cases, for convenience in our discussions we have added Condition $5$ to Definition~\ref{def:2}, which holds without loss of generality. In particular, given a filtration satisfying $1$-$4$ but not $5$, it is easy to compute a modified filtration with a smaller foundation satisfying $1$-$5$ (if edge $e$ fails $5$, simply remove a single arbitrary vertex $v\in e$ from $G_0$, and add an additional intermediate graph $G_i$ to the filtration to first ``pick up'' vertex $v$).

\begin{example}[Running example (Figure~\ref{fig:ex10})]\label{ex:4cycle1}
We now introduce a hypergraph $G$ which will serve as a running example to concretely illustrate the ideas of this section. Let $V(G)=\{1,2,3,4\}$ with edges $E_1=\{1,2,3\}$, $E_2=\{1,2,4\}$, $E_3=\{1,3,4\}$ and $E_4=\{2,3,4\}$. By Definition \ref{definition:cycle}, $G$ is a $3$-uniform cycle. Consider hypergraphs $G_0,G_1,G_2,G_3$ such that $V(G_0)=\{1,2\}$, $V(G_1)=\{1,2,3\}$, $V(G_2)=V(G_3)=V(G_4)=V(G)$, $E(G_0)=\emptyset$ and $E(G_j)=\{E_1,\ldots,E_j\}$ for $j=1,2,3$. Then $G_0\subseteq G_1\subseteq G_2\subseteq G_3\subseteq G_4=G$ is a transfer filtration of type $2$. In particular $G_2$ is a chain and $G_3$ is a semicycle in the sense of Definition \ref{definition:chain} and Definition \ref{definition:semicyle}, respectively.

\end{example}

\begin{figure}[t]
\begin{center}

\resizebox{500pt}{!}{%
\hspace{7mm}
\begin{tikzpicture}
	\coordinate (v1) at (-4,0);
    \coordinate (v2) at (-2,0) {};

    \foreach \v in {v1,v2}{
      \node [circle, minimum size=0.4cm, line width=0pt] (\v') at (\v) {};
    }

    \foreach \l in {1,...,2}{
      \filldraw [black] (v\l) circle (2pt) node [inner sep=5pt, label=below:$v_{\l}$] {};
    }
\end{tikzpicture}
\hspace{20mm}
\begin{tikzpicture}
	    \coordinate (v1) at (-4,0);
    \coordinate (v2) at (-2,0) {};
    \coordinate (v3) at (0,0);

    \foreach \v in {v1,v2,v3}{
      \node [circle, minimum size=0.4cm, line width=0pt] (\v') at (\v) {};
    }

    \filldraw [draw=black, fill=cyan, opacity=0.2, line width=0.5pt]
		plot[smooth, tension=1] coordinates { (v1'.120) (-2,0.5) (v3'.60)} arc (60:-60:0.2)--plot[smooth, tension=1] coordinates { (v3'.-60) (-2,-0.5) (v1'.-120)} arc(-120:-240:0.2);

    \foreach \l in {1,...,3}{
      \filldraw [black] (v\l) circle (2pt) node [inner sep=5pt, label=below:$v_{\l}$] {};
    }
\end{tikzpicture}
\hspace{20mm}
\begin{tikzpicture}
	    \coordinate (v1) at (-4,0);
    \coordinate (v2) at (-2,0) {};
    \coordinate (v3) at (0,0);
    \coordinate (v4) at (2,0) {};

    \foreach \v in {v1,v2,v3,v4}{
      \node [circle, minimum size=0.4cm, line width=0pt] (\v') at (\v) {};
    }

    \filldraw [draw=black, fill=cyan, opacity=0.2, line width=0.5pt]
		plot[smooth, tension=1] coordinates { (v1'.120) (-2,0.5) (v3'.60)} arc (60:-60:0.2)--plot[smooth, tension=1] coordinates { (v3'.-60) (-2,-0.5) (v1'.-120)} arc(-120:-240:0.2);
	\filldraw [draw=black, fill=blue, opacity=0.2]
	plot [smooth, tension=2] coordinates {(v1'.270) (3,-1.5) (v4'.0)}
	arc(0:180:0.2)
	plot [smooth, tension=2] coordinates {(v4'.180) (2,-1) (v2'.0)}
	arc(0:90:0.2) -- (v1'.90) arc(90:270:0.2)
	;

    \foreach \l in {1,...,4}{
      \filldraw [black] (v\l) circle (2pt) node [inner sep=5pt, label=below:$v_{\l}$] {};
    }
\end{tikzpicture}
}\\
\vspace{6mm}
\resizebox{430pt}{!}{%
\hspace{25mm}
\begin{tikzpicture}
	    \coordinate (v1) at (-4,0);
    \coordinate (v2) at (-2,0) {};
    \coordinate (v3) at (0,0);
    \coordinate (v4) at (2,0) {};

    \foreach \v in {v1,v2,v3,v4}{
      \node [circle, minimum size=0.4cm, line width=0pt] (\v') at (\v) {};
    }

    \filldraw [draw=black, fill=cyan, opacity=0.2, line width=0.5pt]
		plot[smooth, tension=1] coordinates { (v1'.120) (-2,0.5) (v3'.60)} arc (60:-60:0.2)--plot[smooth, tension=1] coordinates { (v3'.-60) (-2,-0.5) (v1'.-120)} arc(-120:-240:0.2);
  	\filldraw [draw=black, fill=yellow, opacity=0.2, line width=0.5pt]
		plot [smooth, tension=2] coordinates {(v1'.120) (2,1.5) (v4'.0)}
	arc(0: -90:0.2) -- (v3'.270) arc(270:120:0.2)
	plot [smooth, tension=2] coordinates {(v3'.120) (1,1) (v1'.-60)} arc(-60:-240:0.2);
	
	\filldraw [draw=black, fill=blue, opacity=0.2]
	plot [smooth, tension=2] coordinates {(v1'.270) (3,-1.5) (v4'.0)}
	arc(0:180:0.2)
	plot [smooth, tension=2] coordinates {(v4'.180) (2,-1) (v2'.0)}
	arc(0:90:0.2) -- (v1'.90) arc(90:270:0.2)
	;

    \foreach \l in {1,...,4}{
      \filldraw [black] (v\l) circle (2pt) node [inner sep=5pt, label=below:$v_{\l}$] {};
    }
\end{tikzpicture}
\hspace{5mm}
\begin{tikzpicture}
	    \coordinate (v1) at (-4,0);
    \coordinate (v2) at (-2,0) {};
    \coordinate (v3) at (0,0);
    \coordinate (v4) at (2,0) {};

    \foreach \v in {v1,v2,v3,v4}{
      \node [circle, minimum size=0.4cm, line width=0pt] (\v') at (\v) {};
    }

    \filldraw [draw=black, fill=cyan, opacity=0.2, line width=0.5pt]
		plot[smooth, tension=1] coordinates { (v1'.120) (-2,0.5) (v3'.60)} arc (60:-60:0.2)--plot[smooth, tension=1] coordinates { (v3'.-60) (-2,-0.5) (v1'.-120)} arc(-120:-240:0.2);
    \filldraw [draw=black, fill=red, opacity=0.2, line width=0.5pt]
		plot[smooth, tension=1] coordinates { (v2'.120) (0,0.5) (v4'.60)} arc (60:-60:0.2)--plot[smooth, tension=1] coordinates { (v4'.-60) (0,-0.5) (v2'.-120)} arc(-120:-240:0.2);
  	\filldraw [draw=black, fill=yellow, opacity=0.2, line width=0.5pt]
		plot [smooth, tension=2] coordinates {(v1'.120) (2,1.5) (v4'.0)}
	arc(0: -90:0.2) -- (v3'.270) arc(270:120:0.2)
	plot [smooth, tension=2] coordinates {(v3'.120) (1,1) (v1'.-60)} arc(-60:-240:0.2);
	
	\filldraw [draw=black, fill=blue, opacity=0.2]
	plot [smooth, tension=2] coordinates {(v1'.270) (3,-1.5) (v4'.0)}
	arc(0:180:0.2)
	plot [smooth, tension=2] coordinates {(v4'.180) (2,-1) (v2'.0)}
	arc(0:90:0.2) -- (v1'.90) arc(90:270:0.2)
	;

    \foreach \l in {1,...,4}{
      \filldraw [black] (v\l) circle (2pt) node [inner sep=5pt, label=below:$v_{\l}$] {};
    }
\end{tikzpicture}
}
\end{center}
\caption{The running example from Example~\ref{ex:4cycle1}. In clockwise order, beginning with the top left, are the subgraphs $G_i$ in the specified transfer filtration of Example~\ref{ex:4cycle1}: $G_0$ (foundation), $G_1$, $G_2$, $G_3$, $G_4=G$. Note that $G_2$ is a chain, $G_3$ a semicycle, and $G_4$ a cycle.}
\label{fig:ex10}
\end{figure}

\begin{rem}
Let $G$ be a $k$-uniform hypergraph of transfer type $b$. Since $G_1$ has exactly one edge, then $b\ge k-1$.
\end{rem}

\begin{example}\label{ex:L}
The {\it tight line graph} \cite{KATONA20161884} of a $k$-uniform hypergraph $G$ is the (undirected) graph $L(G)$ with vertices labelled by $E(G)$ and edges $\{(E_i,E_j)\mid E_i,E_j\in E(G)\text{ and }\abs{E_i\cap E_j}=k-1\}$. (This generalizes the notion of a line graph; a yet more general definition, called an \emph{edge intersection graph}, is mentioned in Section~\ref{scn:hypergraphs}, which adds an edge $(E_i,E_j)$ so long as $E_i\cap E_j\neq\emptyset$.) A $k$-uniform hypergraph $G$ is {\it line graph connected} if its tight line graph is connected. Arguing as in the proof of Lemma 1 of \cite{KATONA20161884}, it is easy to show that every $k$-uniform line graph connected hypergraph with no isolated vertices has transfer type $k-1$. (An \emph{isolated} vertex is not contained in any hyperedge.) To see this, pick an edge $E_{i_1}$, set $G_1$ to be the hypergraph induced by $E_{i_1}$, and choose $G_0$ to be any subset of $G_1$ of cardinality $k-1$. Since the vertex of $L(G)$ corresponding to $E_{i_1}$ is not isolated, there exists another edge $E_{i_2}$ such that $|E_{i_1}\cap E_{i_2}|=k-1$. Let $G_2$ be the hypergraph with vertices $E_{i_1}\cup E_{i_2}$ and edges $E_{i_1}$, $E_{i_2}$. Since the induced subgraph with vertices $E_{i_1}$ and $E_{i_2}$ is not a connected component of $L(G)$ this procedure can be iterated until the required transfer filtration is constructed.
\end{example}

\begin{example}[Running example]\label{ex:4cycle2}
Let $G$ be the $3$-uniform cycle of Example \ref{ex:4cycle1}. Then $L(G)$ is a cycle of length $4$, in the usual graph-theoretical sense. In particular $G$ is line graph connected.
\end{example}

\begin{example}\label{ex:foundations}
Let $G$ be a $k$-uniform hypergraph with no isolated vertex. Suppose that $L(G)$ has $c$ connected components and apply the construction of Example \ref{ex:L} to each connected component. Taking the union of the corresponding $G_0$ sets (one for each connected component) and adding the edges one at the time, we obtain a transfer filtration of type at most $c(k-1)$ on $G$. More generally, isolated vertices can be taken into account by adding them to the $0$-th term of the filtration. Hence every $k$-uniform hypergraph has transfer type at most $c(k-1)+i$ where $c$ is the number of connected components of $L(G)$ and $i$ is the number of isolated vertices of $G$.
\end{example}

\begin{example}\label{ex:torus}
Let $a_1,\ldots,a_{k-1}\ge 2$ be integers. Consider the hypergraph $G$ with vertices $(\mathbb Z/a_1\mathbb Z)\times \cdots \times (\mathbb Z/a_{k-1}\mathbb Z)$ and edges of the form
\[
\{(i_{1},\ldots,i_{k-1}),(i_1+1,\ldots, i_{k-1}),(i_1,i_2+1,\ldots,i_{k-1}),\ldots,(i_1,\ldots,i_{k-1}+1)\}
\]
for each $(i_{1},\ldots,i_{k-1})\in V(G)$. Let $G_0$ be the subhypergraph of $G$ with vertices $(\mathbb Z/a_1\mathbb Z)\times \cdots \times (\mathbb Z/a_{k-2}\mathbb Z)\times \{0\}$ and no edges. Let $G_1$ be the subhypergraph with vertices $V(G_0)\cup\{(0,\ldots,0,1)\}$ and a single edge $\{(0,\ldots,0),(1,0,\ldots,0),\ldots,(0,\ldots,0,1)\}$. Let $G_2$ be the subhypergraph obtained from  $G_1$ by adding the vertex $(0,\ldots,0,1,1)$ and the edge
\[
\{(0,\ldots,0,1,0),(1,0,\ldots,0,1,0),\ldots,(0,\ldots,0,2,0),(0,\ldots,0,1,1)\}\,.
\]
Proceeding in this way, say by reverse lexicographic order, one constructs all the vertices with last coordinate equal to $1$ and all the edges containing $k-1$ vertices with last coordinate equal to $0$ and exactly one vertex with last coordinate equal to $1$. The corresponding chain of subhypergraphs can be labeled as $G_0\subseteq G_1\subseteq \cdots\subseteq G_{a_{1}\cdots a_{k-2}}$. Iterating this construction on can extend this chain to $G_0\subseteq G_1,\ldots,\subseteq G_{2a_{1}\cdots a_{k-2}}$ in such a way that all the vertices whose last coordinate is $0$, $1$, or $2$ are accounted. Further iterations lead to a transfer filtration of type $|G_0|=a_1\cdots a_{k-2}$. Clearly there is nothing special about the particular choice of $G_0$ for instance we could have picked $V(G_0)=\{0\}\times(\mathbb Z/a_2\mathbb Z)\times \cdots \times (\mathbb Z/a_{k-1}\mathbb Z)$ in which case an obvious adaptation of the construction above leads to a transfer filtration of type $a_2\cdots a_{k-2}$. In particular, this shows that in general a hypergraph $G$ can be of transfer type $b$ for different choices of $b$.
\end{example}

\begin{rem}\label{rem:r}
Let $G$ be a hypergraph admitting a transfer filtration $G_0\subseteq G_1\subseteq \cdots\subseteq G_m=G$ of type $b$. Assume the edges of $G$ are ordered in such a way that $E(G_i)=\{E_1,\ldots,E_i\}$ for each $i\in \{1,\ldots,m\}$. Since by construction each edge contains at least one vertex not in $V(G_0)$, there exists a function $r:\{1,\ldots,m\}\to \{0,\ldots,m-1\}$ such that $r(i)<i$ and $|E_i\setminus V(G_{r(i)})|=1$ for all $i\in \{1,\ldots,m\}$.
\end{rem}

\begin{example}[Running example]\label{ex:4cycle3}
Let $G$ be the $3$-uniform cycle of Example \ref{ex:4cycle1}. Then $r:\{1,2,3,4\}\to \{0,1,2,3\}$ can be chosen to be such that $r(1)=r(2)=0$, $r(3)=1$ and $r(4)=1$.
\end{example}

\paragraph{The Decoupling Lemma.} As the first step in our construction, we show how to map any $k$-uniform hypergraph $G$ of transfer type $b$ to a new $k$-uniform hypergraph $G'$ of transfer type $b$ whose transfer filtration \emph{must} add a vertex in each step (this follows directly from the relationship between $\abs{V(G)}$ and $\abs{E(G)}$ below). This has two effects worth noting: First, ${G'}$ is guaranteed to have an SDR. Second, it \emph{decouples} certain intersections in the hypergraph, as illustrated in Figure~\ref{fig:decouple}. For clarity, in the lemma below, for a function $p$ acting on vertices, we implicitly extend its action to edges in the natural way, i.e. if $e=(v_1,v_2,v_3)$ then $p(e)=(p(v_1),p(v_2),p(v_3))$.

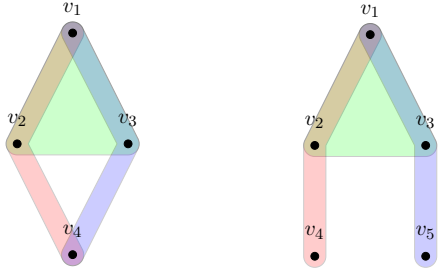
\begin{figure}[t]
	\begin{center}
\resizebox{170pt}{!}{%
\begin{tikzpicture}

	\coordinate (v0) at (0,0) {};
    \coordinate (v1) at (0,2){};
    \coordinate (v2) at (-1,0) {};
    \coordinate (v3) at (1,0){};
    \coordinate (v4) at (0,-2) {};
    \foreach \v in {v1,v2,v3,v4,v4}{
      \node [circle, minimum size=0.4cm, line width=0pt] (\v') at (\v) {};
    }
    \filldraw [draw=black, fill=green, opacity=0.2]
		(v3'.30) -- (v1'.30) arc (30:150:0.2cm)
		-- (v2'.150) arc (150:270:0.2)
		-- (v3'.270) arc (-90:30:0.2) -- cycle;
     \filldraw [draw=black, fill=red, opacity=0.2]
		(v1'.150) -- (v2'.150) arc (150:210:0.2)
		-- (v4'.-150) arc (-150:30:0.2)
		-- (v2'.0) -- (v1'.-30 )arc (-30:150:0.2) -- cycle;
     \filldraw [draw=black, fill=blue, opacity=0.2]
		(v1'.210) -- (v3'.180)--(v4'.-210) arc(-210:-30:0.2)
		-- (v3'.-30) arc(-30:30:0.2)-- (v1'.30)arc (30:210:0.2) -- cycle;
    \foreach \l in {1,...,4}{
      \filldraw [black] (v\l) circle (2pt) node [inner sep=5pt, label=above:$v_{\l}$] {};
    }
\end{tikzpicture}
\hspace{25mm}
\begin{tikzpicture}

	\coordinate (v0) at (0,0) {};
    \coordinate (v1) at (0,2){};
    \coordinate (v2) at (-1,0) {};
    \coordinate (v3) at (1,0){};
    \coordinate (v4) at (-1,-2) {};
    \coordinate (v5) at (1,-2) {};
    \foreach \v in {v1,v2,v3,v4,v4,v5}{
      \node [circle, minimum size=0.4cm, line width=0pt] (\v') at (\v) {};
    }
    \filldraw [draw=black, fill=green, opacity=0.2]
		(v3'.30) -- (v1'.30) arc (30:150:0.2cm)
		-- (v2'.150) arc (150:270:0.2)
		-- (v3'.270) arc (-90:30:0.2) -- cycle;
     \filldraw [draw=black, fill=red, opacity=0.2]
		(v1'.150) -- (v2'.150) arc (150:180:0.2)
		-- (v4'.180) arc (180:360:0.2)
		-- (v2'.0) -- (v1'.-30 )arc (-30:150:0.2) -- cycle;
     \filldraw [draw=black, fill=blue, opacity=0.2]
		(v1'.210) -- (v3'.180)--(v5'.180) arc(180:360:0.2)
		-- (v3'.0) arc(0:30:0.2)-- (v1'.30)arc (30:210:0.2) -- cycle;
    \foreach \l in {1,...,5}{
      \filldraw [black] (v\l) circle (2pt) node [inner sep=5pt, label=above:$v_{\l}$] {};
    }
\end{tikzpicture}
}
\end{center}
\caption{For the hypergraph on the left, consider the transfer filtration in which $G_0$ contains vertices $v_1,v_2$, and we iteratively add edges $\set{v_1,v_2,v_3}$, $\set{v_1,v_2,v_4}$, and $\set{v_1,v_3,v_4}$ to the filtration. Then, the Decoupling Lemma (Lemma~\ref{lem:p}) maps the hypergraph on the left to the hypergraph on the right, in the process decoupling the intersection on vertex $v_4$. The surjective function $p$ ``undoes'' the decoupling by mapping $v_1,v_2,v_3$ to themselves, and $v_4,v_5$ to $v_4$.}
\label{fig:decouple}
\end{figure}

\begin{lemma}[Decoupling lemma]\label{lem:p}
Given a $k$-uniform hypergraph $G$ of transfer type $b$, there exists a $k$-uniform hypergraph $\widetilde G$ of transfer type $b$ with $|E(G)|+b$ vertices and a surjective function $p:V(\tilde G)\to V(G)$ such $p(\widetilde E)\in E(G)$ for every $\widetilde E\in E(\widetilde G)$.
\end{lemma}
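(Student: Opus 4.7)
The plan is to build $\widetilde G$ inductively along the transfer filtration of $G$, forcing exactly one new vertex to be introduced at every edge (so that the ``no new vertex'' steps of the original filtration are precisely what get decoupled).

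First, I would fix a transfer filtration $G_0 \subseteq G_1 \subseteq \cdots \subseteq G_m = G$ of type $b$ with edges $E_1,\ldots,E_m$ ordered as in Definition \ref{def:2}, and initialise $\widetilde G_0$ by taking $V(\widetilde G_0) := V(G_0)$ and letting $p$ restrict to the identity on $V(\widetilde G_0)$. Then, for each $i=1,\ldots,m$, I would enlarge $\widetilde G_{i-1}$ to $\widetilde G_i$ by introducing one fresh vertex $\widetilde u_i$ and one new edge $\widetilde E_i$. The value $p(\widetilde u_i)$ is defined as follows: if the original step $i$ introduced a new vertex $v_i \in V(G_i)\setminus V(G_{i-1})$ (which lies in $E_i$ by Definition \ref{def:2}(3)), set $p(\widetilde u_i) := v_i$; otherwise $E_i \subseteq V(G_{i-1})$, and I pick an arbitrary $v_i \in E_i$ and set $p(\widetilde u_i) := v_i$. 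The remaining $k-1$ vertices of $\widetilde E_i$ are chosen to be arbitrary $p$-preimages in $V(\widetilde G_{i-1})$ of the vertices of $E_i\setminus\{v_i\}$.

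The one step requiring real care, and the main potential obstacle, is justifying the existence of these preimages at every step. I would address this by carrying along the loop invariant that $p$ restricted to $V(\widetilde G_{i-1})$ maps \emph{onto} $V(G_{i-1})$. The invariant holds trivially at $i=1$ because $V(\widetilde G_0)=V(G_0)$, and it is preserved from $i-1$ to $i$ because the only possibly new vertex in $V(G_i)\setminus V(G_{i-1})$ is the vertex $v_i$, which acquires a preimage $\widetilde u_i$ in that very step. Because the preimages of distinct vertices are distinct and $\widetilde u_i$ is fresh, the $k$ vertices of $\widetilde E_i$ are automatically distinct, so $\widetilde G$ really is $k$-uniform.

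It then remains to verify the four bookkeeping claims. The chain $\widetilde G_0 \subseteq \cdots \subseteq \widetilde G_m =: \widetilde G$ satisfies all five conditions of Definition \ref{def:2} with foundation size $b$: conditions (1), (2), (4) are built into the construction, and conditions (3) and (5) follow from the fact that the fresh vertex $\widetilde u_i$ lies in $\widetilde E_i$ and is not in $V(\widetilde G_0)$ for every $i\ge 1$. The vertex count is $|V(\widetilde G)| = b+m = |E(G)|+b$ by construction. The identity $p(\widetilde E_i)=E_i \in E(G)$ is immediate from how $\widetilde E_i$ was built. Finally, surjectivity of $p$ follows from the loop invariant applied at $i=m$ together with $V(G_0)\subseteq p(V(\widetilde G_0))$. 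I do not anticipate any deeper difficulty; the entire proof is a careful induction pinned to the loop invariant above.
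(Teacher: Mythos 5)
Your proof is correct and follows essentially the same construction as the paper: one fresh vertex per edge added along the filtration, with $p$ sending that vertex into the corresponding edge $E_i$ and the remaining $k-1$ vertices of $\widetilde E_i$ taken as preimages of $E_i\setminus\{v_i\}$, so that $p(\widetilde E_i)=E_i$ and the new filtration adds exactly one vertex per step. The only difference is presentational: the paper fixes canonical choices (the vertex $E_{i}\setminus V(G_{r(i)})$ from Remark~\ref{rem:r} and the minimal preimages $\underline{j}$), which are later reused notationally in the Qualifier Lemma, whereas your inductive argument with arbitrary preimages and the stated loop invariant suffices for the lemma itself.
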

\begin{proof} Let $G_0\subseteq G_1\subseteq \cdots \subseteq G_m=G$ be a transfer filtration such that $|V(G_0)|=b$ and let $r:\{1,\ldots,m\}\to \{0,\ldots,m-1\}$ as in Remark \ref{rem:r}. Suppose the vertices of $G$ are labeled as $\{1,\ldots,n\}$ in such a way that $V(G_0)=\{1,\ldots,b\}$. Furthermore, we may assume that the edges $\{E_1,\ldots,E_m\}$ of $G$ are labeled in such a way that $E(G_i)=\{E_1,\ldots,E_i\}$ for each $i\in \{1,\ldots,m\}$. Let $V(\widetilde G)=\{1,\ldots,m+b\}$ and define $p:\{1,\ldots,m+b\}\to \{1,\ldots, n\}$ so that $p(i)=i$ for all $i\in\{1,\ldots,b\}$ and $\{p(i)\}=E_{i-b}\setminus V(G_{r(i-b)})$ for all $i\in \{b+1,\ldots,b+m\}$ (where recall $r$ is from Remark~\ref{rem:r}). By Remark \ref{rem:r}, $p$ is surjective. For each $j\in \{1,\ldots,m+b\}$, let $\underline{j}=\min(p^{-1}(p(j)))$. If we define $E(\widetilde G)=\{\widetilde E_1,\ldots,\widetilde E_m\}$ by setting $\widetilde E_i=\set{i+b}\cup\{\underline{j}\mid j\in p^{-1}(E_i\setminus p(i+b))\}$ for each $i\in \{1,\ldots,m\}$, then $\widetilde G$ is clearly $k$-uniform. Let $\widetilde G_0$ be the hypergraph with vertices $\{1,\ldots,b\}$ and no edges. For each $i\in \{1,\ldots,m\}$, let $\widetilde G_i$ be the subgraph of $\widetilde G$ with vertices $\{1,\ldots,i+b\}$ and edges $\{\widetilde E_1,\ldots,\widetilde E_i\}$. Then $\widetilde G_0\subseteq \widetilde G_1\subseteq \cdots \subseteq \widetilde G_m=\widetilde G$ is a transfer filtration and $\widetilde G$ is a hypergraph of transfer type $b$.
\end{proof}

\begin{figure}[t]
\begin{center}

\resizebox{280pt}{!}{%
\begin{tikzpicture}
	\coordinate (v1) at (-4,0);
    \coordinate (v2) at (-2,0) {};
    \coordinate (v3) at (0,0);
    \coordinate (v5) at (2,0) {};
    \coordinate (v4) at (4,0) {};
    \coordinate (v6) at (6,0) {};

    \foreach \v in {v1,v2,v3,v4,v5,v6}{
      \node [circle, minimum size=0.4cm, line width=0pt] (\v') at (\v) {};
    }

    \filldraw [draw=black, fill=cyan, opacity=0.2, line width=0.5pt]
		plot[smooth, tension=1] coordinates { (v1'.120) (-2,0.5) (v3'.60)} arc (60:-60:0.2)--plot[smooth, tension=1] coordinates { (v3'.-60) (-2,-0.5) (v1'.-120)} arc(-120:-240:0.2);
	
	\filldraw [draw=black, fill=yellow, opacity=0.2]
	plot [smooth, tension=2] coordinates {(v2'.270) (3,-1.5) (v6'.0)}
	arc(0:180:0.2)
	plot [smooth, tension=2] coordinates {(v6'.180) (2,-1) (v3'.0)}
	arc(0:90:0.2) -- (v2'.90) arc(90:270:0.2)
	;

  	\filldraw [draw=black, fill=red, opacity=0.2, line width=0.5pt]
		plot [smooth, tension=2] coordinates {(v1'.120) (2,1.5) (v5'.0)}
	arc(0: -90:0.2) -- (v3'.270) arc(270:120:0.2)
	plot [smooth, tension=2] coordinates {(v3'.120) (1,1) (v1'.-60)} arc(-60:-240:0.2);

	\filldraw [draw=black, fill=blue, opacity=0.2]
	plot [smooth, tension=2] coordinates {(v1'.270) (3,-1.5) (v4'.0)}
	arc(0:180:0.2)
	plot [smooth, tension=2] coordinates {(v4'.180) (2,-1) (v2'.0)}
	arc(0:90:0.2) -- (v1'.90) arc(90:270:0.2)
	;

    \foreach \l in {1,...,6}{
      \filldraw [black] (v\l) circle (2pt) node [inner sep=5pt, label=below:$v_{\l}$] {};
    }
\end{tikzpicture}
}

\end{center}
\caption{The ``decoupled'' hypergraph from the running example of Example~\ref{ex:4cycle4}. Note the order of listing $v_4$ and $v_5$ has been swapped for visual simplicity.}
\label{fig:ex19}
\end{figure}

\begin{example}[Running example (Figure~\ref{fig:ex19})]\label{ex:4cycle4}
Let $G$ be the $3$-uniform cycle of Example \ref{ex:4cycle1}. The proof of Lemma \ref{lem:p} (appendix) produces a $3$-uniform hypergraph $\widetilde G$ with vertices $\{1,2,3,4,5,6\}$ and edges $\widetilde E_1=\{1,2,3\}$, $\widetilde E_2=\{1,2,4\}$, $\widetilde E_3=\{1,3,5\}$, $\widetilde E_4=\{2,3,6\}$, and surjective function $p:\{1,2,3,4,5,6\}\to \{1,2,3,4\}$ defined by $p(1)=1$, $p(2)=2$, $p(3)=3$, $p(4)=p(5)=p(6)=4$. This choice is not unique: setting $\widetilde E_4=\{2,4,6\}$ and $p(6)=3$ also satisfies Lemma \ref{lem:p}.
\end{example}

\begin{rem}
If $k=2$, then $G$ (as defined in Lemma~\ref{lem:p}) is an ordinary graph and the construction of $p$ can be understood topologically in terms of the geometric realization $|G|$ of the standard simplicial set defined by $G$. Assume for simplicity that $G$ is connected. Then the universal cover $\widetilde{|G|}$ is the geometric realization of a (possibly infinite) tree. Moreover, the geometric realization of $p$ is the restriction of the canonical covering map $\widetilde{|G|}\to |G|$ to the closure of a fundamental domain for the canonical action of $\pi_1(|G|)$ on $\widetilde{|G|}$.
\end{rem}

\paragraph{Radius of transfer filtration.} One of the ``parameters'' in our parameterized approach will be the \emph{radius} of a transfer filtration, defined next. The concept is reminiscent of radii of graphs, and roughly measures ``how far'' an edge is from the foundation of $b$ vertices with respect to the filtration.

\begin{definition}[Radius of transfer filtration]
Let $G$ be a hypergraph admitting a transfer filtration $G_0\subseteq \cdots \subseteq G_m=G$ of type $b$. Consider the function (whose existence is guaranteed by Remark \ref{rem:r}) $r:\{0,\ldots,m\}\to \{0,\ldots,m-1\}$ such that $r(0)=0$ and $r(i)$ is the smallest integer such that $|E_i\setminus V(G_{r(i)})|= 1$ for all $i\in\{1,\ldots,m\}$. The {\it radius of the transfer filtration $G_0\subseteq \cdots \subseteq G_m=G$ of type $b$} is the smallest integer $\beta$ such that $r^\beta(i)=0$ $\forall i\in \{1,\ldots,m\}$ ($r^\beta$ denotes the composition of $r$ with itself $\beta$ times). The {\it type $b$ radius of $G$} is the minimum value $\rho(G,b)$ of $\beta$ over the set of all possible transfer filtrations of type $b$ on $G$.
\end{definition}

\begin{example}[Running example]\label{ex:4cycle5}
For $G$ the $4$-cycle from Example \ref{ex:4cycle1}, since function $r$ described in Example \ref{ex:4cycle3} is non-constant and $r(r(i))=0$ for all $i\in \{1,2,3,4\}$, then the transfer filtration of Example \ref{ex:4cycle1} has radius $\beta=2$.
\end{example}

\subsection{The main construction}\label{sscn:construction}

\smallskip\noindent
To describe the main construction, we shall use rather general terminology to describe the most general setting in which the framework applies. For those versed in quantum information, we will attempt to note what special cases for the general definitions we give are most relevant to said community; these notes will be marked with terminology \qi\ (color online).

\paragraph{Definitions.} To begin, let $W$ be a two dimensional vector space over a field $\K$. \qi\ We may set $\K=\mathbb C$ and identify $W$ with $\mathbb C^2$ if desired. We now define $k$-local functions; \qi\ these capture the notion of a $k$-local Hamiltonian constraint evaluated against a tensor product of $k$ input qubit states $\ket{v_{i_1}}\otimes\cdots\otimes\ket{v_{i_k}}$.

\begin{definition}[$k$-local, interaction graph, product satisfiability set]
A function $H_i:W^n\to \K$ is {\it $k$-local} if there exists a subset $E_i=\{i_1,\ldots,i_k\}\subseteq \{1,\ldots,n\}$ and a non-zero functional $H_i^*:W^{\otimes k}\to\K$ such that
\[
H_i(v_1,\ldots,v_n)=H_{i}^*(v_{i_1}\otimes \cdots \otimes v_{i_k})
\]
for all $v_1,\ldots,v_n\in W$, i.e. $H_i$ acts non-trivially only on a subset of $k$ indices. A collection $H=(H_1,\ldots,H_m)$ of $k$-local functions $H_1,\ldots,H_n:W^n\to \K$ is {\it $k$-local}. The corresponding subsets $\{E_1,\ldots,E_m\}$ (i.e. on which $H_1$ through $H_m$ act non-trivially, respectively) are the edges of a hypergraph $G_H$ with vertices $\{1,\ldots,n\}$ known as the {\it interaction graph of $H$}. The {\it product satisfiability set} of the $k$-local collection $H$ is the set $\S_H$ of all $(v_1,\ldots,v_n)\in W^n$ such that $v_i\neq 0$ for all $i\in \{1,\ldots,n\}$ and $H_j(v_1,\ldots,v_n)=0$ for all $j\in \{1,\ldots,m\}$.
\end{definition}

\begin{rem}[$\sharp$ isomorphism]\label{rem:7}
Consider an isomorphism  $\sharp$ between $W$ and its dual $W^\vee$ that to each $v\in W$ assigns a functional $v^\sharp\in W^\vee$ such that $v^\sharp(v)=0$. For instance, if a basis $\{w_1,w_2\}$ for $W$ is chosen then we may define $\sharp$ by setting $((a_1w_1+a_2w_2)^\sharp)(b_1w_1+b_2w_2)=a_1b_2-a_2b_1$ for all $a_1,a_2,b_1,b_2\in \K$. Moreover, given any $v_1,v_2\in W$, then $v_1^\sharp (v_2)=0$ if and only if there exists $\lambda\in \K$ such that $\lambda v_2 = v_1$. \qi\ The $\sharp$ isomorphism essentially maps a vector $\ket{\psi}\in\complex^2$ to its unique (up to scaling) orthogonal state $\ket{\psi^\perp}\in\complex^2$.
\end{rem}

\begin{definition}[Fibonacci numbers of order $N$] Let $N$ be a non-negative integer.
The {\it Fibonacci numbers of order $N$} are the entries of the sequence $(F^{(N)}_r)$ such that $F_r^{(N)}=F_{r-1}^{(N)}+\ldots+F_{r-N}^{(N)}$ for all $r\ge N$, $F_{N-1}^{(N)}=1$ and $F_{r}^{(N)}=0$ for all $r\le N-2$.
\end{definition}

\begin{rem}\label{rem:fib}
It is shown in \cite{wolfram98} that there exists a monotonically increasing sequence $(\psi_N)$ with values in the real interval $[1,2)$ such that, for each $N\ge 1$, $F_{r}^{(N)}\sim \psi_N^r$ as $r\to +\infty$.
\end{rem}

\begin{definition}[Degree]
A function $f$ on $W^l$ with values in a $\K$-vector space has {\it degree} $(d_1,\ldots,d_l)$ if $f(\lambda_1v_1,\ldots,\lambda_l v_l)=\lambda_1^{d_1}\cdots \lambda_l^{d_l}f(v_1,\ldots,v_l)$ for every $\lambda_1,\ldots,\lambda_l\in\K$ and every $v_1,\ldots,v_l\in W$.
\end{definition}

\paragraph{The Transfer Lemma.} Applying the Decoupling Lemma to an input $k$-uniform hypergraph $G$ with transfer type $b$, we obtain a $k$-uniform hypergraph $\widetilde{G}$ of type $b$ with $m=n-b$, for $m$ and $n$ the number of edges and vertices, respectively. The next lemma shows that $\widetilde{G}$ is ``nice'', in that any global (product) solution to the $k$-QSAT system can be derived from a set of assignments to the $b$ foundation vertices, and conversely, any (product) assignment to the latter can be extended to a global (product) solution. For the reader familiar with transfer matrices and CRs (see Section~\ref{scn:bounded}), with a little thought it can be seen that the latter of these claims is similar to picking an arbitrary assignment to the $b$ foundation vertices, and then iteratively applying transfer matrices to satisfy all clauses (i.e. in step $i$ of the filtration in which edge $E_i$ and vertex $v_i$ are added, the $4\times 2$ transfer matrix of edge $E_i$ is applied to the two pre-existing vertices of $E_i$ in the filtration, yielding an assignment to $v_i$ which satisfies clause $E_i$).

\begin{lemma}[Transfer Lemma]\label{prop:7}
Let $H=(H_1,\ldots,H_{n-b})$ be a $k$-local collection of functions $H_i:W^n \to \K$ whose interaction graph is a $k$-uniform hypergraph of transfer type $b$. There exist non-zero (non-constant) functions, which we call ``transfer functions'', $g_1,\ldots,g_n: W^b\to W$ with the following properties.
\begin{enumerate}
\item (Global to local assignments) If $(v_1,\ldots,v_n)\in \S_H$ (recall the $v_i$ are non-zero by definition of $\S_H$) there exist non-zero $\lambda_1,\ldots,\lambda_n\in \K$ such that, for every $i\in\{1,\ldots,n\}$,
\begin{equation}\label{eq:prop}
\lambda_iv_i=g_i(v_1,\ldots, v_b).
\end{equation}
\item (Local to global assignments) For any non-zero $v_1,\ldots,v_b\in W$ there exist $v_{b+1},\ldots,v_n\in W$ such that $(v_1,\ldots,v_n)\in \S_H$ and $v_i=g_i(v_1,\ldots,v_b)$ for every $i$ such that $g_i(v_1,\ldots,v_b)\neq 0$.
\item (Degree bounds) $g_i$ has degree $(d_{i1},\ldots,d_{ib})$ such that $d_{ij}\le F_{i}^{(b)}$ for all $j\in\{1,\ldots,b\}$.
\end{enumerate}
\qi\ Intuitively, a transfer function $g_i$ takes in a tensor product assignment to the $b$ foundation qubits, and gives a closed formula for an assignment to qubit $i$.
\end{lemma}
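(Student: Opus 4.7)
My plan is to construct the transfer functions $g_i$ by induction along the transfer filtration, using the nondegenerate pairing $\sharp$ of Remark \ref{rem:7} to solve each constraint explicitly for its newly introduced variable in terms of the foundation assignment.

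For the setup I first observe that the hypotheses $|E(G)|=n-b$, $|V(G_i)|\le |V(G_{i-1})|+1$, and $|V(G_m)|=n$ together force every filtration step to add exactly one new vertex. I therefore relabel so that $V(G_0)=\{1,\ldots,b\}$ and the vertex introduced at step $i$ is $b+i$; the edge $E_i$ then consists of $b+i$ together with $k-1$ earlier vertices $j_1,\ldots,j_{k-1}<b+i$. I set $g_i(v_1,\ldots,v_b):=v_i$ for $i\le b$, and for $i>b$ I define a functional
\[
\phi_i(v_1,\ldots,v_b) := H_{i-b}^{*}\bigl(g_{j_1}(v_1,\ldots,v_b)\otimes\cdots\otimes g_{j_{k-1}}(v_1,\ldots,v_b)\otimes(\,\cdot\,)\bigr)\in W^\vee
\]
and put $g_i:=\sharp^{-1}(\phi_i)$, so that $g_i^\sharp=\phi_i$ and hence $\phi_i(g_i)=0$ by Remark \ref{rem:7}.

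Property 1 will follow by induction on $i$: given a global solution $(v_1,\ldots,v_n)\in\S_H$, the constraint $H_{i-b}^{*}(v_{j_1}\otimes\cdots\otimes v_{j_{k-1}}\otimes v_i)=0$, after substituting $v_{j_s}=\lambda_{j_s}^{-1}g_{j_s}$ from the inductive hypothesis and using $(k-1)$-linearity, reduces to $\phi_i(v_i)=0$. Hence $v_i$ lies in $\ker\phi_i=\K\cdot g_i$ by nondegeneracy of $\sharp$, yielding a scalar $\lambda_i$ with $\lambda_i v_i=g_i$. Property 2 is the dual construction: given nonzero $v_1,\ldots,v_b\in W$, I set $v_i:=g_i(v_1,\ldots,v_b)$ whenever this is nonzero and choose any nonzero vector otherwise. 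Each constraint then becomes $\phi_i(v_i)$, which equals $\phi_i(g_i)=g_i^\sharp(g_i)=0$ in the first branch and vanishes identically in the second because $\phi_i$ must itself be zero there.

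For the degree bound I use the $(k-1)$-linearity of the recursive definition to obtain $d_{ij}\le\sum_{s=1}^{k-1}d_{j_s,j}$. The containment $E_1\subseteq V(G_1)$ forces $k-1\le b$, and each $j_s<i$, so the right-hand side is dominated by the degrees of the $b$ most recent predecessors, giving the recurrence $d_{ij}\le d_{i-1,j}+\cdots+d_{i-b,j}$, which is exactly the $b$-th order Fibonacci recurrence and yields $d_{ij}\le F_i^{(b)}$ for $i>b$; the base cases $i\le b$ are handled by $d_{ij}=\delta_{ij}$ and a trivial initial check. The point I expect to require the most care is the nonvanishing of $\lambda_i$ in Property 1 when $\phi_i$ degenerates on the particular foundation vector extracted from a global solution: I will need to track how such degeneracies propagate through the filtration and rule them out using the existence of the solution itself.
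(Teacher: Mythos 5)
Your construction of the $g_i$ (via the functionals $\phi_i=g_i^\sharp$ and the isomorphism $\sharp$), your treatment of Property 1, and your degree recurrence all mirror the paper's argument, but your proof of Property 2 has a genuine gap. In the branch where $g_i(v_1,\ldots,v_b)=0$ you assign an arbitrary nonzero $v_i$ and claim the corresponding clause is satisfied ``because $\phi_i$ must itself be zero there.'' This conflates $\phi_i(v_i)=H^*_{i-b}\bigl(g_{j_1}(v_1,\ldots,v_b)\otimes\cdots\otimes g_{j_{k-1}}(v_1,\ldots,v_b)\otimes v_i\bigr)$ with the actual clause $H^*_{i-b}(v_{j_1}\otimes\cdots\otimes v_{j_{k-1}}\otimes v_i)$; the two coincide only when every predecessor satisfies $v_{j_s}=g_{j_s}(v_1,\ldots,v_b)$, i.e.\ only when no predecessor itself fell into your ``arbitrary'' branch. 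If some $g_{j_s}$ vanishes at the foundation point, then $\phi_i\equiv 0$ holds trivially (one tensor factor is zero) and says nothing about the clause evaluated on the arbitrarily chosen $v_{j_s},v_i$, which can fail. Concretely: $k=3$, $b=2$, edges $E_1=\{1,2,3\}$, $E_2=\{1,3,4\}$, $H_1^*=H_2^*=\bra{000}$ (so $a\otimes b\otimes c\mapsto a_0b_0c_0$), $v_1=\ket{0}$, $v_2=\ket{1}$. Then $g_3(v_1,v_2)=0$, hence also $g_4(v_1,v_2)=0$, and your recipe may output $v_3=v_4=\ket{0}$, which violates the second clause, so the resulting tuple is not in $\S_H$.

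The repair is the paper's route: in the inductive step choose $v_i$ to be a nonzero vector in the kernel of the \emph{residual} linear functional $w\mapsto H^*_{i-b}(v_{j_1}\otimes\cdots\otimes v_{j_{k-1}}\otimes w)$ built from the already-chosen $v_{j_s}$ (such a vector always exists because this is a linear functional on the two-dimensional space $W$); this guarantees membership in $\S_H$ unconditionally. Then, to get the clause ``$v_i=g_i$ whenever $g_i\neq 0$,'' note that $g_i(v_1,\ldots,v_b)\neq 0$ forces all $g_{j_s}(v_1,\ldots,v_b)\neq 0$, hence $v_{j_s}=g_{j_s}(v_1,\ldots,v_b)$ by induction, so $g_i^\sharp(v_1,\ldots,v_b)$ kills that kernel vector and Remark~\ref{rem:7} makes $g_i(v_1,\ldots,v_b)$ proportional to it; one may then simply take $v_i=g_i(v_1,\ldots,v_b)$. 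Two smaller remarks: the degeneracy you flag in Property 1 cannot in fact be ruled out (in the example above $g_3$ vanishes on the foundation restriction of genuine solutions), so no propagation argument will recover a nonzero $\lambda_i$ there; \eqref{eq:prop} then holds with a (possibly zero) scalar, which is all that the paper's own argument delivers and all that is used downstream. And in the degree bound, ``dominated by the $b$ most recent predecessors'' should be routed through the inductive bound $d_{l,j}\le F^{(b)}_l$ for $l<i$ together with monotonicity and nonnegativity of $F^{(b)}$, since the $d_{l,j}$ themselves need not be monotone in $l$.
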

\begin{proof}
For $i=1,\ldots,b$, define $g_i(v_1, \ldots, v_b)=v_i$. Up to relabeling the $n$ vertices, since the number of edges is $n-b$, we may assume that $G_H$ admits a filtration $G_0\subseteq G_1\subseteq\cdots\subseteq G_{n-b}=G_H$ such that each $G_i$ is of transfer of type $b$ and $V(G_i)\setminus V(G_{i-1})=\{i+b\}$ for all $i\in\{1,\ldots,n-b\}$. Therefore, we may work by induction on $n$ (for each fixed $b$) and assume that $g_{b+1},\ldots, g_{n-1}$ have been constructed. Suppose that $E_{n-b}=\{n,i_1,\ldots,i_{k-1}\}$ (where we shall assume vertex $n$ is added in step $n-b$ of the filtration) and consider the function $g_n^\sharp:W^b\to W^\vee$ such that
\begin{equation}\label{eq:sharp}
(g_n^\sharp(v_1,\ldots, v_b))(v) = H^*_{n-b}(g_{i_1}(v_1,\ldots,v_b)\otimes \cdots \otimes g_{i_{k-1}}(v_1, \ldots, v_b)\otimes v)
\end{equation}
for all $v_1,\ldots,v_b,v\in W$. \qi\ Intuitively, $g_n^\sharp$ evaluates the local Hamiltonian term (i.e. estimates its energy penalty) corresponding to $H^*_{n-b}$ on the tensor product assignment prescribed by transfer functions $g_i$ on the first $k-1$ indices and argument $v$ on the $k$-th index. For (1) of the claim, by induction, we have that $(v_1,\ldots,v_n)\in \S_H$ implies that $g_1,\ldots,g_{n-1}$ as \eqref{eq:prop} exist and $g_n^\sharp(v_1,\ldots,v_b)(v_n)=0$. Given an isomorphism $\sharp$ between $W$ and $W^\vee$ as in Remark \ref{rem:7}, we define a function $g_n:W^b\to W$ by setting $(g_n(v_1,\ldots,v_b))^\sharp=g_n^\sharp(v_1,\ldots,v_b)$ for all $v_1,\ldots,v_b\in W$. Then $(v_1,\ldots,v_n)$ is in $\S_H$ implies \eqref{eq:prop} holds for every $i\in \{1,\ldots,n\}$. This proves (1).

To prove (2), suppose $v_{b+1},\ldots,v_{n-1}$ have been constructed. Let $v$ be such that $H^*_{n-b}(v_{i_1}\otimes\cdots\otimes v_{i_{k-1}}\otimes v)=0$ (such a $v$ exists, for example, since recall each QSAT constraint is rank $1$ in our definition). Then $(v_1,\ldots,v_{n-1}, \lambda v)\in \S_H$ for any $\lambda\in \K$. Moreover, if $g_n(v_1,\ldots,v_b)\neq 0$ then by Remark \ref{rem:7} it has to equal $\mu v$ for some non-zero $\mu \in \K$. Setting $v_n =\mu v$, proves (2).

To prove (3), we observe that the degree of $g_n$ equals the degree of $g_n^\sharp$. Using induction and \eqref{eq:sharp} it is easy to see that the latter satisfies the claimed bounds.
\end{proof}

\begin{example}[Running example]\label{ex:4cycle6}
Let $H=(H_1,H_2,H_3,H_4)$ be a $3$-local collection of functions $H_i:W^6\to \mathbb K$ whose interaction graph is the $3$-uniform chain $\widetilde G$ described in Example \ref{ex:4cycle4} (obtained by plugging the $4$-cycle $G$ of Example~\ref{ex:4cycle1} into the Decoupling Lemma). For clarity, $H_i$ is defined on hyperedge $\widetilde{E}_i$, where the order of vertices in each edge is fixed by the transfer filtration chosen; in particular, we use the natural ordering $\widetilde E_1=(1,2,3)$, $\widetilde E_2=(1,2,4)$, $\widetilde E_3=(1,3,5)$, $\widetilde E_4=(2,4,6)$, so that the foundation is $\set{1,2}$. Then the proof of Lemma \ref{prop:7} constructs transfer functions $g_1,\ldots,g_6:W^2\to W$ which give assignments to qubits $1$ through $6$, respectively, as follows. Fixing a basis $\{w_1,w_2\}$ of $W$ and unraveling \eqref{eq:sharp} we obtain
\begin{align*}
g_1(v_1,v_2)&=v_1\,;\\
g_2(v_1,v_2)&=v_2\,;\\
g_3(v_1,v_2)&= H_1^*(v_1\otimes v_2\otimes w_2)w_1-H_1^*(v_1\otimes v_2\otimes w_1)w_2\,;\\
g_4(v_1,v_2)&= H_2^*(v_1\otimes v_2\otimes w_2)w_1-H_2^*(v_1\otimes v_2\otimes w_1)w_2\,;\\
g_5(v_1,v_2)&= H_3^*(v_1\otimes g_3(v_1,v_2)\otimes w_2)w_1-H_3^*(v_1\otimes g_3(v_1,v_2)\otimes w_1)w_2\,;\\
g_6(v_1,v_2)&= H_4^*(v_2\otimes g_4(v_1,v_2)\otimes w_2)w_1-H_4^*(v_2\otimes g_4(v_1,v_2)\otimes w_1)w_2\,.
\end{align*}
In particular the matrix of degrees $d_{ij}$ is
\[
\left(
  \begin{array}{cccccc}
    1 & 0 & 1 & 1 & 2 & 1 \\
    0 & 1 & 1 & 1 & 1 & 2
  \end{array}
\right)^T
\]
so that, in accordance to the bound given in Lemma \ref{prop:7}, the $i$-th entry of each column is less or equal than the $i$-th (ordinary) Fibonacci number $F_i^{(2)}$.
\end{example}

\paragraph{The Qualifier Lemma.} Thus far, we have seen how combining the Decoupling and Transfer Lemmas ``blows up'' an input $k$-QSAT system $\Pi$ to a larger ``decoupled'' system $\Pi^+$ which is easier to solve due to its decoupled property. Now we wish to relate the solutions of $\Pi^+$ \emph{back} to $\Pi$. This is accomplished by the next lemma, which introduces a set of ``qualifier'' constraints $\set{h_s}$ with the key property (follows from Lemma~\ref{l:qualifier} and Remark~\ref{rem:after}): Any solution to $\set{h_s}$ can be extended to one for $\Pi^+$, and then \emph{mapped back} to a solution for $\Pi$. Importantly, the qualifier constraints \emph{act only} on the $b$ foundation vertices, as opposed to all $n$ vertices!

%
%
\begin{lemma}[Qualifier Lemma]\label{l:qualifier}
Let $H=(H_1,\ldots,H_m)$ be a $k$-local collection of functions $H_i:W^n\to \K$ whose interaction graph is a $k$-uniform hypergraph of transfer type $b$ such that $m>n-b$. Then there exist non-zero (non-constant) functions, called \emph{qualifiers}, $h_1,\ldots, h_{m-n+b}:W^b\to \K$ and $\pi:W^n\to W^b$ such that
\begin{enumerate}
\item $h_s(\pi(\S_H))=0$ for all $s\in \{1,\ldots,m-n+b\}$;
\item $h_s$ has degree $(d_{s1},\ldots,d_{sb})$ with $d_{sr}\le 2 F^{(b)}_{\rho(G,b)+b+1}$ for all $s\in\{1,\ldots,m+b\}$ and all $r\in\{1,\ldots,b\}$.
\end{enumerate}
\qi\ Condition (1) essentially says that any satisfying product state to the $k$-QSAT instance $H$ will, after application of an appropriate map $\pi$ from $n$ qubits down to the $b$ foundation qubits, yield a common root to all qualifier functions $h_s$.
\end{lemma}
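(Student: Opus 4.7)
My plan is to split the $m$ edges of the interaction hypergraph into the $n-b$ edges that add a new vertex in the transfer filtration and the remaining $m-n+b$ ``extra'' edges, which act entirely on already-placed vertices. The Transfer Lemma handles the former by furnishing $g_1,\ldots,g_n\colon W^b\to W$; each extra edge is then turned into a qualifier by pulling back its constraint through these transfer functions.

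Concretely, I would first relabel vertices so that the foundation is $\{1,\ldots,b\}$ and reorder the edges so that $E_1,\ldots,E_{n-b}$ are precisely those used to add vertices in the transfer filtration $G_0\subseteq G_1\subseteq\cdots\subseteq G_m=G_H$. Applying the Transfer Lemma to the subcollection $(H_1,\ldots,H_{n-b})$ yields transfer functions $g_1,\ldots,g_n\colon W^b\to W$. Define $\pi\colon W^n\to W^b$ as projection onto the first $b$ coordinates. For each extra edge $E_{n-b+s}$ with vertex set $\{i_1,\ldots,i_k\}$, set
$$h_s(v_1,\ldots,v_b)\;:=\;H_{n-b+s}^{*}\bigl(g_{i_1}(v_1,\ldots,v_b)\otimes\cdots\otimes g_{i_k}(v_1,\ldots,v_b)\bigr).$$
Non-vanishing/non-constancy of $h_s$ follows from the local-to-global statement in the Transfer Lemma together with the fact that $H_{n-b+s}^{*}\neq 0$, since otherwise every foundation assignment would extend to a solution, contradicting that $E_{n-b+s}$ imposes a nontrivial constraint.

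For property (1), suppose $(v_1,\ldots,v_n)\in\S_H$. The Transfer Lemma provides non-zero scalars $\lambda_1,\ldots,\lambda_n\in\K$ with $\lambda_iv_i=g_i(v_1,\ldots,v_b)$, so using multilinearity of $H_{n-b+s}^{*}$,
$$h_s(\pi(v_1,\ldots,v_n))=\Bigl(\prod_{j=1}^{k}\lambda_{i_j}\Bigr)\,H_{n-b+s}^{*}(v_{i_1}\otimes\cdots\otimes v_{i_k})=0,$$
since $H_{n-b+s}$ vanishes on the product state by assumption.

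The main obstacle is the degree bound in property (2). The degree of $h_s$ in the variable $v_r$ equals the sum $\sum_{j=1}^{k}d_{i_jr}$ of degrees of the involved transfer functions. I plan to refine the bound in part~(3) of the Transfer Lemma by re-indexing Fibonacci growth by \emph{depth} in the filtration (measured by iterates of the function $r$ from Remark~\ref{rem:r}) rather than by the raw filtration step: by induction on depth, $g_i$ has each partial degree at most $F^{(b)}_{t+b}$, where $t$ is the depth of vertex $i$. Since every extra edge has all its vertices at depth at most $\rho(G,b)$ from the foundation by definition of the radius, each $d_{i_jr}\le F^{(b)}_{\rho(G,b)+b}$. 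Using $k\le b+1$ (which holds because a $k$-uniform hypergraph of transfer type $b$ satisfies $b\ge k-1$) and the identity $F^{(b)}_{r-b+1}+\cdots+F^{(b)}_r=F^{(b)}_{r+1}$, summing $k$ such terms yields the required bound of $2F^{(b)}_{\rho(G,b)+b+1}$; the factor of $2$ absorbs the one extra term when $k=b+1$ and any off-by-one in matching ``depth'' to the index appearing in the Transfer Lemma's bound. The delicate point will be this matching of indices, which I expect to handle by carrying the depth estimate through the same induction used to prove the Transfer Lemma rather than invoking its bound as a black box.
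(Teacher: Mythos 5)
Your construction is, despite appearances, essentially the paper's own construction in disguise: the paper routes the argument through the Decoupling Lemma (Lemma~\ref{lem:p}), applies the Transfer Lemma to the blown-up hypergraph, and sets $h_s=(g_{i_s}^\sharp)(g_{\underline{i_s}})$; unwinding \eqref{eq:sharp}, this is exactly $H^*$ of the extra edge evaluated on the transfer functions of all $k$ of its vertices, and the transfer functions of the ``first copies'' in the decoupled graph coincide with the ones you build from the $n-b$ vertex-adding edges alone. So your $\pi$ and your $h_s$ agree with the paper's, and your verification of property (1) is the same computation. (Your argument that $h_s\not\equiv 0$ is not really a proof --- ``every foundation assignment extends to a solution'' is not a contradiction --- but the paper does not argue this point either.)

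The genuine gap is in property (2), exactly where you flagged the delicacy, and your plan does not close it. First, the depth-indexed induction ``$g_i$ has each partial degree at most $F^{(b)}_{t+b}$, $t$ the depth of $i$'' is false: the function $r$ of Remark~\ref{rem:r} returns the \emph{smallest} index $j$ with $\abs{E_i\setminus V(G_j)}=1$, so a depth-$2$ edge may contain vertices whose transfer functions already have large degree. Concretely, take $b=2$, $k=3$, foundation $\{a,b\}$, and add edges alternately as $\{a,b,c_1\},\{a,c_1,x_1\},\{a,b,c_2\},\{x_1,c_2,x_2\},\{a,b,c_3\},\{x_2,c_3,x_3\},\ldots$: each chain edge $\{x_{t-1},c_t,x_t\}$ has $r$ pointing to the immediately preceding depth-$1$ edge $\{a,b,c_t\}$, so every edge has depth at most $2$, yet $\deg(g_{x_t})=\deg(g_{x_{t-1}})+\deg(g_{c_t})$ grows linearly in $t$, far beyond $F^{(2)}_{4}=3$. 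Second, even granting a uniform per-factor bound $F^{(b)}_{\rho(G,b)+b}$, summing $k\le b+1$ such terms cannot yield $2F^{(b)}_{\rho(G,b)+b+1}$ in general: since $F^{(b)}_{R+1}\le 2F^{(b)}_{R}$, the target is at most $4F^{(b)}_{\rho(G,b)+b}$, which is exceeded already for $k\ge 5$, and the identity $F^{(b)}_{R-b+1}+\cdots+F^{(b)}_{R}=F^{(b)}_{R+1}$ only helps if the $k$ degrees sit at staggered indices, which nothing guarantees (all $k$ vertices of the extra edge may be at maximal depth). The paper gets its factor $2$ structurally rather than by summing $k$ maxima: the Decoupling Lemma groups the $k$ factors as $(k-1)+1$, so the first $k-1$ together constitute a single transfer function $g_{i_s}$ (of the decoupled copy of the distinguished vertex) and the qualifier's degree is a sum of only two transfer-function degrees, each controlled by Lemma~\ref{prop:7}. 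If you wish to avoid the Decoupling Lemma, you must at least reproduce this grouping --- bound the sum over the $k-1$ non-distinguished vertices by the same recursion that bounds a transfer function, and only then add the last term --- and you must replace your depth induction by an invariant that actually survives the ``reset'' behaviour of $r$ illustrated above.
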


\begin{proof}
We begin by applying the Decoupling Lemma (Lemma \ref{lem:p}) to $G_H$. This yields the larger hypergraph $\widetilde{G_H}$ and surjection $p:V(\widetilde{G_H})\to V(G_H)$, constructed via a transfer filtration $G_0\subseteq \cdots \subseteq G_m=G$ of type $b$. Note that $\widetilde{G_H}$ is the interaction graph of a $k$-local collection $\widetilde H=(\widetilde H_1,\ldots,\widetilde H_m)$ of functions $\widetilde H_i:W^{m+b}\to \K$ such that $\widetilde H^*_i=H^*_i$ for each $i\in \{1,\ldots,m\}$. Let $\Delta:W^n\to W^{m+b}$ be such that $\Delta(v_1,\ldots,v_n)=(\widetilde v_1,\ldots,\widetilde v_{m+b})$, where $\widetilde v_i = v_{p(i)}$ for all $i\in \{1,\ldots,m+b\}$. In other words, $\Delta$ ``blows up'' any assignment on the original $n$ vertices to one on $m+b$ vertices (where by assumption $m+b>n$), such that any vertex $j$ of $\widetilde{G_H}$ obtained by ``decoupling" a vertex $i$ of $G_H$ (formally, $p(j)=i$) is assigned the same vector as $i$. It follows that $(v_1, \ldots ,v_n)\in \S_H$ if and only if $\Delta(v_1,\ldots,v_n)\in \S_{\widetilde H}$.

Having applied the Decoupling Lemma and obtained a hypergraph $\widetilde{G_H}$ satisfying $m=n-b$, we now apply the Transfer Lemma (Lemma \ref{prop:7}) to $\widetilde{G_H}$. This yields that there exist $g_1,\dots,g_{m+b}:W^b\to W$ with the property that $\Delta(v_1,\ldots,v_n)\in \S_{\widetilde H}$ implies that $g_i(v_1,\ldots,v_b)$ is a multiple of $v_{p(i)}$ for all $i\in \{1,\ldots,m+b\}$. Borrowing notation from the proof of Lemma \ref{lem:p}, let $\{i_1,\ldots,i_{m-n+b}\}$ be the subset of all  $i\in\{1,\ldots,m+b\}$  such that $\underline i < i$ (intuitively, these indices correspond to vertices of $\widetilde{G_H}$ obtained by decoupling a vertex of $G_H$). For each $s\in \{1,\ldots, m-n+b\}$, define qualifier $h_s:W^b\to \K$ be such that
\begin{equation}\label{eq:h_s}
h_s(v_1,\ldots,v_b)=(g_{i_s}^\sharp(v_1,\ldots,v_b))(g_{\underline{i_s}}(v_1,\ldots,v_b))
\end{equation}
for all $v_1,\ldots,v_b\in W$. If $(v_1,\ldots,v_n)\in \S_H$, then for every $s\in \{1,\ldots,m-n+b\}$ there exists $\lambda_{i_s}, \lambda_{\underline{i_s}}\in \K$ such that $\lambda_{i_s} v_{p(i_s)} = g_{i_s}(v_1,\ldots,v_b)$ and $\lambda_{\underline{i_s}} v_{p(i_s)}=g_{\underline{i_s}}(v_1,\ldots,v_b)$. Therefore, \[
h_s(v_1,\ldots,v_b)=\lambda_{i_s}\lambda_{\underline i_s}v_{p(i_s)}^\sharp(v_{p(i_s)})=0
\]
for every $s\in \{1,\ldots,m-n+b\}$. Therefore, (1) holds if we define $\pi$ to be the composition of $\Delta$ with the projection onto the first $b$ entries. It follows from Lemma \ref{prop:7} and \eqref{eq:h_s} that $h_s$ has degree $(d_{s1},\ldots,d_{sb})$ with $d_{sr}\le 2F_{i_s}^{(b)}$ for every $s\in \{1,\ldots,m-n+b\}$ and $r\in\{1,\ldots,b\}$. Finally (2) follows from Lemma \ref{prop:7}, by choosing the transfer filtration $G_0\subseteq \cdots \subseteq G_m=G$ of type $b$ to have radius $\rho(G,b)$.
\end{proof}
\begin{rem}[How to use Qualifier Lemma to extend local solutions to global solutions]\label{rem:after}
In the notation of the proof of Lemma \ref{l:qualifier}, suppose $v_1,\ldots,v_b$ is an assignment to the foundation which satisfies the qualifiers, i.e.  $h_s(v_1,\ldots,v_b)=0$ for all $s\in \{1,\ldots,m-n+b\}$, and that $g_{i_s}(v_1,\ldots,v_b)\neq 0$ for all $s\in \{1,\ldots,m-n+b\}$. Then by Remark \ref{rem:7} we have that for each $s\in\{1,\ldots,m-n+b\}$ there exist $\mu_s\in \mathbb K$ such that
\[
\mu_s g_{\underline{i_s}}(v_1,\ldots,v_b)=g_{i_s}(v_1,\ldots,v_b)\,.
\]
In other words, in a solution to the decoupled instance on $\widetilde{G_H}$, it must be the case that all decoupled copies of a vertex $v$ receive the \emph{same} assignment (up to scalars). Indeed, $g_{\underline{i_s}}(v_1,\ldots,v_b)\neq 0$, and hence by Lemma \ref{prop:7} there exist $v_{b+1},\ldots,v_n$ such that $(v_1,\ldots,v_n)\in \S_H$, i.e. a satisfying assignment for $H$ must exist. Thus, to solve the original $k$-QSAT instance $\Pi$, it suffices to: (1) apply the Decoupling Lemma to blow up the instance to a decoupled instance $\Pi^+$, (2) apply the transfer functions from the Transfer Lemma to $v_1,\ldots,v_b$ to obtain a solution on all $m+b$ vertices for $\Pi^+$, and (3) (easily) map this solution back to one on $n$ vertices for $\Pi$.
\end{rem}

\begin{example}[Running example]\label{ex:4cycle7}
Let $H=(H_1,H_2,H_3,H_4)$ be a $3$-local collection of functions $H_i:W^4\to \mathbb K$ whose interaction graph is the $3$-uniform cycle of transfer type $2$ introduced in Example \ref{ex:4cycle1}. If $p$ is chosen as in Example \ref{ex:4cycle4}, then the two qualifier functions are
\[
h_1(v_1,v_2)=(g_5^\sharp(v_1,v_2))(g_4(v_1,v_2))
\]
of degree $(3,2)$ and
\[
h_2(v_1,v_2)=(g_6^\sharp(v_1,v_2))(g_3(v_1,v_2))
\]
of degree $(2,3)$, where $g_3,g_4,g_5,g_6$ so that $d_{sr}\le 3\le 10 =2 F_{5}^{(2)}$ for each $s,r\in\{1,2\}$, in accordance to Lemma \ref{l:qualifier}. Choosing a basis $\{w_1,w_2\}$ of $W$ as in Example \ref{ex:4cycle7}, we obtain via Condition 1 of Lemma~\ref{l:qualifier} that $(v_1,v_2,v_3,v_4)\in \mathcal S_H$ implies
\begin{align*}
0=H_1^*(v_1\otimes v_2\otimes w_1)H_2^*(v_1\otimes v_2\otimes w_1) H_3^*(v_1\otimes w_2\otimes w_2)\\
+ H_1^*(v_1\otimes v_2\otimes w_2)H_2^*(v_1\otimes v_2\otimes w_2) H_3^*(v_1\otimes w_1\otimes w_1)\\
-H_1^*(v_1\otimes v_2\otimes w_1)H_2^*(v_1\otimes v_2\otimes w_2) H_3^*(v_1\otimes w_2\otimes w_1)\\
-H_1^*(v_1\otimes v_2\otimes w_2)H_2^*(v_1\otimes v_2\otimes w_1) H_3^*(v_1\otimes w_1\otimes w_2)
\end{align*}
and
\begin{align*}
0=H_1^*(v_1\otimes v_2\otimes w_1)H_2^*(v_1\otimes v_2\otimes w_1) H_4^*(v_2\otimes w_2\otimes w_2)\\
+ H_1^*(v_1\otimes v_2\otimes w_2)H_2^*(v_1\otimes v_2\otimes w_2) H_4^*(v_2\otimes w_1\otimes w_1)\\
-H_1^*(v_1\otimes v_2\otimes w_1)H_2^*(v_1\otimes v_2\otimes w_2) H_4^*(v_2\otimes w_2\otimes w_1)\\
-H_1^*(v_1\otimes v_2\otimes w_2)H_2^*(v_1\otimes v_2\otimes w_1) H_4^*(v_2\otimes w_1\otimes w_2)\,.
\end{align*}
\end{example}

\subsection{Generic constraints}\label{sscn:generic}

Remark~\ref{rem:after} outlined the high-level strategy for computing a (product-state) solution to an input $k$-QSAT system $\Pi$. For this strategy to work, however, we require an assignment to the foundation of the transfer filtration which (1) satisfies the qualifier functions from the Qualifier Lemma, and (2) causes the transfer functions $g_i$ from the Transfer Lemma to output non-zero vectors. When are (1) and (2) possible? We now answer this question affirmatively for a non-trivial class of $k$-QSAT instances, assuming constraints are chosen generically. Aside: In this section alone, the terminology of projective algebraic geometry (AG) is used briefly to define generic constraints; otherwise, the content is intended to be accessible without a background in AG.

\begin{rem}[Generic constraints]
Let $G$ be a $k$-uniform hypergraph. The set of $k$-local constraints $H$ with interaction graph $G$ is canonically identified with the projective variety $\X_G(\K)=({\mathbb P}^{2^k-1}(\K))^m$. (We remark the same variety was used in~\cite{LLMSS10}.) We say that a property holds for the {\it generic constraint with interaction graph $G$} if it holds for every $k$-local constraint on a Zariski open set of $\X_G(\K)$. In the important case $\mathbb K=\mathbb C$, this implies in particular that such a property holds for almost all choices of constraints (with respect to the natural measure on $\X_G(\C)$ induced by the Fubini-Study metric). Note that we allow different (generic) constraints on each edge (i.e. the Hamiltonian is not necessarily translation-invariant). \qi\ To give some intuition, a generic $2$-qubit constraint $\ket{\psi}\in\complex^2\otimes\complex^2$ would be any entangled $\ket{\psi}$ (since the set of two-qubit states with a Schmidt coefficient of zero forms a zero measure set). In the case of $2$-QSAT, the upshot of this is that if $\ket{\psi}$ is a constraint on qubits $i$ and $j$, then given any assignment $\ket{\phi_1}$ to qubit $i$, the transfer matrix $T_{ij}$ corresponding to $\ket{\psi}$ is full-rank, meaning $T_{ij}\ket{\phi_1}\neq 0$ (and additionally, by definition of a transfer matrix, $\bra{\psi}(I\otimes T_{ij})\ket{\phi}_i\otimes\ket{\phi}_j=0$). In other words, an assignment on qubit $i$ is guaranteed to propagate to a valid assignment on qubit $j$. Note that classical $k$-SAT constraints are not generic, as they represent a finite set in the variety of all constraints (and in particular, correspond to rank-$1$ projections onto product states). Nevertheless, as we prove in Theorem~\ref{thm:general}, the hypergraphs to which our main theorem applies (Theorem~\ref{theorem}) have an SDR; and a satisfying assignment to any classical $k$-SAT instance with an SDR is easy to compute (as mentioned in Section~\ref{scn:intro}).
\end{rem}

\begin{definition}
Let $G_0\subseteq G_1\subseteq \cdots\subseteq G$ be a transfer filtration on a $k$-uniform hypergraph $G$ whose vertices $\{1,\ldots,n\}$ are ordered in such a way that if $i\in V(G_j)\setminus V(G_{j-1})$, then $i-1\in V(G_{j-1})$. Given $i,i'\in \{1,\ldots,n\}$ we say that $i'$ is a {\it successor of $i$} (and $i$ is a {\it predecessor} of $i'$) if there exists sequences $i_1,\ldots,i_l\in V(G)$ and $E_1,\ldots,E_{l+1}\in E(G)$ such that $i\le i_1\le \cdots\le i_l\le i'$, $i\in E_1$, $i'\in E_{l+1}$ and  $i_j\in E_{j}\cap E_{j+1}$ for all $j\in \{1,\ldots,l\}$.
\end{definition}

\begin{example}\label{ex:4cycle8}
Let $G$ be as in Example \ref{ex:4cycle8}. Then $3$ is a predecessor of $4$ and a successor of both $1$ and $2$.
\end{example}

Before showing the main theorem of this section (Theorem~\ref{theorem}), we first require the following lemma, which shows that under certain conditions, the transfer functions $g_i$ of the Transfer Lemma satisfy a surjectivity criterion. \qi\ This is roughly a generalization of the idea from $2$-QSAT that when all constraints all entangled, then all transfer matrices are full rank (and hence are surjective maps).

\begin{lemma}[Surjectivity Lemma]\label{l:surjectivity}
Let $G$ be a $k$-uniform hypergraph of transfer type $b$ with $n$ vertices and $n-b$ edges. Let $H$ be a generic $k$-local constraint in $\X_G(\K)$, let $g_1,\ldots,g_n:W^b\to W$ be transfer functions as in the Transfer Lemma (Lemma~\ref{prop:7}) and choose non-zero $v_1,\ldots,v_{b-1}\in W$. For each $i$, define $\gamma_i:W\to W$ such that
\begin{equation}\label{eq:gamma}
\gamma_i(w):=g_i(v_1,\ldots,v_{b-1},w)
\end{equation}
for all $w\in W$. If $i$ is a successor of $b$, then $\gamma_i$ is surjective.
\end{lemma}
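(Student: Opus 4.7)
The plan is to prove the lemma by induction along the transfer filtration $G_0 \subseteq \cdots \subseteq G_{n-b} = G$, paralleling the inductive construction of the transfer functions $g_i$ in the proof of Lemma~\ref{prop:7}. The base case is $i = b$: the construction gives $g_b(v_1,\ldots,v_b) = v_b$, so $\gamma_b(w) = w$ is the identity on $W$, which is trivially surjective.

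For the inductive step, let $n > b$ be a successor of $b$, and suppose the conclusion holds for every predecessor of $n$. Write the edge that introduces $n$ into the filtration as $E_{n-b} = \{n, i_1, \ldots, i_{k-1}\}$. Unwinding the definition of successor, the fact that $n$ is a successor of $b$ forces at least one $i_j$ to be equal to $b$ or itself a successor of $b$; by the inductive hypothesis the corresponding $\gamma_{i_j}$ is surjective, in particular non-constant. The Transfer Lemma gives the key identity
\[
\gamma_n^\sharp(w)(v) \;=\; H^*_{n-b}\bigl(\gamma_{i_1}(w) \otimes \cdots \otimes \gamma_{i_{k-1}}(w) \otimes v\bigr),
\]
so writing $\phi(w) := \gamma_{i_1}(w) \otimes \cdots \otimes \gamma_{i_{k-1}}(w) \in W^{\otimes(k-1)}$, $\gamma_n$ is obtained by contracting $H^*_{n-b}$ against $\phi(w)$ on the first $k-1$ slots and pulling back via $\sharp$. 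Among the factors, any $\gamma_{i_j}$ whose index is not a successor of $b$ is constant in $w$: for $i_j < b$ this constant is $v_{i_j} \ne 0$ by hypothesis, and for $b < i_j < n$ not a successor of $b$ an auxiliary induction (together with an explicit witness $H$ for which the relevant transfer function is nonvanishing at $(v_1,\ldots,v_{b-1})$) shows that $g_{i_j}(v_1,\ldots,v_{b-1})$ is not identically zero as a polynomial in the coordinates of $H$, and hence is nonzero for generic $H$. Combined with the non-constancy of at least one factor, this shows that $\phi : W \to W^{\otimes(k-1)}$ is a non-constant polynomial map.

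It remains to show that for generic $H^*_{n-b}$ the image of $\gamma_n : W \to W$ spans $W$, and then to promote this to surjectivity. The condition that $\mathrm{im}(\gamma_n)$ is contained in a single line through the origin of $W$ is a closed algebraic condition on $H^*_{n-b} \in \mathbb{P}^{2^k-1}$; to see that it is proper, one chooses $w, w' \in W$ at which $\phi(w)$ and $\phi(w')$ are linearly independent (possible by non-constancy of $\phi$) and exhibits a single $H^*_{n-b}$ pairing $\phi(w) \otimes \cdot$ and $\phi(w') \otimes \cdot$ with two linearly independent functionals in $W^\vee$; pulling back via $\sharp$ yields $\gamma_n(w)$ and $\gamma_n(w')$ linearly independent in $W$. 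A generic $H^*_{n-b}$ therefore lies outside this subvariety. Writing $\gamma_n = (P, Q)$ with $P, Q$ homogeneous of degree $d_{n,b} \geq 1$ (positivity of the degree follows inductively from the dependence of $\gamma_{i_j}$ on $w$ for the successor index), non-proportionality of $P$ and $Q$ says the induced rational map on $\mathbb{P}(W) \cong \mathbb{P}^1$ is non-constant, and any non-constant rational map of $\mathbb{P}^1$ to itself is a surjective morphism. Combined with $\gamma_n(0) = 0$ and the homogeneity relation $\gamma_n(\lambda w) = \lambda^{d_{n,b}} \gamma_n(w)$, surjectivity of the projective map lifts to surjectivity of $\gamma_n : W \to W$.

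The main obstacle I anticipate is the non-vanishing statement for $g_{i_j}(v_1,\ldots,v_{b-1})$ when $i_j$ lies strictly between $b$ and $n$ and is not a successor of $b$. Because the lemma fixes $v_1,\ldots,v_{b-1}$ nonzero but otherwise arbitrary, the genericity of $H$ has to absorb dependence on this fixed choice, which requires a careful sub-induction along the filtration together with an explicit witness construction of constraints for which all intermediate transfer functions remain nonvanishing at $(v_1,\ldots,v_{b-1})$; without this, the tensor factor $\phi(w)$ could collapse to zero and the contraction argument above would be vacuous.
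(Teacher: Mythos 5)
Your proposal follows essentially the same route as the paper's proof: induction along the transfer filtration with base case $\gamma_b=\mathrm{id}$, using surjectivity of a predecessor factor $\gamma_{i_1}$ to produce two points whose images are linearly independent, and then genericity of $H^*_{n-b}$ (via the recursion defining $g_n^\sharp$) to rule out $\gamma_n$ having all values proportional, concluding surjectivity from the fixed degree/homogeneity of $\gamma_n$. The additional care you take about non-vanishing of the constant tensor factors and about lifting surjectivity from $\mathbb{P}(W)$ to $W$ only fills in details the paper glosses over (cf.\ Remark~\ref{rem:non-zero}), so this is the same argument.
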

\begin{proof} Up to a permutation of $\{1,\ldots,b\}$, we may assume that $n$ is a successor of $b$. By induction we may assume the claim has been proved for hypergraphs with $n-1$ vertices (the base case $i=b$ follows since we can take $g_b$ to be the identity, see the first line of the proof of Lemma~\ref{prop:7}). Therefore, a generic choice of $H_1,\ldots,H_{n-1-b}$ leads to surjective maps $\gamma_i$ for every successor of $b$. If the edge that contains vertex $n$ is $\{i_1,\ldots,i_{k-1},n\}$, then at least one other vertex, say $i_1$, is also a successor of $b$. Since $\gamma_{i_1}$ is surjective by the induction hypothesis, we can choose $w',w''\in W$ such that $\gamma_{i_1}(w')$ and $\gamma_{i_1}(w'')$ are linearly independent. Suppose that $\gamma_n$ is not surjective. Since it has definite (i.e. well-defined) degree, this implies $\gamma_n(w')$ and $\gamma_n(w'')$ are not linearly independent. By \eqref{eq:prop} this means that $H_{n-b}^*$ vanishes at two prescribed, linearly independent vectors. Since this condition is not generic, we have a contradiction. Thus, generically $\gamma_n$ is surjective.
\end{proof}
\begin{rem}\label{rem:F}
For each $i,j$, consider the function $\Gamma_{ij}:W\to \K$ defined by setting
\[
\Gamma_{ij}(w):=(\gamma_i(w))^\sharp(\gamma_j(w))
\]
for all $x\in \mathbb K$. It is easy to see by induction that if $w',w''\in W$ are linearly independent, then $P_{ij}(x)=\Gamma_{ij}(w'+xw'')$ is a univariate polynomial in $x$ with coefficients in $\mathbb K$. Furthermore, if the constraint $H$ is chosen generically and either $i$ or $j$ is a successor of $b$ then $P_{ij}$ is not a non-zero constant.
\end{rem}
\begin{figure}[t]
\begin{center}
\resizebox{220pt}{!}{%
\begin{tikzpicture}
[inner sep=0pt, minimum size = 0pt]
	
	\coordinate (v1) at (-2,0) {};
	\coordinate (v2) at (-1,0) {};
	\coordinate (v3) at (0,0) {};
	\coordinate (v4) at (1,0) {};
	\coordinate (v5) at (2,0) {};
	\coordinate (v6) at (1.5,-1) {};
	\coordinate (v7) at (1,-2) {};
	\coordinate (v8) at (0.5,-3) {};
	\coordinate (v9) at (0,-4) {};
	\coordinate (v10) at (-0.5,-3) {};
	\coordinate (v11) at (-1,-2) {};
	\coordinate (v12) at (-1.5,-1) {};

	\coordinate (v13) at (-1.5,1) {};
	\coordinate (v14) at (-0.5,1) {};
	\coordinate (v15) at (0.5,1) {};
	\coordinate (v16) at (1.5,1) {};
	\coordinate (v17) at (1,2) {};
	\coordinate (v18) at (0,2) {};
	\coordinate (v19) at (-1,2) {};

	\coordinate (v20) at (-0.5,3) {};
	\coordinate (v21) at (0.5,3) {};

	\coordinate (v22) at (0,4) {};

	\coordinate (v23) at (-2.5,-1) {};
	\coordinate (v24) at (-3,-2) {};
	\coordinate (v25) at (-2,-2) {};
	\coordinate (v26) at (-1.5,-3) {};

	\coordinate (v27) at (-2.5,-3) {};
	\coordinate (v28) at (-3.5,-3) {};
	\coordinate (v29) at (-4,-4) {};
	\coordinate (v30) at (-3,-4) {};		
	\coordinate (v31) at (-2,-4) {};
	\coordinate (v32) at (-1,-4) {};
	
	\coordinate (v33) at (2.5,-1) {};
	\coordinate (v34) at (3,-2) {};	
	\coordinate (v35) at (2,-2) {};		
	
	\coordinate (v36) at (1.5,-3) {};		
	\coordinate (v37) at (2.5,-3) {};
	\coordinate (v38) at (3.5,-3) {};
	\coordinate (v39) at (1,-4) {};	
	\coordinate (v40) at (2,-4) {};	
	\coordinate (v41) at (3,-4) {};	
	\coordinate (v42) at (4,-4) {};

	\foreach \v in {v1,v2,v3,v4,v5,v6,v7,v8,v9,v10,v11,v12,v13,v14,v15,v16,v17,v18,v19,v20,v21,v22,v23,v24,v25,v26,v27,v28,v29,v30,v31,v32,v33,v34,v35,v36,v37,v38,v39,v40,v41,v42}{
      \node [circle, minimum size=0.4cm, line width=0pt] (\v') at (\v) {};
    }
	\filldraw [draw=black, fill=yellow, opacity=0.2]
		(v2'.30) -- (v13'.30) arc (30:150:0.2cm)
		-- (v1'.150) arc (150:270:0.2)
		-- (v2'.270) arc (-90:30:0.2) -- cycle;
	\filldraw [draw=black, fill=yellow, opacity=0.2]
		(v3'.30) -- (v14'.30) arc (30:150:0.2cm)
		-- (v2'.150) arc (150:270:0.2)
		-- (v3'.270) arc (-90:30:0.2) -- cycle;
	\filldraw [draw=black, fill=yellow, opacity=0.2]
		(v4'.30) -- (v15'.30) arc (30:150:0.2cm)
		-- (v3'.150) arc (150:270:0.2)
		-- (v4'.270) arc (-90:30:0.2) -- cycle;
	\filldraw [draw=black, fill=yellow, opacity=0.2]
		(v5'.30) -- (v16'.30) arc (30:150:0.2cm)
		-- (v4'.150) arc (150:270:0.2)
		-- (v5'.270) arc (-90:30:0.2) -- cycle;
	\filldraw [draw=black, fill=yellow, opacity=0.2]
		(v16'.30) -- (v17'.30) arc (30:150:0.2cm)
		-- (v15'.150) arc (150:270:0.2)
		-- (v16'.270) arc (-90:30:0.2) -- cycle;
		-- (v5'.270) arc (-90:30:0.2) -- cycle;
	\filldraw [draw=black, fill=yellow, opacity=0.2]
		(v15'.30) -- (v18'.30) arc (30:150:0.2cm)
		-- (v14'.150) arc (150:270:0.2)
		-- (v15'.270) arc (-90:30:0.2) -- cycle;
	
	\filldraw [draw=black, fill=yellow, opacity=0.2]
		(v14'.30) -- (v19'.30) arc (30:150:0.2cm)
		-- (v13'.150) arc (150:270:0.2)
		-- (v14'.270) arc (-90:30:0.2) -- cycle;
		\filldraw [draw=black, fill=yellow, opacity=0.2]
		(v17'.30) -- (v21'.30) arc (30:150:0.2cm)
		-- (v18'.150) arc (150:270:0.2)
		-- (v17'.270) arc (-90:30:0.2) -- cycle;
	\filldraw [draw=black, fill=yellow, opacity=0.2]
		(v18'.30) -- (v20'.30) arc (30:150:0.2cm)
		-- (v19'.150) arc (150:270:0.2)
		-- (v18'.270) arc (-90:30:0.2) -- cycle;
	\filldraw [draw=black, fill=yellow, opacity=0.2]
		(v21'.30) -- (v22'.30) arc (30:150:0.2cm)
		-- (v20'.150) arc (150:270:0.2)
		-- (v21'.270) arc (-90:30:0.2) -- cycle;
	\filldraw [draw=black, fill=red, opacity=0.2]
		(v12'.30) -- (v1'.30) arc (30:150:0.2cm)
		-- (v23'.150) arc (150:270:0.2)
		-- (v12'.270) arc (-90:30:0.2) -- cycle;		
	\filldraw [draw=black, fill=red, opacity=0.2]
		(v25'.30) -- (v23'.30) arc (30:150:0.2cm)
		-- (v24'.150) arc (150:270:0.2)
		-- (v25'.270) arc (-90:30:0.2) -- cycle;		
	\filldraw [draw=black, fill=red, opacity=0.2]
		(v27'.30) -- (v24'.30) arc (30:150:0.2cm)
		-- (v28'.150) arc (150:270:0.2)
		-- (v27'.270) arc (-90:30:0.2) -- cycle;
	\filldraw [draw=black, fill=red, opacity=0.2]
		(v30'.30) -- (v28'.30) arc (30:150:0.2cm)
		-- (v29'.150) arc (150:270:0.2)
		-- (v30'.270) arc (-90:30:0.2) -- cycle;
	\filldraw [draw=black, fill=red, opacity=0.2]
		(v31'.30) -- (v27'.30) arc (30:150:0.2cm)
		-- (v30'.150) arc (150:270:0.2)
		-- (v31'.270) arc (-90:30:0.2) -- cycle;
	\filldraw [draw=black, fill=red, opacity=0.2]
		(v32'.30) -- (v26'.30) arc (30:150:0.2cm)
		-- (v31'.150) arc (150:270:0.2)
		-- (v32'.270) arc (-90:30:0.2) -- cycle;
	\filldraw [draw=black, fill=red, opacity=0.2]
		(v9'.30) -- (v10'.30) arc (30:150:0.2cm)
		-- (v32'.150) arc (150:270:0.2)
		-- (v9'.270) arc (-90:30:0.2) -- cycle;
	\filldraw [draw=black, fill=red, opacity=0.2]
		(v10'.30) -- (v11'.30) arc (30:150:0.2cm)
		-- (v26'.150) arc (150:270:0.2)
		-- (v10'.270) arc (-90:30:0.2) -- cycle;	
	\filldraw [draw=black, fill=red, opacity=0.2]
		(v26'.30) -- (v25'.30) arc (30:150:0.2cm)
		-- (v27'.150) arc (150:270:0.2)
		-- (v26'.270) arc (-90:30:0.2) -- cycle;	
	\filldraw [draw=black, fill=red, opacity=0.2]
		(v11'.30) -- (v12'.30) arc (30:150:0.2cm)
		-- (v25'.150) arc (150:270:0.2)
		-- (v11'.270) arc (-90:30:0.2) -- cycle;
	
	\filldraw [draw=black, fill=green, opacity=0.2]
		(v33'.30) -- (v5'.30) arc (30:150:0.2cm)
		-- (v6'.150) arc (150:270:0.2)
		-- (v33'.270) arc (-90:30:0.2) -- cycle;
	\filldraw [draw=black, fill=green, opacity=0.2]
		(v34'.30) -- (v33'.30) arc (30:150:0.2cm)
		-- (v35'.150) arc (150:270:0.2)
		-- (v34'.270) arc (-90:30:0.2) -- cycle;	
	\filldraw [draw=black, fill=green, opacity=0.2]
		(v35'.30) -- (v6'.30) arc (30:150:0.2cm)
		-- (v7'.150) arc (150:270:0.2)
		-- (v35'.270) arc (-90:30:0.2) -- cycle;
	\filldraw [draw=black, fill=green, opacity=0.2]
		(v38'.30) -- (v34'.30) arc (30:150:0.2cm)
		-- (v37'.150) arc (150:270:0.2)
		-- (v38'.270) arc (-90:30:0.2) -- cycle;	
	\filldraw [draw=black, fill=green, opacity=0.2]
		(v37'.30) -- (v35'.30) arc (30:150:0.2cm)
		-- (v36'.150) arc (150:270:0.2)
		-- (v37'.270) arc (-90:30:0.2) -- cycle;
	\filldraw [draw=black, fill=green, opacity=0.2]
		(v36'.30) -- (v7'.30) arc (30:150:0.2cm)
		-- (v8'.150) arc (150:270:0.2)
		-- (v36'.270) arc (-90:30:0.2) -- cycle;	
	\filldraw [draw=black, fill=green, opacity=0.2]
		(v42'.30) -- (v38'.30) arc (30:150:0.2cm)
		-- (v41'.150) arc (150:270:0.2)
		-- (v42'.270) arc (-90:30:0.2) -- cycle;
	\filldraw [draw=black, fill=green, opacity=0.2]
		(v41'.30) -- (v37'.30) arc (30:150:0.2cm)
		-- (v40'.150) arc (150:270:0.2)
		-- (v41'.270) arc (-90:30:0.2) -- cycle;
	\filldraw [draw=black, fill=green, opacity=0.2]
		(v40'.30) -- (v36'.30) arc (30:150:0.2cm)
		-- (v39'.150) arc (150:270:0.2)
		-- (v40'.270) arc (-90:30:0.2) -- cycle;
	\filldraw [draw=black, fill=green, opacity=0.2]
		(v39'.30) -- (v8'.30) arc (30:150:0.2cm)
		-- (v9'.150) arc (150:270:0.2)
		-- (v39'.270) arc (-90:30:0.2) -- cycle;	
	
   \filldraw [draw=black,fill=blue, opacity=0.2, line width=0.5pt, even odd rule]
		plot[smooth cycle, tension=1] coordinates { (v22'.90) (-3,1) (v29'.180) (0,-5.2) (v42'.0) (3,1)}
		plot[smooth cycle, tension=1] coordinates { (v22'.270) (-2.8,0.8) (v29'.90) (0,-5) (v42'.90) (2.8,0.8)}
		;

    \foreach \l in {13,...,42}{
      \filldraw [black] (v\l) circle (1pt) node [inner sep=5pt, label=below:$v_{\l}$] {};
    }
    \foreach \l in {1,...,12}{
      \filldraw [blue] (v\l) node [star, fill, star points=5, star point ratio=2.25, draw,inner sep=1pt] {};
	\filldraw [black] (v\l) node [inner sep=5pt ,label=below:$v_{\l}$]{};
    }
\end{tikzpicture}
}
\end{center}
\caption{(Fir tree) A $3$-uniform hypergraph $G$ of transfer type $b$ to which Theorem~\ref{theorem} applies. $G$ has $m=31$ edges, $n=42$ vertices, and foundation size $b=12$ (the vertices marked with a star are foundation vertices). The hypergraph's name was chosen due to its (vague) resemblance to a fir tree.}
\label{fig:firtree}
\end{figure}
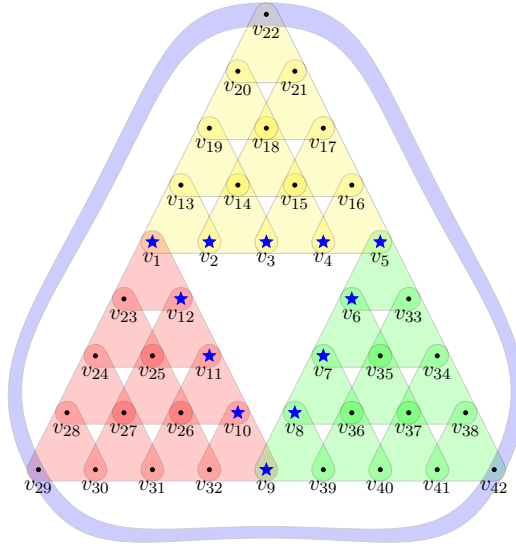
\paragraph{The main theorem.} We now show the main theorem of this section, which applies to $k$-uniform hypergraphs of transfer type $b=n-m+1$. The latter includes non-trivial instances which we discuss in Section~\ref{sscn:runtime} (along with an example of exponential speedup of our parameterized algorithm over brute diagonalization). In words, the theorem says that for any $k$-uniform hypergraph of transfer type $b=n-m+1$ (i.e. there is precisely one qualifier function $h_1$), if the constraints are chosen generically, then any zero of $h_1$ is the image under the map $\pi$ (defined in Qualifier Lemma) of a satisfying assignment to the corresponding $k$-QSAT instance. The key advantage to this approach is simple: To solve the $k$-QSAT instance, instead of solving a system of equations, we are reduced to solving for the roots of just one polynomial --- $h_1$. Moreover, if both the foundation size $b$ and the radius of the transfer filtration of $G$ scale as $O(\log m+\log n)$, then $h_1$ has polynomial degree in $m$ and $n$.

\begin{theorem}\label{theorem}
Let $\K$ be algebraically closed, and let $\spa{F}$ denote the set of $k$-uniform hypergraphs with $n$ vertices, $m$ edges, and transfer type $b=n-m+1$. If $H$ is a generic $k$-local constraint with interaction graph $G\in\spa{F}$ and $h_1$ and $\pi$ are as in the Qualifier Lemma (Lemma \ref{l:qualifier}), then $(h_1\circ \pi)^{-1}(0)\cap \S_H$ is nonempty.
\end{theorem}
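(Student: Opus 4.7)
The plan is to show $\S_H \neq \emptyset$; this suffices because by part (1) of Lemma \ref{l:qualifier}, $h_1 \circ \pi$ vanishes on $\S_H$, so $\S_H \subseteq (h_1 \circ \pi)^{-1}(0)$. Since $b = n - m + 1$ gives $m - n + b = 1$, there is exactly one qualifier $h_1$, and by Remark \ref{rem:after} it suffices to produce a non-zero foundation assignment $(v_1, \ldots, v_b) \in W^b$ satisfying $h_1(v_1, \ldots, v_b) = 0$ together with $g_{i_1}(v_1, \ldots, v_b) \neq 0$; the Transfer Lemma (via Remark \ref{rem:after}) will then complete this to a full product solution in $\S_H$.

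To construct such an assignment, first permute the foundation indices $\{1, \ldots, b\}$ so that $i_1$ is a successor of $b$. The hypothesis $b = n - m + 1$ ensures this is possible: it forces the Decoupling Lemma to produce the blown-up hypergraph $\widetilde{G_H}$ with exactly one decoupled index-pair $(\underline{i_1}, i_1)$, and the position of this pair in the transfer filtration allows reachability from some foundation vertex via a chain of filtration-ordered edges. Fix non-zero $v_1, \ldots, v_{b-1} \in W$ generically, pick linearly independent $w', w'' \in W$, and set
\[
P(x) \;:=\; h_1(v_1, \ldots, v_{b-1}, w' + xw'') \;=\; \Gamma_{i_1, \underline{i_1}}(w' + xw'').
\]
By Remark \ref{rem:F} applied with $i = i_1$, $j = \underline{i_1}$, for generic $H$ the polynomial $P$ is not a non-zero constant. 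Since $\K$ is algebraically closed, either $P \equiv 0$ (and every $x$ is a root) or $P$ has positive degree and at least one root $x_0 \in \K$.

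It remains to show that some such $x_0$ yields both $v_b := w' + x_0 w'' \neq 0$ and $\gamma_{i_1}(w' + x_0 w'') \neq 0$. By the Surjectivity Lemma (Lemma \ref{l:surjectivity}), $\gamma_{i_1}: W \to W$ is surjective for generic $H$; since $\gamma_{i_1}$ is homogeneous of positive degree and $\K$ is algebraically closed, surjectivity forces $\gamma_{i_1}^{-1}(0) = \{0\}$ (otherwise the projectivized map $[\gamma_{i_1}]: \mathbb{P}^1 \dashrightarrow \mathbb{P}^1$ would have a base point, preventing its affine cone from hitting all of $W$). Thus $\gamma_{i_1}(w' + xw'') \neq 0$ whenever $w' + xw'' \neq 0$, and the latter fails at most at one value of $x$. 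If $P \not\equiv 0$, a straightforward genericity argument shows this single forbidden value is not a root of $P$, so any root $x_0$ works; if $P \equiv 0$, pick any $x_0$ distinct from the forbidden value. Setting $v_b := w' + x_0 w''$, the tuple $(v_1, \ldots, v_b)$ satisfies the required conditions, and Remark \ref{rem:after} extends it to $(v_1, \ldots, v_n) \in \S_H$, as desired.

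The main obstacle is the permutation/combinatorial step in the second paragraph, where one must verify that under $b = n - m + 1$ the decoupled pair $(\underline{i_1}, i_1)$ is reachable from a foundation vertex through a chain of filtration-ordered edges meeting the technical definition of ``successor.'' This calls for a careful walk through the Decoupling Lemma's indexing scheme, exploiting the fact that $b = n - m + 1$ leaves the transfer filtration with the minimal possible amount of ``loop-closing'' structure (a single extra edge that revisits a previously-assigned vertex-copy), so that the unique qualifier's underlying pair sits in a controlled position relative to the foundation.
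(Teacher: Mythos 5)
Your reduction to finding a foundation assignment with $h_1=0$ via the univariate slice $P(x)=h_1(v_1,\ldots,v_{b-1},w'+xw'')$ and algebraic closedness is the same first step as the paper (Remark~\ref{rem:reduction}), but the proof has a genuine gap exactly where you dismiss the degenerate case. Your key claim --- that surjectivity of $\gamma_{i_1}$ (Lemma~\ref{l:surjectivity}) plus homogeneity over algebraically closed $\K$ forces $\gamma_{i_1}^{-1}(0)=\{0\}$ --- is false: the map $\gamma(x,y)=\bigl(x(y^2-x^2),\,x^2y\bigr)$ is homogeneous of degree $3$ and surjective as a map $\K^2\to\K^2$ for $\K$ algebraically closed, yet it vanishes at $(0,1)$; a base point of the projectivization does not obstruct surjectivity of the affine map, because the remaining points of the fibre over the image direction can cover that direction. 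Moreover, the ``forbidden value'' you worry about ($w'+xw''=0$) never occurs for linearly independent $w',w''$; the real danger is different and worse: every $x$ at which $\gamma_{i_1}$ (or any intermediate $\gamma_j$ with $j$ a predecessor of $i_1$) vanishes is automatically a root of $P$, since $P=(\gamma_{i_1})^\sharp(\gamma_{\underline{i_1}})$. So nothing you have written rules out that \emph{all} roots of $P$ lie in the vanishing locus of the transfer functions, in which case the propagation dies, Remark~\ref{rem:after} cannot be invoked, and your ``straightforward genericity argument'' is precisely the unproved crux. (Your second paragraph also concedes that the permutation making $i_1$ a successor of $b$ is unverified.)

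The paper does not try to select a better root at all; handling the case $g_{i_1}(\hat v_1,\ldots,\hat v_b)=0$ is the actual content of its proof. It takes any root of $h_1$; if $g_{i_1}\neq 0$ there it concludes as you do, and otherwise it performs surgery on the transfer filtration: let $j$ be the minimal predecessor of $i_1$ with $g_j(\hat v_1,\ldots,\hat v_b)=0$, observe via Equation~(\ref{eq:sharp}) that the edge $\widetilde{E_{j-b}}$ which introduced $j$ is then satisfied regardless of what is assigned to vertex $j$, delete that edge, promote $j$ to the foundation as a free variable $w=w'+xw''$, rebuild the transfer functions $\gamma_{i,F}$ on the resulting hypergraph $F$, and use algebraic closedness a second time to solve the single consistency polynomial $P_{i_1\underline{i_1}}(x)$ in the freed variable; nonvanishing of the resulting assignments is handled by Remark~\ref{rem:non-zero} (factoring out common factors, i.e.\ working projectively), not by a surjectivity-implies-no-zeros principle. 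To rescue your route you would have to either genuinely prove the genericity statement that some root of $P$ avoids the zero loci of all relevant $\gamma_j$, or reproduce something equivalent to this re-rooting argument; as written, the proposal is incomplete at exactly the point the theorem is nontrivial.
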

\begin{proof}
Let $\widetilde{G}$ and $g_1,\ldots,g_{n+1}$ as in the proof of the Qualifier Lemma (Lemma \ref{l:qualifier}). Since $\K$ is algebraically closed, by the discussion in Remark~\ref{rem:reduction}, there exist $\hat{v}_1,\ldots, \hat{v}_b\in W$ such that $h_1(\hat{v}_1,\ldots,\hat{v}_b)=0$. As shown in the previous section, if additionally $g_{i_1}(\hat{v}_1,\ldots,\hat{v}_b)\neq 0$, then $\pi^{-1}(\hat{v}_1,\ldots,\hat{v}_b)\cap \S_H\neq\emptyset$, as desired. Thus, we are left to show that the same conclusion holds even if $g_{i_1}(\hat{v}_1,\ldots,\hat{v}_b)=0$ (which immediately implies $h_1(\hat{v}_1,\ldots,\hat{v}_b)=0$).

Let $j\in \{b+1,\ldots,n+1\}$ be minimal among those predecessors of $i_1$ such that $g_j(\hat{v}_1,\ldots,\hat{v}_b)=0$. For a sufficiently generic choice of constraint we may assume that $g_l(\hat{v}_1,\ldots,\hat{v}_b)\neq 0$ for any $l$ that is not a successor of $j$. Our strategy is now to \emph{remove} the edge in $\widetilde{G}$ which was added in the same step of the filtration in which vertex $j$ was added, namely edge $\widetilde{E_{j-b}}$. We then instead add vertex $j$ directly to the foundation. Formally, consider hypergraph $F$ obtained from $\widetilde{G}$ by removing edge $\widetilde{E_{j-b}}$, and consider transfer filtration $F_0\subseteq F_1\subseteq \cdots \subseteq F_{m-1}=F$ where: \begin{itemize}
    \item (Add $j$ to the foundation) $V(F_i)=V(\widetilde{G_i})\cup\{j\}$ for all $i\in\set{0,\ldots, j-b-1}$ and $V(F_i)=V(\widetilde{G_{i+1}})$ for all $i\in \set{j-b,\ldots, m-1}$,
    \item (Remove edge $\widetilde{E_{j-b}}$) $E(F_i)=E(\widetilde{G_i})$ for all $i\in \set{0,\ldots,j-b-1}$ and $E(F_i)=E(\widetilde{G_{i+1}})\setminus\{\widetilde{E_{j-b}}\}$ for all $i\in \set{j-b,\ldots, m-1}$.
\end{itemize}
The hypergraph $F$ thus has $n_F:=n$ vertices, $m_F:=m-1=n-b$ edges, and foundation of size $b_F:=b+1=n-m+2$. Thus, $m_F>n_F-b_F$, implying we can apply the Qualifier Lemma (Lemma~\ref{l:qualifier}) and the Surjectivity Lemma (Lemma \ref{l:surjectivity}) to obtain transfer functions $f_1,\ldots,f_{n_F}:W^{b_F}\mapsto W$ and $\gamma_{1,F},\ldots,\gamma_{n_F,F}:W\mapsto W$ (based on $f_i$ instead of $g_i$). Two comments are in order here: First, each $\gamma_{i,F}$ depends on $\hat{v}_1,\ldots, \hat{v}_b$ (which are fixed) and parameter $w$, which corresponds to the vertex $j$ added to the foundation in the construction of $F$. Second, for $i\in\set{0,\ldots,j-b-1}$, we have by construction that for all $w\in W$, $f_i(\hat{v}_1,\ldots, \hat{v}_b, w)=g_i(\hat{v}_1,\ldots,\hat{v}_b)$ --- this is because each corresponding $\widetilde{G_i}$ (i.e. the first $j-b$ ``steps'' of the original transfer filtration) does not depend on vertex $j$.

Now recall from Remark~\ref{rem:F} the polynomial
\[
    P_{i_1\underline{i_1}}(x):=\Gamma_{i_1,\underline{i_1}}(w'+xw'')=(\gamma_{i_1,F}(w'+xw''))^\sharp(\gamma_{\underline{i_1},F}(w'+xw''))\in \K[x].
\]
Since $\K$ is algebraically closed, $P_{i_1\underline{i_1}}(x)$ has a root $\hat{x}$ in $\K$. Thus, by Remark \ref{rem:7}, there exists $\lambda\in \K$ such that $\gamma_{i_1,F}(w'+\hat{x}w'')=\lambda \gamma_{\underline{i_1},F}(w'+\hat{x}w'')$ (i.e. both decoupled vertices $i_1$ and $\underline{i_1}$ can be given the same assignment). Therefore, if we set ${v}_i=\gamma_{{i},F}(w'+\hat{x}w'')$ for all $i\in\{1,\ldots,n\}$ (note we may assume by Remark~\ref{rem:non-zero} that each $v_i\neq 0$), then (1) vertices $i_1$ and $\underline{i_1}$ receive consistent assignments, (2) edge $\widetilde{E_{j-b}}$ is satisfied by the second remark above that $f_i(\hat{v}_1,\ldots, \hat{v}_b, w)=g_i(\hat{v}_1,\ldots,\hat{v}_b)$ for $i\in\set{0,\ldots,j-b-1}$ (here we also use the fact that by Equation~(\ref{eq:sharp}), edge $\widetilde{E_{j-b}}$ was satisfied by foundation assignment $(\hat{v}_1,\ldots, \hat{v}_b)$ regardless of the assignment to vertex $i_1$), and (3) $({v}_1,\ldots,{v}_n)$ satisfies all constraints in hypergraph $F$. It follows that $({v}_1,\ldots,{v}_n)\in \pi^{-1}(\hat{v_1},\ldots,\hat{v_b})\cap \S_H$, as required.
\end{proof}

\begin{rem}[The maps $\gamma_i(w)$ have no zeroes]\label{rem:non-zero}
The maps $\gamma_i(w)$ from the Surjectivity Lemma (Lemma~\ref{l:surjectivity}) can without loss of generality be taken to have no zeroes, i.e. $\nexists w\in W$ such that $\gamma_i(w)=0$. To see this, consider $\gamma_i(w'+xw''):W\mapsto W$ for $x\in \K$ as in Remark~\ref{rem:F}. Since $w'$ and $w''$ are linearly independent, we can write $\gamma_i(w'+xw'')=p(x)w'+q(x)w''$ for polynomials $p,q\in \K[x]$. In order for the right hand size to vanish on $x_0$, since $\K$ is algebraically closed, $p$ and $q$ must share a common factor $(x-x_0)$. Factoring out $(x-x_0)$, we obtain new polynomials $p'$ and $q'$, and redefine $\gamma'_i(w'+xw'')=p'(x)w'+q'(x)w''$. (Note we are implicitly working in projective space, since solutions to the Quantum SAT instance remain solutions under (non-zero) scaling.)
\end{rem}

\begin{example}[Running example] We illustrate the proof of Theorem \ref{theorem} by specializing the construction to the $3$-uniform semicycle $G_3$ from Example \ref{ex:4cycle1}. Then $\widetilde G_3$ is the hypergraph with vertices $\{1,2,3,4,5\}$ and edges $\widetilde E_1=\{1,2,3\}$, $\widetilde E_2=\{1,2,4\}$, $\widetilde E_3=\{1,3,5\}$. Moreover, the transfer functions $g_1,\ldots,g_5:W^2\to W$ can be chosen as in Example \ref{ex:4cycle6}. Let $h_1$ be as in Example \ref{ex:4cycle7} and suppose $v_1,v_2\in W$ are such that $h_1(v_1,v_2)=0$. If none of the $g_i(v_1,v_2)$ are zero, then a solution of the form $(v_1,v_2,v_3,v_4)$ can be found by Remark~\ref{rem:after}. Else, suppose (say) $g_3(v_1,v_2)=0$ (generically, only one $g_i(v_1,v_2)$ will be zero in this case). Then the hypergraph $F$ has vertices $\{1,2,3,4,5\}$ and edges $\widetilde E_2$, $\widetilde E_3$. Furthermore, $F$ has a transfer filtration $F_0\subseteq F_1\subseteq F_2=F$, where $F_0$ has vertices $\{1,2,3\}$ and no edges, while $F_1$ has vertices $\{1,2,3,4\}$ and a single edge $\widetilde E_2$. Unraveling \eqref{eq:sharp} we obtain
\begin{align*}
\gamma_{1,F}(w)&=v_1\,;\\
\gamma_{2,F}(w)&=v_2\,;\\
\gamma_{3,F}(w)&=w\,;\\
\gamma_{4,F}(w)&=H_2^*(v_1\otimes v_2\otimes w'')w'-H_2^*(v_1\otimes v_2\otimes w')w''\,;\\
\gamma_{5,F}(w)&=H_3^*(v_1\otimes w\otimes w'')w'-H_3^*(v_1\otimes w\otimes w')w''
\end{align*}
and thus
\[
\Gamma_{54}(w)=H_3^*(v_1\otimes w \otimes w')H_2^*(v_1\otimes v_2\otimes w'')-H_3^*(v_1\otimes w\otimes w'')H_2^*(v_1\otimes v_2\otimes w')
\]
for all $w\in W$. Hence $P_{54}(x)=P_{54}^0+P_{54}^1x$ where
\[
P_{54}^0=H_3^*(v_1\otimes w' \otimes w')H_2^*(v_1\otimes v_2\otimes w'')-H_3^*(v_1\otimes w'\otimes w'')H_2^*(v_1\otimes v_2\otimes w')
\]
and
\[
P_{54}^1=H_3^*(v_1\otimes w'' \otimes w')H_2^*(v_1\otimes v_2\otimes w'')-H_3^*(v_1\otimes w''\otimes w'')H_2^*(v_1\otimes v_2\otimes w')\,.
\]
In this case it is immediately clear that generically $P_{54}(x)$ is non-constant and, being linear, has a root over any (non-necessarily algebraically closed) field. However, in general, $P_{i\underline i}(x)$ may have arbitrarily high degree and assuming that $\mathbb K$ is algebraically closed becomes necessary.

\end{example}

\begin{rem}[Reduction to univariate polynomials]\label{rem:reduction}
Theorem~\ref{theorem} reduces us to solving a single polynomial equation,
\[
h_1(v_1,\ldots,v_b)=0,
\]
which is multi-variate. In this case, we reduce it further to a univariate polynomial by fixing arbitrary vectors $w_1,\ldots,w_b\in W$ and $w'_b\in W$ linearly independent from $w_b$. Then
\[
P(x)=h_1(w_1,\ldots,w_b+xw'_b)
\]
is a \emph{univariate} polynomial in $\mathbb K[x]$, which has a root $x\in \K$ since $\K$ is algebraically closed. Thus, we can reduce the problem of computing a root of the multivariate polynomial $h_1$ to computing a root of the univariate polynomial $P$.
\end{rem}

Before explicitly stating our algorithm and formally analyzing runtimes in Section~\ref{sscn:runtime}, we finally tie Theorem~\ref{theorem} back to SDRs, and give an AG overview of our approach.

\begin{rem}
     [Connection to SDRs] Theorem~\ref{theorem} gives a \emph{constructive} proof that all hypergraphs in $\spa{F}$ have a satisfying (product state) assignment if the constraints are chosen generically. This implies each hypergraph in $\spa{F}$ has an SDR, as we now observe.
     \begin{corollary}\label{cor:actually}
        If $G$ is a $k$-uniform hypergraph of transfer type $b$ and such that $|E(G)|=|V(G)|-b+1$, then $G$ has an SDR.
    \end{corollary}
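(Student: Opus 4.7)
The plan is to derive Hall's condition for the hypergraph from Theorem~\ref{theorem} by a dimension count on incidence varieties, and then invoke Hall's marriage theorem. Recall that a hypergraph $G=(V,E)$ admits an SDR if and only if $|V(S)|\ge |S|$ for every $S\subseteq E$, where $V(S):=\bigcup_{e\in S}e$. Fixing an algebraically closed field (say $\K=\CC$ so that Theorem~\ref{theorem} applies), I will prove the contrapositive: if some $S\subseteq E(G)$ satisfies $|V(S)|<|S|$, then generic constraints in $\X_G(\K)$ cannot be satisfied by a product state, contradicting Theorem~\ref{theorem}.

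The central dimension count runs as follows. For such an $S$, form the incidence variety
\[
I_S := \Bigl\{(H,(v_j)_{j\in V(S)})\in \X_S(\K)\times (\mathbb{P}^1(\K))^{|V(S)|} \ \Big|\ H_e^*(v_{e_1}\otimes\cdots\otimes v_{e_k})=0 \text{ for all } e\in S\Bigr\},
\]
where $\X_S(\K):=\prod_{e\in S}\mathbb{P}^{2^k-1}(\K)$ parametrises rank-one $k$-local constraints indexed by the edges of $S$. The second projection $I_S\to(\mathbb{P}^1)^{|V(S)|}$ is surjective, and its fibre over a fixed product state is a product over $e\in S$ of hyperplanes in $\mathbb{P}^{2^k-1}$, hence has dimension $|S|(2^k-2)$. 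Therefore $\dim I_S=|V(S)|+|S|(2^k-2)$, while $\dim\X_S=|S|(2^k-1)$, and the hypothesis $|V(S)|<|S|$ yields $\dim I_S<\dim\X_S$. Consequently the first projection $I_S\to\X_S$ is not dominant, and the set of $H\in\X_S$ admitting a satisfying product state on $V(S)$ is contained in a proper Zariski-closed subset.

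On the other hand, Theorem~\ref{theorem} provides a Zariski-open dense $U\subseteq\X_G(\K)$ each of whose members $H$ admits some element of $\S_H$. The restriction of any such state to the coordinates indexed by $V(S)$ is a satisfying product state for the sub-collection of constraints of $H$ indexed by $S$. Because $\X_G\cong \X_S\times\X_{E\setminus S}$ and the projection of a Zariski-open subset of a product of projective varieties onto one factor is again Zariski-open, $\pi(U)\subseteq\X_S$ contains a Zariski-open dense subset of $\X_S$. This contradicts the previous paragraph, so Hall's condition must hold and an SDR for $G$ exists.

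The only delicate step is that ``generic solvability over $\X_G$'' really passes to ``generic solvability over $\X_S$'' under the forgetful projection, which rests on openness of projections from products of projective spaces (a standard but easily overlooked point). The dimension calculation itself is a routine fibre-bundle computation once one notes that demanding vanishing of a rank-one $k$-local constraint on a fixed product state imposes a single non-trivial linear condition per edge.
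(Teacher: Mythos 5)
Your argument is correct, and it is worth comparing to the paper's. The paper's proof of Corollary~\ref{cor:actually} has the same first step as yours (invoke Theorem~\ref{theorem} to get $\S_H\neq\emptyset$ for generic $H\in\X_G(\CC)$), but then simply cites Theorem 2 of \cite{LLMSS10} for the implication ``generic product-satisfiability of $G$ implies $G$ has an SDR.'' What you have done is inline a proof of exactly that cited implication: your incidence-variety count ($\dim I_S = |V(S)|+|S|(2^k-2) < |S|(2^k-1) = \dim\X_S$ when Hall's condition fails on $S$, so the satisfiable locus in $\X_S$ is contained in a proper closed subset) is essentially the parameter-counting argument behind the LLMSS10 result, combined with Hall's marriage theorem. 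The count itself is right (vanishing of a rank-one constraint on a fixed nonzero product vector is one nontrivial linear condition per edge, and the nonvanishing of the $v_j$ is automatic in $\mathbb{P}^1$ coordinates), and your ``delicate step'' is actually more than you need: you do not require openness of $\pi(U)$, only that $\pi(U)$ is dense in $\X_S$, which follows immediately from surjectivity and continuity of the projection together with density of $U$ in the irreducible variety $\X_G\cong\X_S\times\X_{E\setminus S}$; a dense set cannot sit inside the proper closed subset produced by the dimension count. (The openness claim is true as well, since projections from products are flat hence open, but it is an unnecessary hostage.) Separately, note that the paper also proves a strictly more general statement, Theorem~\ref{thm:general} (transfer type $b\le |V(G)|-|E(G)|+k-1$ implies an SDR), by applying Hall's marriage theorem directly to the combinatorics of the transfer filtration; that route is purely combinatorial, needs no genericity, no algebraically closed field, and no Theorem~\ref{theorem}, whereas your route (like the paper's two-line proof) buys the corollary as a byproduct of the algebraic-geometric machinery already developed.
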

    \begin{proof}
        By Theorem \ref{theorem}, $\S_H$ is nonempty for generic $H\in \X_G(\mathbb C)$. Theorem 2 of \cite{LLMSS10} then implies that $G$ has an SDR.
    \end{proof}
     \noindent Thus, Theorem~\ref{theorem} constructively recovers the result of~\cite{LLMSS10} (that any $k$-QSAT instance with an SDR has a (product-state) solution) in the case when the additional conditions of Theorem~\ref{theorem} are met (recall~\cite{LLMSS10} works on {all} graphs with an SDR, but is not constructive). Finally, in Theorem~\ref{thm:general}, we show a generalization of Corollary~\ref{cor:actually} above with a self-contained proof based on Hall's marriage theorem.
\end{rem}

\begin{rem}[AG view]
 As stated at the outset of Section~\ref{scn:parameterized}, we now give a brief sketch of the ideas discussed here in AG terms. It is natural to represent product states using the Segre map $(v_1,\ldots,v_n)\to v_1\otimes \cdots \otimes v_n$ which describes an embedding of the $n$-fold product of projective lines into a projective space of dimension $2^n-1$. Using this representation, the product satisfiability $\S_H$ can be naturally thought of as an algebraic set in $({\mathbb P}^1(\mathbb K))^n$. In the language of algebraic geometry, the Transfer Lemma (Lemma~\ref{prop:7}) says that if the interaction graph is $k$-uniform of transfer type $b=|V(G)|-|E(G)|$, then $\S_H$ can be realized as an iterated blow-up of $({\mathbb P}^1(\mathbb K))^b$, so in particular it is a rational variety. Moreover, the blow-ups are taken along the loci where the transfer functions $g_i$ vanish or equivalently where their projectivizations (which are by construction rational functions in suitable coordinates) are indeterminate. On the other hand, the Qualifier Lemma (Lemma~\ref{l:qualifier}) says that the projection of $\S_H$ along the coordinates corresponding to $G_0$ is contained in the algebraic set $X_H$ cut out by the regular functions $h_s$. In general, we cannot guarantee that the projection of $\S_H$ coincides with $X_H$. However, Theorem~\ref{theorem} says that in the non-trivial case when $X_H$ is a hypersurface, then the projection of $\S_H$ does indeed coincide with $X_H$.
\end{rem}

\subsection{Explicit algorithm statement, runtime, and speedups}\label{sscn:runtime}
Figure~\ref{alg:statement} states our algorithm explicitly. We now give a runtime analysis, as well as families of inputs for which a speedup (polynomial to exponential) is obtained over brute force methods.

\begin{figure}
\begin{adjustbox}{varwidth=\textwidth,margin=2 {\abovecaptionskip} 5 10, frame=1pt }
\begin{algorithmic}[1]
\State \textbf{input:} A $k$-uniform hypergraph $G=(V,E)$ (along with a constraint $H_e$ for each hyperedge $e$), transfer filtration $F$ of $G$ of type $b$ and radius $r$.
\State \textbf{output:} An $n$-qubit tensor product assignment satisfying all constraints, if one exists. Else, the algorithm rejects.
\State
\Procedure{BLOWUP}{$G,F$}
\State Apply Decoupling Lemma (Lemma~\ref{lem:p}) to $G$ to compute $\widetilde{G}$ and $p:V(\widetilde{G})\mapsto V(G)$.
\State Apply Transfer Lemma (Lemma~\ref{prop:7}) to $\widetilde{G}$ to compute $g_1,\ldots,g_n:W^b\mapsto W$.
\State Apply Qualifier Lemma (Lemma~\ref{l:qualifier}) to $g_1,\ldots, g_n$ to compute $h_1,\ldots, h_{m-n+b}:W^b\mapsto\mathbb{K}$.
\State Compute foundation assignment $v_1,\ldots, v_b\subseteq W$ such that for all $s\in[m-n+b]$, $h_s(v_1,\ldots, v_b)=0$ and $g_{i_s}(v_1,\ldots,v_b)\neq 0$ . If no such assignment found, reject.
\State For each qubit $i$ of $G$, return assignment $g_{i}(v_1,\ldots,v_b)$.
\EndProcedure
\end{algorithmic}
\end{adjustbox}
\caption{Parameterized algorithm for Quantum $k$-SAT.}\label{alg:statement}
\end{figure}

\paragraph{Runtime analysis.} We bound the complexity of each step of Figure~\ref{alg:statement}. Note these bounds are rather naive; improved polynomial factors can likely be attained by an appropriate use of data structures and a more careful analysis. Below, we assume $\mathbb{K}=\complex$, use of the Random Access Machine model (e.g. looking up entries in arrays costs constant time), and we count the number of field operations over $\complex$ (as opposed to bit operations).

\begin{itemize}
    \item (Decoupling Lemma, Line $5$) Let $(E_1,\ldots, E_m)$ denote the order in which edges of $G$ are added in the filtration $F$. In order to compute $\widetilde{G}$ and $p$, the first step is to compute $r:\set{1,\ldots, m}\mapsto\set{0,\ldots,m-1}$ from Remark~\ref{rem:r}. Recall this was defined such that $r(i)$ is the smallest integer satisfying $\abs{E_i\setminus V(G_{r(i)})}=1$, i.e. the first point in the filtration at which $E_i$ could be added next via the introduction of precisely one vertex. Since $k\in O(1)$, one can compute all vertex sets $V(G_{i})$ in $O(m)$ time. With all $V(G_i)$ precomputed, computing $r(i)$ now takes $O(i)$ time, and so precomputing all values of $r$ takes $O(m+m^2)\in O(m^2)$ total time.

        Moving on to $p$, since $p(i)=i$ when $i\in[b]$ and $p(i)=E_{i-b}\setminus V(G_r(i-b))$ when $i\in\set{b+1,\ldots, b+m}$, each entry $p(i)$ can now be computed in $O(1)$ time, implying all values of $p$ can be computed in time $O(m+b)$.

        Since the preimage $p^{-1}(j)$ can be of size up to $m$, computing it takes $O(m)$ time, implying computing $\underline{j}$ and  $\widetilde{E_i}$ take $O(m)$ time. Therefore, computing all edges $\widetilde{E_i}$ takes $O(m(m+b))$ time.

        The total runtime for this step is hence $O(m^2)+O(m+b)+O(m(m+b))\in O(m(m+b))$.
    \item (Transfer Lemma, Line $6$) To analyze the cost of computing a transfer function (where by ``computing'', we mean explicitly writing out all coefficients of the corresponding multi-variate polynomial) , $g:W^b\mapsto W$, it is helpful to use the construction of Example~\ref{ex:4cycle6} as a guide. Namely, since $W$ is two-dimensional, it suffices to analyze the maximum cost of outputting a component of $g$'s two-dimensional output vector. This requires bounding the cost of expanding an expression of the form
        \[
            H^*(g_{i_1}(v_1,\ldots,v_b)\otimes\cdots\otimes g_{i_{k-1}}(v_1,\ldots,v_b)\otimes w_l)=: H^*(\ve{u})
        \]
        (for $w_l$ some fixed basis vector for $W$) as a multivariate polynomial in the $2b$ components of $v_1,\ldots, v_b$. For convenience, let $\ve{u}=g_{i_1}(v_1,\ldots,v_b)\otimes\cdots\otimes g_{i_{k-1}}(v_1,\ldots,v_b)\otimes w_l$ above, and set $S=\set{i_1,\ldots, i_{k-1}}$.  Observe now that if $g_{i_t}$ has degree $(d_{i_t,1},\ldots, d_{i_t,b})$, then since $g_{i_t}$ is a multivariate polynomial, it has at most $O(d_{i_t,1}d_{i_t,2}\cdots d_{i_t,b})$ distinct monomials. It follows that computing an arbitrary entry of $\ve{u}$ costs $O(\Pi_{i\in S}\Pi_{j=1}^b d_{i,j})$ time (for this simple analysis, we naively multiply out all monomials and add like terms). Note next that if $H$ denotes the $k$-local Hamiltonian term corresponding to $H^*$, then $H^*(\ve{u})=\ve{u}^\dagger H\ve{u}$. Thus, assuming $k\in O(1)$, computing $H^*(\ve{u})$ costs
        \begin{equation}\label{eqn:cost}
            O\left(\Pi_{i\in S}\;\Pi_{j=1}^b\; d^2_{i,j}\right)
        \end{equation}
        time in the worst case. By the degree bounds of Lemma~\ref{prop:7} and Remark~\ref{rem:fib}, we have $d_{i,j}\leq F^{(b)}_i\in o(2^i)$. Equation~(\ref{eqn:cost}) can hence be bounded by
        \begin{equation}\label{eqn:cost2}
            O\left(\Pi_{i\in S}\; d^{2b}_{i,j}\right)\in O\left(2^{2b\sum_{i\in S}i}\right)\in O\left(2^{2bk\max_{i\in S}i}\right),
        \end{equation}
        where the last statement follows since $\abs{S}\leq k$. A naive analysis would now suggest $\max_{i\in S}\leq n$, as there are $n$ transfer functions $g$. However, this analysis is too loose --- it assumes the worst case that each tensor factor $g_{i_1}(v_1,\ldots, v_b)$ depends recursively on all previous transfer functions $g_1\ldots, g_{i_1-1}$ (hence the degree of $g_{i_1-1}\leq F^b_i$ for $i\in[n]$). By taking the \emph{radius} of the filtration into account, $\rho(G,b)$, we can obtain a much smaller bound when the radius is small. Specifically, similar to the analysis of Lemma~\ref{l:qualifier}, one can instead show $d_{i,j}\leq F^{(b)}_{\rho(G,b)+b+1}\in O(2^{\rho(G,b)+b})$. Thus, we can improve Equation~(\ref{eqn:cost2}) to
        \[
           O\left(2^{2kb(\rho(G,b)+b)}\right).
        \]
        Since there are $n$ transfer functions in total, the cost of Line 6 is hence $O\left(n2^{2kb(\rho(G,b)+b)}\right)$, where recall we assume $k\in O(1)$.

        \item (Qualifier Lemma, Line $7$) We consider the cost of computing an arbitrary qualifier $h_s$. Since in Line 5 we already computed $\underline{i}$ for all $i\in\set{1,\ldots, m+b}$, we may compile set $\set{i_1,\ldots, i_{m-n+b}}$ in time $O(m-n+b)$. Since $W$ is a two-dimensional space, computing $g_{i_s}^\#$ takes time linear in the number of monomials in each amplitude/component of $g_{i_s}$, i.e. $O(d_{i_s,1}\cdots d_{i_s,b})\in O\left(2^{b(\rho(G,b)+b)}\right)$ time. It follows that computing $h_s$ requires $O\left(2^{2b(\rho(G,b)+b)}\right)$ time. Since there are $m-n+b$ qualifier functions, the total cost of this step is
            \[
                O\left((m-n+b)2^{2b(\rho(G,b)+b)}\right).
            \]
\end{itemize}
We may bound the cost of Lines $5$ - $7$ (which apply to any instance of $k$-QSAT) by
\begin{equation}\label{eqn:cost567}
    O\left((m-n+b)2^{2kb(\rho(G,b)+b)}\right).
\end{equation}
For arbitrary $k$-QSAT instances, the runtime of Lines $8$ - $9$ is less clear (e.g. how to find a common root to all $h_s$ while also ensuring $g_{i_s}$ to be non-zero?); however, these steps can indeed be solved for the family of instances with generic constraints and transfer type $b=n-m+1$, as demonstrated in Section~\ref{sscn:generic}. As noted in Remark~\ref{rem:reduction}, in this case we may reduce Lines $8$ - $9$ to solving for the roots of a single univariate polynomial (moreover, the Surjectivity Lemma~\ref{l:surjectivity} ensures $g_{i_s}$ outputs a non-zero value). We now analyze this reduction's runtime.

\begin{itemize}
    \item (Lines $8$ - $9$, generic constraints, $b=n-m+1$) When $b=n-m+1$, there is precisely one qualifier function $h_s(v_1,\ldots, v_b)$, which is a multivariate polynomial in $2b$ complex variables, and has degree $(d_{s1},\ldots,d_{sb})$ for $d_{sr}\le 2^{\rho(G,b)+b+2}$ (by Lemma~\ref{l:qualifier}) for all $r\in[b]$. Following Remark~\ref{rem:reduction}, fix $w_1=\cdots =w_b=\ket{0}\in\complex^2$ and $w'_b=\ket{1}\in\complex^2$ and consider variable $x\in \complex$. Then, $q(x):=h_s(w_1,\ldots, w_{b-1},w_b+x w'_b)$ is a univariate non-constant\footnote{Note that, in principle, $q(x)$ could be constant. However, since $h_s$ is non-constant by the Qualifier Lemma, there must be at least one coordinate $j\in\set{1,\ldots, b}$ such that redefining $q(x)$ to have its $j$th argument to equal $w_j+xw_j'$, and all other arguments $j'\neq j$ equal to $\ket{\psi}$ for some appropriate constant vector $\ket{\psi}$, yields a non-constant polynomial over $x$. (For example, set the entries of $\ket{\psi}$ to be sufficiently large in magnitude to ensure the monomials of different degrees in the expansion of $h_s$ cannot cancel out. Recall here that since we are essentially working in projective space, $\ket{\psi}$ need not be normalized.)} polynomial over $x$ with degree at most
        \begin{equation}\label{eqn:degbound}
            d\leq 2^{\rho(G,b)+b+2}.
        \end{equation}
        Observe that $h_s$ has at most $M=\Pi_{s}d_{sr}\leq 2^{b(\rho(G,b)+b+2)}$ monomials. Thus, computing $q(x)$ takes $O(M)$ time since each of the $2b$ complex variables plugged into $h_s$ is given a value from set $\set{0,1,x}$ (i.e. simplifying monomials with large exponents is trivial). To find a root of $q$, which exists since $\complex$ is algebraically closed, we can now apply an algorithm of Sch\"{o}nhage~\cite{S86} as follows.

        \textbf{Sch\"{o}nhage's univariate polynomial factorization algorithm.} Let us write $q(x)=\sum_{i=1}^d c_i x^i\in\complex[x]$, and define one-norm $\norm{q}_1:=\sum_{i=1}\abs{c_i}$. Then, Equation (3.2) of~\cite{S86} says ``approximate factorization'' of $q$ is possible within error\footnote{For clarity, Equation (3.2) of~\cite{S86} assumes $\norm{q}_1\leq 1$, and hence the error is stated therein as $\epsilon=2^{-N}$. In our setting, however, we first need to rescale our $q$ to ensure $\norm{q}_1\leq 1$. This does not affect the number of field operations required for Sch\"{o}nhage's algorithm, but results in the multiplicative $\norm{q}_1$ term (which is absent in~\cite{S86}) in the resulting bound on accuracy, Equation~(\ref{eqn:almost}) . We hence define the error as $\epsilon=2^{-N}\norm{q}_1$ here for convenience.} $\epsilon=2^{-N}\norm{q}_1$ in $O(d^3\log d + d^2N)$ field operations. Here, ``approximate factorization'' within error $\epsilon$ means computing linear factors $L_j(x)=u_jx+v_j$ for $1\leq j \leq d$ such that
        \begin{equation}\label{eqn:almost}
            \norm{q(x)-L_1(x)\cdots L_d(x)}_1< 2^{-N}\norm{q}_1.
        \end{equation}
        Let $w_j=-v_j/u_j\leq 1$ denote the root of linear factor $L_j(x)$. Given Equation~(\ref{eqn:almost}), the next question is: How closely do the $w_j$ approximate the roots $z_j$ of $q(x)$? As done in Section 3.4 of~\cite{S86}, one can apply the perturbation bound of Theorem 2.7 of~\cite{SCHONHAGE1985118} to obtain the following. If $\norm{q(x)-L_1(x)\cdots L_d(x)}_1\leq \eta \norm{q(x)}_1$ for $\eta\leq 2^{-7d}$, then there exists an ordering of the $w_j$ such that for all $j$,
        \[
            \abs{w_j-z_j}< 9\eta^{1/d}.
        \]
        It follows that in order to compute the roots $z_j$ within additive error $2^{-p(n)}$ for some polynomial $p$, it suffices to approximately factorize $q$ within error $\epsilon=\norm{q}_1/(9^d2^{p(n)d})$, which can be accomplished in time
        \[
            O\left(d^3\log d + d^2 \log\left(9^d2^{p(n)d}\right)\right).
        \]

         \item (Line 9) We now have a root $(v_1:=\ket{0},\ldots, v_{b-1}:=\ket{0},v_b:=\ket{0}+x\ket{1})$ of $h_s$. To compute $g_i(v_1,\ldots, v_b)$ for any $i\in[n]$ involves evaluating a multivariate polynomial of degree $(d_{s1},\ldots,d_{sb})$ for $d_{sr}\le 2^{\rho(G,b)+b+2}$ for all $r\in[b]$. Naively substituting in our root and using square-and-multiply to compute monomials with large powers, evaluating each $g_i$ hence requires at most time
             \[
                O((\rho(G,b)+b)2^{b(\rho(G,b)+b+2)}).
             \]
\end{itemize}
Combining the cost of all steps, we find $k$-QSAT instances with generic constraints and $b=n-m+1$ require total time at most (for $d\leq 2^{\rho(G,b)+b+2}$)
\begin{equation}\label{eqn:runtime}
       O(mn)+ O\left(2^{2kb(\rho(G,b)+b)}\right)+O\left(d^3\log d + d^2 \log\left(9^d2^{p(n)d}\right)\right)+O((\rho(G,b)+b)2^{b(\rho(G,b)+b+2)}).
\end{equation}
Thus, the algorithm is polynomial in $m$ and $n$, and exponential in $k$ (the locality of the constraints), $\rho(G,b)$ (the radius), and $b$ (foundation size).\\

\noindent\emph{Aside: Error propagation.} Note that in Line 8, we compute a root $z$ of $q$ within additive error $2^{-p(n)}$. In Line 9, we then substitute this root into our transfer functions $g_i$  to obtain our final assignment on qubit $i$. This substitution can propagate the additive error from Line 8. However, since the degrees of the transfer functions are bounded by $\exp(n)$ (specifically, $d_{sr}\le 2^{\rho(G,b)+b+2}$), it follows that by choosing $p$ to be a sufficiently large fixed polynomial, one can make the additive error for the output of $g_i$ exponentially small in $n$.

\paragraph*{On exponential speedups via Theorem~\ref{theorem}.}
Recall Theorem~\ref{theorem} applies to $k$-uniform hypergraphs of transfer type $b=n-m+1$, such as the semicycle. From a parameterized complexity perspective, however, most interesting are hypergraphs for which the foundation size $b$ and filtration radius $r$ satisfy $b,r\in o(n+m)$, for which we might obtain an asymptotic speedup over brute force diagonalization of the Quantum SAT system. We now discuss various hypergraph families and analyze their parameters $m$ and $n$ versus $b$ and $r$. In particular, we obtain quadratic (tiling of torus, fir tree) to \emph{exponential} (crash) separations between these parameters. Note that for the runtime of Equation~(\ref{eqn:runtime}), a quadratic separation is unfortunately not enough for an asymptotic speedup. However, an \emph{exponential} separation in parameters, as for the crash family of hypergraphs,  implies our parameterized algorithm runs in polynomial time, whereas brute force diagonalization would require time exponential in $m$ and $n$.

\begin{example}[Semicycles (Figure~\ref{fig:chain}, page $9$)]
We begin with the study of semicycles, for which there is no separation in parameters. Namely, let $S_{t,k}$ denote the $k$-uniform hypergraph with vertices $V(S_{t,k})=\mathbb Z/t\mathbb Z$ and edges $E(S_{t,k})=\{E_0,\ldots,E_{t-k+1}\}$ such that $E_{i}=\{i,i+1,\ldots,i+k-1\}$ for every $i\in \{0,\ldots,t-k+1\}$. By construction, $|V(S_{t,k})|-|E(S_{t,k})|+1=k-1$ for every $t$. On the other hand, $S_{t,k}$ is line graph connected for every $t$ and thus has transfer type $k-1$ by Example \ref{ex:L}. Depending on how the transfer filtration is chosen, the transfer radius is at least $|E(S_{t,k})|/2$ and at most $|E(S_{t,k})|$. Thus, the radius $r$ satisfies $r\in \Theta (|E(S_{t,k})|)$, i.e. no separation in parameters exists.
\end{example}

\begin{figure}[t]
	\begin{center}
\resizebox{150pt}{!}{%
\begin{tikzpicture}
	
    \coordinate (v1) at (0,6) {};
    \coordinate (v2) at (0,0) {};
    \coordinate (v3) at (6,0) {};
    \coordinate (v4) at (0,4) {};
    \coordinate (v5) at (6,4) {};
    \coordinate (v6) at (0,2) {};
    \coordinate (v7) at (6,2) {};
    \coordinate (v8) at (2,6) {};
    \coordinate (v9) at (2,0) {};
    \coordinate (v10) at (4,6) {};
    \coordinate (v11) at (4,0) {};
    \coordinate (v12) at (2,4) {};
    \coordinate (v13) at (4,4) {};
    \coordinate (v14) at (2,2) {};
    \coordinate (v15) at (4,2) {};

    \foreach \v in {v1,v2,v3,v4,v5,v6,v7,v8,v9,v10,v11,v12,v13,v14,v15}{
      \node [circle, minimum size=0.4cm, line width=0pt] (\v') at (\v) {};
    }

    \filldraw [draw=black, fill=yellow, opacity=0.2]
		(v1'.180) -- (v4'.180) arc (180:270:0.2cm)
		-- (v12'.-90) arc (-90:45:0.2)
		-- (v1'.45) arc (45:180:0.2) -- cycle;
    \filldraw [draw=black, fill=violet, opacity=0.2]
		(v8'.180) -- (v12'.180) arc (180:270:0.2cm)
		-- (v13'.-90) arc (-90:45:0.2)
		-- (v8'.45) arc (45:180:0.2) -- cycle;
    \filldraw [draw=black, fill=yellow, opacity=0.2]
		(v10'.180) -- (v13'.180) arc (180:270:0.2cm)
		-- (v5'.-90) arc (-90:45:0.2)
		-- (v10'.45) arc (45:180:0.2) -- cycle;
    \filldraw [draw=black, fill=violet, opacity=0.2]
		(v4'.180) -- (v6'.180) arc (180:270:0.2cm)
		-- (v14'.-90) arc (-90:45:0.2)
		-- (v4'.45) arc (45:180:0.2) -- cycle;
    \filldraw [draw=black, fill=yellow, opacity=0.2]
		(v12'.180) -- (v14'.180) arc (180:270:0.2cm)
		-- (v15'.-90) arc (-90:45:0.2)
		-- (v12'.45) arc (45:180:0.2) -- cycle;
    \filldraw [draw=black, fill=violet, opacity=0.2]
		(v13'.180) -- (v15'.180) arc (180:270:0.2cm)
		-- (v7'.-90) arc (-90:45:0.2)
		-- (v13'.45) arc (45:180:0.2) -- cycle;
    \filldraw [draw=black, fill=yellow, opacity=0.2]
		(v6'.180) -- (v2'.180) arc (180:270:0.2cm)
		-- (v9'.-90) arc (-90:45:0.2)
		-- (v6'.45) arc (45:180:0.2) -- cycle;
    \filldraw [draw=black, fill=violet, opacity=0.2]
		(v14'.180) -- (v9'.180) arc (180:270:0.2cm)
		-- (v11'.-90) arc (-90:45:0.2)
		-- (v14'.45) arc (45:180:0.2) -- cycle;
    \filldraw [draw=black, fill=yellow, opacity=0.2]
		(v15'.180) -- (v11'.180) arc (180:270:0.2cm)
		-- (v3'.-90) arc (-90:45:0.2)
		-- (v15'.45) arc (45:180:0.2) -- cycle;

    \foreach \l in {1,...,3}{
      \filldraw [black] (v\l) circle (2pt) node [inner sep=5pt, label=below:$v_{1}$] {};
	}
    \foreach \l in {4,5}{
      \filldraw [black] (v\l) circle (2pt) node [inner sep=5pt, label=below:$v_{2}$] {};
    }
    \foreach \l in {6,7}{
      \filldraw [black] (v\l) circle (2pt) node [inner sep=5pt, label=below:$v_{3}$] {};
    }
    \foreach \l in {8,9}{
      \filldraw [black] (v\l) circle (2pt) node [inner sep=5pt, label=below:$v_{4}$] {};
    }
     \foreach \l in {10,11}{
      \filldraw [black] (v\l) circle (2pt) node [inner sep=5pt, label=below:$v_{5}$] {};
     }
     \foreach \l in {12}{
      \filldraw [black] (v\l) circle (2pt) node [inner sep=5pt, label=below:$v_{6}$] {};

          }
     \foreach \l in {13}{
      \filldraw [black] (v\l) circle (2pt) node [inner sep=5pt, label=below:$v_{7}$] {};

          }
     \foreach \l in {14}{
      \filldraw [black] (v\l) circle (2pt) node [inner sep=5pt, label=below:$v_{8}$] {};

          }
     \foreach \l in {15}{
      \filldraw [black] (v\l) circle (2pt) node [inner sep=5pt, label=below:$v_{9}$] {};
     }
\end{tikzpicture}
}

\end{center}
\caption{A $3\times 3$ tiling of the torus. Note the closed boundary conditions, i.e. we identify two vertices with the same label as being the same vertex.}
\label{fig:torus}
\end{figure}

\begin{example}[Modified tiling of the torus (Figure~\ref{fig:torus}, page $34$)] \label{ex:triangle}
We next discuss a slight modification of the tiling of torus which yields a quadratic separation in parameters. Let $k\ge 3$ and let $t$ be a positive integer. Let $V_{t,k}$ be the set of pairs $(p,q)$ of integers such that $0\le (k-2) q \le p \le (k-2) t$. Let $G_{t,k}$ be the $k$-uniform hypergraph with vertices $V(G_{t,k})=V_{t,k}$ and edges $E_{r,s}=\{(r,s),(r+1,s),\ldots,(r+k-2,s),(r+k-2,s+1)\}$ for each pair $(r,s)\in V_{t-1,k}$. Then $G_{t,k}$ has $\frac{(t+1)(2+t(k-2))}{2}$ vertices and $\frac{t(2+(t-1)(k-2))}{2}$ edges. All vertices except for $(0,0)$, $(t(k-2),0)$ and $(t(k-2),t)$ have degree at least $2$. There exists a transfer filtration with foundation $\{(0,0),\ldots,(t(k-2),0)\}$ obtained by adding the edges in the following order
\[
E_{0,0}, E_{1,0},\ldots,E_{(t-1)(k-2),0},E_{k-2,1},E_{k-1,1},\ldots,E_{(t-1)(k-2),1},E_{2(k-2),2},\ldots,E_{(t-1)(k-2),t-1}\,.
\]
In particular, this filtration has radius $t$.

Now, consider the $k$-uniform hypergraph $T_{t,k}$ obtained by identifying the degree $1$ vertices of $G_{t,k}$. (This can be visualized similar to the tiling of the torus as in Figure~\ref{fig:torus}, except one keeps only the ``lower triangular'' portion of Figure~\ref{fig:torus}, i.e. vertices $v_2,v_3,v_4,v_5,v_7$ from Figure~\ref{fig:torus} are discarded. Unlike the tiling of the torus, however, the example here has a transfer filtration type which satisfies the preconditions of Theorem~\ref{theorem}, as stated below.) Then $T_{t,k}$ has a transfer filtration of type
\begin{align*}
t(k-2) &= \frac{(t+1)(2+t(k-2))}{2} -2 + \frac{t(2+(t-1)(k-2))}{2} +1  \\
& = |V(T_{t,k})|-|E(T_{t,k})|+1,
\end{align*}
satisfying the preconditions for Theorem~\ref{theorem}. Since $T_{t,k}$ has radius $t$, it follows that the radius and foundation size are $\Theta(t)$ (for $k\in O(1)$, which is the typical assumption), whereas $\abs{V(T_{t,k})},\abs{E(T_{t,k})}\in \Theta(t^2)$, i.e. there is a quadratic gap between these parameters.
\end{example}

\begin{example}[Fir Tree (Figure~\ref{fig:firtree}, page $27$)]
We next give another example of a hypergraph family (Figure~\ref{fig:firtree}) with a quadratic separation in parameters. For each $k\ge 3$, $t\ge 1$ and $i\in \mathbb Z/k\mathbb Z$ let $G_{t,k,i}$ be a copy of the $k$-uniform hypergraph $G_{t,k}$ introduced in Example \ref{ex:triangle}. Let $H_{t,k}'$ be obtained by identifying the vertex $(t(k-2),0)$ of $G_{t,k,i}$ with the vertex $(0,0)$ of $G_{t,k,i+1}$ for each $i\in \mathbb Z/k\mathbb Z$ and let $H_{t,k}$ be the graph obtained from $H'_{t,k}$ by adding the edge whose elements are the vertices $(t(k-2),t)$ for each of the $G_{t,k,i}$. For example, $H_{4,3}$ is illustrated in Figure \ref{fig:firtree}. Then $H_{t,k}$ has a filtration of type
\begin{align*}
tk(k-2) &= k\left(\frac{(t+1)(2+t(k-2))}{2}-1\right) - k\frac{t(2+(t-1)(k-2))}{2} - 1 + 1\\
&=|V(H_{t,k})|-|E(H_{t,k})|+1
\end{align*}
and radius $t+1$. Thus, we have another family of hypergraphs which satisfy the preconditions of Theorem~\ref{theorem}, and have a quadratic gap between the radius and foundation size (both $\Theta(t)$) versus the number of vertices and edges (both $\Theta(t^2)$).
\end{example}


\begin{figure}[t]
\begin{center}
\resizebox{380pt}{!}{%
\begin{tikzpicture}
[inner sep=0pt, minimum size = 0pt]
	\coordinate (v0) at (0,0) {};
	\coordinate (v1) at (-4,5) {};
	\coordinate (v2) at (4,5) {};
	\coordinate (v3) at (-4,2) {};
	\coordinate (v4) at (4,2) {};
	\coordinate (v5) at (-7,0) {};
	\coordinate (v6) at (-4,0) {};
	\coordinate (v7) at (4,0) {};
	\coordinate (v8) at (7,0) {};
	\coordinate (v9) at (-9,-2) {};
	\coordinate (v10) at (-7,-2) {};
	\coordinate (v11) at (-4,-2) {};
	\coordinate (v12) at (-1,-2) {};
	\coordinate (v13) at (1,-2) {};
	\coordinate (v14) at (4,-2) {};
	\coordinate (v15) at (7,-2) {};
	\coordinate (v16) at (9,-2) {};

	\foreach \v in {v1,v2,v3,v4,v5,v6,v7,v8,v9,v10,v11,v12,v13,v14,v15,v16}{
\node [circle, minimum size=0.4cm, line width=0pt] (\v') at (\v) {};
}

\filldraw [draw=black, fill=cyan, opacity=0.4]
		(v1'.180) -- (v3'.180) arc (180:270:0.2)
		-- (v4'.-90) arc (-90:90:0.2)
		-- (v3'.45) -- (v1'.0 )arc (0:180:0.2) -- cycle;
		
\filldraw [draw=black, fill=yellow, opacity=0.6]
		(v3'.0) -- (v6'.0) arc (0:-90:0.2)
		-- (v5'.-90) arc (-90:-270:0.2)
		-- (v6'.135) -- (v3'.180 )arc (180:0:0.2) -- cycle;				
\filldraw [draw=black, fill=yellow, opacity=0.6]
		(v4'.180) -- (v7'.180) arc (180:270:0.2)
		-- (v8'.-90) arc (-90:90:0.2)
		-- (v7'.45) -- (v4'.0 )arc (0:180:0.2) -- cycle;		

\filldraw [draw=black, fill=Plum, opacity=0.6]
		(v6'.180) -- (v11'.180) arc (180:270:0.2)
		-- (v12'.-90) arc (-90:90:0.2)
		-- (v11'.45) -- (v6'.0 )arc (0:180:0.2) -- cycle;

\filldraw [draw=black, fill=Plum, opacity=0.6]
		(v8'.180) -- (v15'.180) arc (180:270:0.2)
		-- (v16'.-90) arc (-90:90:0.2)
		-- (v15'.45) -- (v8'.0 )arc (0:180:0.2) -- cycle;	
\filldraw [draw=black, fill=Plum, opacity=0.6]
		(v5'.0) -- (v10'.0) arc (0:-90:0.2)
		-- (v9'.-90) arc (-90:-270:0.2)
		-- (v10'.135) -- (v5'.180 )arc (180:0:0.2) -- cycle;
\filldraw [draw=black, fill=Plum, opacity=0.6]
		(v7'.0) -- (v14'.0) arc (0:-90:0.2)
		-- (v13'.-90) arc (-90:-270:0.2)
		-- (v14'.135) -- (v7'.180 )arc (180:0:0.2) -- cycle;	

\filldraw [draw=black, fill=Maroon, opacity=0.5, line width=0.5pt]
		plot[smooth, tension=1] coordinates { (v1'.210) (0.5,0) (v14'.-90)} arc(-90:90:0.2) --
		plot[smooth, tension=1] coordinates
{ (v14'.90) (2,0) (v2'.-45)} arc(-45:150:0.2)
		plot[smooth, tension=1] coordinates
{ (v2'.150) (0,1.75) (v1'.45)}
		arc(45:210:0.2);
\filldraw [draw=black, fill=BurntOrange, opacity=0.6, line width=0.5pt]
		plot[smooth, tension=1] coordinates { (v1'.210) (-0.5,0) (v13'.-150)} arc(-150:70:0.2) --
		plot[smooth, tension=1] coordinates
{ (v13'.70) (1,0) (v2'.-45)} arc(-45:150:0.2)
		plot[smooth, tension=1] coordinates
{ (v2'.150) (0,1.25) (v1'.45)}
		arc(45:210:0.2);
\filldraw [draw=black, fill=olive, opacity=0.5, line width=0.5pt]
		plot[smooth, tension=1] coordinates { (v1'.210) (-1.75,0) (v12'.-235)} arc(-235:60:0.2) --
		plot[smooth, tension=1] coordinates
{ (v12'.60) (0,0) (v2'.-45)} arc(-45:150:0.2)
		plot[smooth, tension=1] coordinates
{ (v2'.150) (-1,1.5) (v1'.45)}
		arc(45:210:0.2);
\filldraw [draw=black, fill=lime, opacity=0.5, line width=0.5pt]
		plot[smooth, tension=1] coordinates { (v1'.210) (-2.5,0) (v11'.-270)} arc(-270:-60:0.2) --
		plot[smooth, tension=1] coordinates
{ (v11'.-60) (-0.5,0) (v2'.-45)} arc(-45:150:0.2)
		plot[smooth, tension=1] coordinates
{ (v2'.150) (-1,1) (v1'.45)}
		arc(45:210:0.2);

	\filldraw [draw=black, fill=TealBlue, opacity=0.6, line width=0.5pt]
		plot[smooth, tension=1] coordinates { (v2'.-60) (-1,3) (v1'.-90) (-9,2)  (v9'.50)}
		arc(50:-150:0.2) --
		plot[smooth, tension=1] coordinates {(v9'.-150) (-9.5,2.5) (v1'.90) (-1,3.5) (v2'.130) }
		arc(130:-60:0.2)
		;
\filldraw [draw=black, fill=pink, opacity=0.6, line width=0.5pt]
		plot[smooth, tension=1] coordinates { (v2'.-90) (-1,4) (v1'.-90) (-8,1.5)  (v10'.50)}
		arc(50:-150:0.2) --
		plot[smooth, tension=1] coordinates {(v10'.-150) (-8.5,2) (v1'.90) (-1,4.5) (v2'.110) }
		arc(110:-90:0.2)
		;
	
\filldraw [draw=black, fill=yellow, opacity=0.6, line width=0.5pt]
		plot[smooth, tension=1] coordinates { (v1'.-60) (1,6) (v2'.-90) (9,2)  (v16'.-220)}
		arc(-220:-30:0.2) --
		plot[smooth, tension=1] coordinates {(v16'.-30) (9.5,2.5) (v2'.90) (1,6.5) (v1'.120) }
		arc(120:300:0.2)
		;
\filldraw [draw=black, fill=Violet, opacity=0.6, line width=0.5pt]
		plot[smooth, tension=1] coordinates { (v1'.-60) (1,7) (v2'.-90) (8,2)  (v15'.-220)}
		arc(-220:-30:0.2) --
		plot[smooth, tension=1] coordinates {(v15'.-30) (8.5,2.2) (v2'.90) (1,7.5) (v1'.120) }
		arc(120:300:0.2)
		;
		
	\filldraw [black] (v16) circle (2pt) node [inner sep=5pt, label=below:{\Large $222$}] {};	
	\filldraw [black] (v15) circle (2pt) node [inner sep=5pt, label=below:{\Large $221$}] {};
	\filldraw [black] (v14) circle (2pt) node [inner sep=5pt, label=below:{\Large $212$}] {};
	\filldraw [black] (v13) circle (2pt) node [inner sep=5pt, label=below:{\Large $211$}] {};
	\filldraw [black] (v12) circle (2pt) node [inner sep=5pt, label=below:{\Large $122$}] {};
	\filldraw [black] (v11) circle (2pt) node [inner sep=5pt, label=below:{\Large $121$}] {};
	\filldraw [black] (v10) circle (2pt) node [inner sep=5pt, label=below:{\Large $112$}] {};
	\filldraw [black] (v9) circle (2pt) node [inner sep=5pt, label=below:{\Large $111$}] {};
	
	\filldraw [black] (v8) circle (2pt) node [inner sep=5pt, label=below:{\Large $22$}] {};
	\filldraw [black] (v7) circle (2pt) node [inner sep=5pt, label=below:{\Large $21$}] {};
	\filldraw [black] (v6) circle (2pt) node [inner sep=5pt, label=below:{\Large $12$}] {};
	\filldraw [black] (v5) circle (2pt) node [inner sep=5pt, label=below:{\Large $11$}] {};
	
	\filldraw [black] (v4) circle (2pt) node [inner sep=5pt, label=below:{\Large $2$}] {};
	\filldraw [black] (v3) circle (2pt) node [inner sep=5pt, label=below:{\Large $1$}] {};
	
	\filldraw [black] (v2) circle (2pt) node [inner sep=5pt, label=above:{\Large ${(0,2)}$}] {};
	\filldraw [black] (v1) circle (2pt) node [inner sep=5pt, label=above:{\Large ${(0,1)}$}] {};
\end{tikzpicture}
}
\end{center}
\caption{Depiction of $3$-uniform crash hypergraph $C_{3,3}$. Generally, $C_{t,k}$ has an exponential separation between the filtration radius and foundation size versus number of vertices and edges.}
\label{fig:crash}
\end{figure}

\begin{example}[Crash (Figure~\ref{fig:crash}, page $36$)]
 Finally, we give a $k$-uniform hypergraph family with an exponential separation in parameters (Figure~\ref{fig:crash}). For $k \ge 2$ an integer, let $\Sigma = \{1,2,\ldots,k-1\}$ be an alphabet of size $k-1$.
  For $t \ge 1$ an integer, consider the hypergraph $C_{t,k}$ with vertices $V(C_{t,k})=\bigcup_{j=0}^{t} V_j$ where $V_j = \Sigma^{t-j+1}$ for all $1 \le j \le t$, and $V_{0} = \{(0,x)\,|\, x \in \Sigma\}$. The edge set of $C_{t,k}$ is the union of all edges of the following three forms:
\begin{enumerate}
  \item For every $x\in V_1$, $E_x = \{x\} \cup V_0$;
  \item for every $2\le j \le t$ and every $x\in V_j$, $E_x=\{x\}\cup\{xa\,|,a\in \Sigma\}$;
    \item $E_0=\{(0,1)\}\cup V_t$.
    \end{enumerate}
Then $C_{t,k}$ has a transfer filtration with foundation $V_0$ obtained by first adding all the edges $E_x$ with $x\in V_1$, then adding all the edges $E_x$ with $x\in V_2$ etc.\ with $E_0$ added last. Clearly this transfer filtration has radius $t$ and type
\begin{align*}
k-1 & = (k-1)\left(1+\frac{(k-1)^t-1}{k-2}\right) - \frac{(k-1)((k-1)^t-1)}{k-2} -1+1\\
& = |V(C_{t,k})|-|E(C_{t,k})|+1\,.
\end{align*}
In particular, this family satisfies the preconditions for Theorem~\ref{theorem} and yields an {exponential} separation between radius and foundation size ($\Theta(t)$ and $O(k)$, respectively) versus the number of vertices and edges ($\Theta[(k-1)^t]$ for constant $k\geq 3$). Thus, via the runtime of Equation~(\ref{eqn:runtime}), our algorithm runs in polynomial time on crash (i.e. time $\operatorname{poly}(|V(C_{t,k})|,|E(C_{t,k})|)$), whereas brute force diagonalization would require exponential time in $|V(C_{t,k})|$. Intuitively, one may think of crash hypergraphs as a hypergraph-version of trees; a crash hypergraph ``starts small'' (has a small foundation, corresponding to a tree root), and then rapidly expands ``horizontally'', resulting a hypergraph whose ``depth'' is short relative to the number of vertices (just as the depth of a complete binary tree is logarithmic in the size of the tree). Whether such hypergraphs are physically relevant is not clear; here, this family is presented primarily as a proof of concept demonstrating that exponential separations (versus brute force) are indeed possible for our algorithm.
\end{example}


\section{On the structure of 3-uniform hypergraphs}\label{scn:hypergraphs}

In this section, we take a structural graph theory perspective, and make steps towards understanding the set of $3$-uniform hypergraphs $G=(V,E)$ in which each edge is matched to a unique vertex. In the combinatorics literature, such a matching for a set system is called a \emph{System of Distinct Representatives (SDR)}~\cite{SJ011}. Formally, an SDR for a sequence of sets $S_1,\ldots, S_m$ is a sequence of distinct elements $x_1,\ldots,x_m$ such that $x_i\in S_i$ for all $i\in[m]$. Hall's well-known Marriage Theorem says that a set system $\set{S_i}_{i=1}^m$ has an SDR if and only if for any $I\subseteq [m]$, $\abs{\bigcup_{i\in I}S_i}\geq \abs{I}$.

To begin, let us state a known corollary of the Marriage Theorem, which will help foreshadow the complexity involved in attempting to characterize the set of $3$-uniform hypergraphs with SDRs. Let $m=\abs{E}$ and $n=\abs{V}$ for brevity.

\begin{corollary}\label{cor:jukna}[see, e.g.,~Corollary 5.2 of~\cite{SJ011}]
    Let $G=(V,E)$ be a $d$-regular $r$-hypergraph. If $m\leq n$, then the hypergraph has an SDR.
\end{corollary}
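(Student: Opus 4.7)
The plan is to verify Hall's marriage theorem condition directly via a double-counting argument that exploits $d$-regularity and $r$-uniformity. Recall Hall's condition for the set system $\{S_e\}_{e\in E}$ (where $S_e \subseteq V$ is the vertex set of edge $e$) asserts that for every subset $I \subseteq E$, we must have $\bigl|\bigcup_{e\in I} S_e\bigr| \geq |I|$. If this holds, the marriage theorem guarantees an SDR.

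The key step is as follows. Fix an arbitrary $I \subseteq E$, and let $U_I := \bigcup_{e \in I} S_e \subseteq V$. I would count the number of incidences $N := |\{(v,e) : v \in e,\, e \in I,\, v \in U_I\}|$ in two ways. Counting by edges: since $G$ is $r$-uniform, each $e \in I$ contributes exactly $r$ incidences, so $N = r|I|$. Counting by vertices: each $v \in U_I$ is incident to at most $d$ edges overall (by $d$-regularity), hence to at most $d$ edges of $I$, so $N \leq d\,|U_I|$. Combining,
\begin{equation*}
r|I| \;\leq\; d\,|U_I|, \qquad\text{i.e.}\qquad |U_I| \;\geq\; \tfrac{r}{d}\,|I|.
\end{equation*}

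The final ingredient is to convert the hypothesis $m \leq n$ into the bound $r \geq d$. This comes from double-counting the total number of vertex-edge incidences of $G$: on one hand this count equals $rm$ (each of the $m$ edges has $r$ vertices), on the other it equals $dn$ (each of the $n$ vertices lies in $d$ edges). Hence $rm = dn$, so $r/d = n/m \geq 1$. Plugging into the previous inequality yields $|U_I| \geq (n/m)|I| \geq |I|$, which is exactly Hall's condition. The marriage theorem then produces an SDR, completing the proof.

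There is no real obstacle here; the proof is essentially a one-paragraph application of Hall plus double counting. The only subtlety worth flagging is the identity $rm = dn$, which is what makes the numerical hypothesis $m \leq n$ translate cleanly into the structural hypothesis $r \geq d$ that the Hall verification actually needs.
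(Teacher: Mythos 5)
Your proof is correct and is essentially the standard argument behind the cited result: verify Hall's condition by double-counting incidences within an arbitrary edge subset to get $|U_I|\ge (r/d)|I|$, then use the global identity $rm=dn$ (which the paper itself notes in its aside) together with $m\le n$ to conclude $r/d\ge 1$. The paper does not prove this corollary itself but defers to the reference, whose proof takes the same route, so there is nothing to add.
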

\noindent Thus, the class of hypergraphs with SDRs is, in the sense above, not necessarily small. In this section, we focus our attention on the edge case $m=n$.
In this case, any $2$-uniform hypergraph (i.e. a \emph{graph}) must be a cycle (up to attaching disjoint paths to each vertex of the cycle). In contrast, we will see that the set of $3$-uniform hypergraphs with $m=n$ and with SDRs is more complex.

\textbf{Aside.} In Corollary~\ref{cor:jukna}, note via doublecounting that we must have $nd=mr$ (i.e. add up all degrees on the left, and take the union of all edges on the right); thus, if $r=3$ and $m=n$, we must have regularity $d=3$. This yields that {any} $3$-regular $3$-uniform hypergraph with $m=n$ has an SDR.

Finally, using the Marriage Theorem we obtain an elementary proof of a generalization of Corollary \ref{cor:actually}.

\begin{theorem}\label{thm:general}
Let $G$ be a $k$-uniform hypergraph of transfer type $b\le |V(G)|-|E(G)|+k-1$. Then $G$ has an SDR. Moreover, this bound is tight.
\end{theorem}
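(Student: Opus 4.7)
The plan is to verify Hall's marriage theorem directly using the structure of the transfer filtration. Fix a transfer filtration $G_0 \subseteq \cdots \subseteq G_m = G$ of type $b$ witnessing the hypothesis; by definition the number of \emph{bad steps} (those satisfying $|V(G_i)| = |V(G_{i-1})|$) equals $s := m - (n-b) \leq k-1$. For each good step $c$, let $v_c$ denote the newly introduced vertex. Given any $I \subseteq E$, partition $V_I := \bigcup_{e \in I} e$ into three disjoint pieces $V_I^0 := V_I \cap V(G_0)$, $V_I^g := \{v_c : E_c \in I \text{ and } E_c \text{ is good}\}$, and $V_I^{ng} := V_I \setminus (V_I^0 \cup V_I^g)$. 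Since $|V_I^g|$ equals the number of good edges in $I$, Hall's inequality $|V_I| \geq |I|$ is equivalent to $|V_I^0| + |V_I^{ng}| \geq |I_b|$, where $I_b \subseteq I$ consists of the bad edges.

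The key step---and I expect this to be the main insight driving the proof---is to focus on the earliest edge $E_{i_1} \in I$ in filtration order. Every vertex of $E_{i_1}$ that lies in $V(G_{i_1 - 1})$ (that is, every vertex of $E_{i_1}$ except possibly the single newly introduced one, when $E_{i_1}$ is good) must belong to $V_I^0 \cup V_I^{ng}$: foundation vertices fall in $V_I^0$, while a previously introduced vertex $v_c$ with $c < i_1$ must satisfy $E_c \notin I$ by the minimality of $i_1$, hence $v_c \in V_I^{ng}$. Counting, when $E_{i_1}$ is good there are at least $k-1$ such ``old'' vertices, and when $E_{i_1}$ is bad there are $k$. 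In either case we obtain the uniform lower bound $|V_I^0| + |V_I^{ng}| \geq k-1$.

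Combining this with the trivial bound $|I_b| \leq s \leq k-1$, any putative failure of Hall's condition would yield $|V_I^0| + |V_I^{ng}| < |I_b| \leq k-1$, i.e.\ $|V_I^0| + |V_I^{ng}| \leq k-2$, directly contradicting the bound from the previous paragraph. Hence Hall's condition holds for every $I$ and the desired SDR exists. For tightness, I will exhibit the $k$-uniform hypergraph $G^{\ast}$ consisting of $k+1$ parallel copies of a single $k$-edge on $k$ vertices: one checks $n=k$, $m=k+1$, transfer type $b = k-1 = n-m+k$, and Hall's condition visibly fails on the set of all edges (which cover only $k$ vertices while numbering $k+1$), so $G^{\ast}$ has no SDR. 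This shows the inequality $b \leq n - m + k - 1$ cannot be relaxed.
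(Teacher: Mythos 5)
Your proposal is correct and follows essentially the same route as the paper: both verify Hall's condition directly from the transfer filtration by noting that the number of steps adding no new vertex is $m-n+b\le k-1$, with the earliest edge of $I$ in filtration order supplying the $k$ (or $k-1$) base vertices --- your partition of $V_I$ is just a repackaging of the paper's iterative count via $\mu_i=|V(G_{i-1})|-|V(G_i)|+1$. The tightness example ($k+1$ copies of one edge on $k$ vertices, transfer type $k-1=|V|-|E|+k$) is also identical to the paper's.
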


\begin{proof}
  Consider a transfer filtration $G_0\subseteq G_1\subseteq \cdots \subseteq G_m=G$ so that $G$ has $m$ edges and transfer type $b=|V(G_0)|$. Suppose the edges $E(G)=\{E_1,\ldots E_m\}$ are ordered in such a way that $E(G_i)\setminus E(G_{i-1})=\{E_i\}$ for all $i\in\{1,\ldots,m\}$. By the Marriage Theorem, it suffices to prove that $|E_{i_1}\cup \cdots \cup E_{i_l}|\ge l$ for any increasing sequence $1\le i_1<i_2<\cdots<i_l\le m$. For each $i\in \{1,\ldots,m\}$, let $\mu_i=|V(G_{i-1})|-|V(G_i)|+1$. By definition of transfer filtration, $\mu_i\in \{0,1\}$ for each $i\in \{1,\ldots,m\}$. More precisely, $\mu_i=0$ if and only if $E_i$ contains a vertex that does not belong to any $E_j$ with $j<i$. Hence, given any increasing sequence $1\le i_1<i_2<\cdots<i_l\le m$, we have $|E_{i_1}|=k$, $|E_{i_1}\cup E_{i_2}|\ge k+1-\mu_{i_2}$ and, iteratively adding one edge at the time,
\begin{equation}\label{eq:estimate}
|E_{i_1}\cup E_{i_2}\cup \cdots\cup E_{i_l}|\ge k+l-1-(\mu_{i_2}+\cdots+\mu_{i_l}).
\end{equation}
Since by assumption we have
\begin{equation}\label{eq:bound}
\mu_{i_2}+\cdots+\mu_{i_l}\le \mu_1+\mu_2+\cdots+\mu_m=b-|V(G)|+|E(G)|\le k-1\,.
\end{equation}
Substituting \eqref{eq:bound} into \eqref{eq:estimate} we obtain
\[
|E_{i_1}\cup E_{i_2}\cup \cdots\cup E_{i_l}|\ge k+l-1-(k-1)\ge l
\]
which, by the Marriage Theorem, proves the bound. to prove that the bound is tight, consider a $k$-uniform hypergraph $G$ with $|V(G)|=k$ and $|E(G)|=k+1$ so that the same edge is repeated $k+1$ times. Clearly there exist a transfer filtration of type $k-1>k-2=|V(G)|-|E(G)|+k-1$ and (since $|V(G)|<|E(G)|)$, $G$ has no SDR. Hence the bound is tight.

\end{proof}

\subsection{Intersecting families}\label{sscn:intersect}

We begin with the study of $3$-uniform hypergraphs which are \emph{intersecting families}. Here, a set system $S$ is an \emph{intersecting family} if any pair of sets has non-empty intersection. We say the system is \emph{$k$-intersecting} if any distinct pair of sets $A,B\in S$ satisfies $\abs{A\cap B}=k$.

\subsubsection{$2$-intersecting families}\label{ssscn:twointersect}

For $2$-intersecting families, we actually characterize a \emph{larger} class: The set of $3$-uniform hypergraphs with the weaker condition that \emph{if} a pair of distinct edges $e_i$ and $e_j$ intersect, then either $\abs{e_i\cap e_j}=2$ (Lemma~\ref{lm1}) or $\abs{e_i\cap e_j}\geq 2$ (Lemma~\ref{lm2}). The characterization for $2$-intersecting families follows as an immediate corollary, and is given in Corollary~\ref{cor:2intersect}.

\begin{lemma} \label{lm1}
Let $G=(V,E)$ be a $3$-uniform hypergraph such that:
\begin{enumerate}
	\item \label{l1} $m=n$ and $G$ has an SDR, and
	\item \label{l2} If a pair of distinct edges $e_i$ and $e_j$ intersects, then $\abs{e_i \cap e_j}=2$. 	
\end{enumerate}
Then $G$ is a cycle of length $4$.
\end{lemma}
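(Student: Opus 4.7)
The plan is to combine a global degree/SDR balance with a rigid local structure forced by hypothesis~(2) to conclude that each connected component of $G$ is the $3$-uniform $4$-cycle of Example~\ref{ex:4cycle1}. For the global balance, summing $|e|=3$ over $e\in E$ gives $\sum_{v}d_v=3m=3n$, so the average degree is exactly~$3$; Hall's Marriage Theorem applied to the SDR on each connected component $C$ gives $|V(C)|\ge|E(C)|$, and $\sum_C|V(C)|=n=m=\sum_C|E(C)|$ then forces $|V(C)|=|E(C)|$ for every component and excludes isolated vertices. It therefore suffices to treat the connected case.

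Next I would analyze the \emph{local structure at a vertex}~$v$. Any two distinct edges through~$v$ already share~$v$, so by hypothesis~(2) they share exactly one additional vertex; equivalently, the punctured edges $S_i:=e_i\setminus\{v\}$ form a family of $2$-sets that pairwise intersect in exactly one element. A short case analysis shows such a family is either a \emph{book}, in which all $S_i$ contain a common element~$a$ (so the edges through $v$ are $\{v,a,x_1\},\ldots,\{v,a,x_{d_v}\}$), or a \emph{triangle} $\{a,b\},\{a,c\},\{b,c\}$ with $d_v=3$ corresponding to edges $\{v,a,b\},\{v,a,c\},\{v,b,c\}$.

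Then I would \emph{rule out books with $d_v\ge 3$} and \emph{propagate triangles}. A second application of hypothesis~(2) shows that in a book of size $d\ge 3$, the vertex $a$ admits no edge outside the book (any such edge would have to contain every distinct $x_i$) and each $x_i$ has degree~$1$; the component of~$v$ would then have $d+2$ vertices and only~$d$ edges, contradicting $|V(C)|=|E(C)|$. Together with $\sum d_v=3n$, this forces every vertex to have degree exactly~$3$ and sit in a triangle. Applying the local analysis at a triangle-neighbour $a$ of~$v$ in $\{v,a,b\},\{v,a,c\},\{v,b,c\}$, the two edges already containing~$a$ share~$\{v,a\}$, so the triangle at~$a$ must involve the same vertex set $\{v,a,b,c\}$ and the remaining edge through~$a$ must be~$\{a,b,c\}$. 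These four edges exhaust the degree budget at each of $v,a,b,c$, so the component of~$v$ is precisely the $4$-cycle of Example~\ref{ex:4cycle1}.

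The main obstacle is the bookkeeping that rules out books: carefully tracking which edges can meet a ``petal''~$x_i$ in order to conclude $d_{x_i}=1$, and checking that the book cannot secretly extend into a balanced component via edges I have not yet considered. Once this is handled, every subsequent step---degree uniformity, triangle propagation, and the uniqueness of the fourth edge---is essentially rigid.
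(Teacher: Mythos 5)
Your plan is correct, and all of its key sub-claims check out: the Hall/balance step gives $|V(C)|=|E(C)|$ on every component; the link of a vertex (the punctured $2$-sets $e\setminus\{v\}$, pairwise meeting in exactly one element by hypothesis~(2)) is indeed either a ``book'' through a common pair or a triangle with $d_v=3$; a book of size $d\ge 3$ really is sealed off (an edge through $a$ avoiding $v$ would need to contain all $d$ petals, and an edge through a petal avoiding $v,a$ would meet the spine edge in one vertex), so its component has $d+2$ vertices and $d$ edges, contradicting balance; then $\sum_v d_v=3n$ with maximum degree $3$ forces $3$-regularity, every link is a triangle, and the propagation step at a triangle-neighbour pins the component down to the four $3$-subsets of $\{v,a,b,c\}$, i.e.\ the $4$-cycle. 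This is a genuinely different route from the paper's proof, which starts from one pair of edges sharing two vertices and performs an exhaustive, picture-driven case analysis of every way to add a further edge, ruling out the tight star because it forces $n=m+2$ and concluding that the only admissible growth sequence yields the $4$-cycle. Your argument replaces that enumeration by a local link classification plus degree counting, which is shorter to verify and makes the role of $m=n$ explicit at two separate points (component balance and $3$-regularity), whereas the paper's version is more elementary and self-contained for a reader following the figures; both hinge on the same obstruction, namely that the book/tight-star configuration has two more vertices than edges. One small remark applying to both proofs: what is actually established is that every connected component is a $4$-cycle, and the lemma as stated implicitly assumes connectivity (a disjoint union of two $4$-cycles satisfies hypotheses (1) and (2)); your reduction ``it suffices to treat the connected case'' is the same silent step the paper takes, so it is not a defect of your proposal relative to the paper, but it is worth flagging when you write the argument up.
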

\begin{proof}
	We proceed by case analysis. We begin with a pair of intersecting edges (WLOG we may assume one such pair exists) as in Figure~\ref{fig:twoedges}:
\begin{figure}[h]
\begin{center}
\vspace{2mm}
\resizebox{150pt}{!}{%
\begin{tikzpicture}
    \coordinate (v1) at (0,0) {};
    \coordinate (v2) at (2,0) {};
    \coordinate (v3) at (4,0){};
    \coordinate (v4) at (6,0) {};
   \foreach \v in {v1,v2,v3,v4,v4}{
      \node [circle, minimum size=0.4cm, line width=0pt] (\v') at (\v) {};
    }
    \filldraw [draw=black, fill=red, minimum size=0pt, opacity=0.2]
            (v1'.270) -- (v3'.270) arc(-90:90:0.2cm) -- (v1'.90) arc (90:270:0.2cm)-- cycle;
	\filldraw [draw=black, fill=blue, minimum size=0pt, opacity=0.2]
            (v2'.270) -- (v4'.270) arc(-90:90:0.2cm) -- (v2'.90) arc (90:270:0.2cm);
		
    \foreach \l in {1,...,4}{
      \filldraw [black] (v\l) circle (2pt) node [label=below:$v_\l$] {};
    }
\end{tikzpicture}
}
\end{center}
\caption{A pair of edges with intersection size $2$.}
\label{fig:twoedges}
\end{figure}

\noindent We now analyze how one can ``grow'' the graph by adding hyperedges. Consider first the formation of a tight star as in Figure~\ref{fig:hyperstar}.
\begin{figure}[h]
\begin{center}
\resizebox{150pt}{!}{%
\begin{tikzpicture}
    \coordinate (v1) at (0,0) {};
    \coordinate (v2) at (2,0) {};
    \coordinate (v3) at (4,0) {};
    \coordinate (v4) at (6,0) {};
    \coordinate (v5) at (3,-2){};
    \foreach \v in {v1,v2,v3,v4,v5}{
      \node [circle, minimum size=0.4cm, line width=0pt] (\v') at (\v) {};
    }
    \filldraw [draw=black, fill=red, opacity=0.2]
		(v1'.270) -- (v3'.270) arc (-90:90:0.2)
		-- (v2'.90) arc (90:90:0.2)
		-- (v1'.90) arc (90:270:0.2) -- cycle;
		
    \filldraw [draw=black, fill=blue, opacity=0.2]
		(v2'.90) -- (v4'.90) arc (90:-90:0.2)
		-- (v3'.-90) arc (-90:-90:0.2)
		-- (v2'.-90) arc (-90:-270:0.2) -- cycle;
		
    \filldraw [draw=black, fill=green, opacity=0.2]
		(v3'.-30) -- (v5'.-30) arc (-30:-150:0.2)
		-- (v2'.210) arc (210:90:0.2)
		-- (v3'.90) arc (90:-30:0.2) -- cycle;
		
    \foreach \l in {1,...,5}{
      \filldraw [black] (v\l) circle (2pt) node [label=below:$v_\l$] {};
    }
\end{tikzpicture}
}
\caption{A $3$-uniform hypergraph which is a tight star of size $3$.}
\label{fig:hyperstar}
\end{center}
\end{figure}
We argue that this structure is impossible, given conditions $\ref{l1}$ and $\ref{l2}$. To see why, denote vertices $\{v_1,v_4,v_5\}$ as ``outer'' vertices and $\{v_2,v_3\}$ as ``inner'' vertices. By symmetry, there are seven options for adding a new edge to the tight star:
\begin{enumerate}[label=(\roman*)]
	\item Add an edge containing a {new} vertex $v_6$, an {outer} vertex, and an {inner} vertex, e.g. $\{v_1,v_2,v_6\}$. This violates property $\ref{l2}$ since $\abs{\{v_1,v_2,v_6\}\cap \{v_2,v_3,v_4\}}=1$.

\begin{center}
\resizebox{150pt}{!}{%
\begin{tikzpicture}
    \coordinate (v1) at (0,0) {};
    \coordinate (v2) at (2,0) {};
    \coordinate (v3) at (4,0) {};
    \coordinate (v4) at (6,0) {};
    \coordinate (v5) at (3,-2){};
	\coordinate (v6) at (1,-2){};
    \foreach \v in {v1,v2,v3,v4,v5,v6}{
      \node [circle, minimum size=0.4cm, line width=0pt] (\v') at (\v) {};
    }
    \filldraw [draw=black, fill=red, opacity=0.2]
		(v1'.270) -- (v3'.270) arc (-90:90:0.2)
		-- (v2'.90) arc (90:90:0.2)
		-- (v1'.90) arc (90:270:0.2) -- cycle;
		
    \filldraw [draw=black, fill=blue, opacity=0.2]
		(v2'.90) -- (v4'.90) arc (90:-90:0.2)
		-- (v3'.-90) arc (-90:-90:0.2)
		-- (v2'.-90) arc (-90:-270:0.2) -- cycle;
		
    \filldraw [draw=black, fill=green, opacity=0.2]
		(v3'.-30) -- (v5'.-30) arc (-30:-150:0.2)
		-- (v2'.210) arc (210:90:0.2)
		-- (v3'.90) arc (90:-30:0.2) -- cycle;
	
	\filldraw [draw=black, fill=yellow, opacity=0.2]
		(v2'.-30) -- (v6'.-30) arc (-30:-150:0.2)
		-- (v1'.210) arc (210:90:0.2)
		-- (v2'.90) arc (90:-30:0.2) -- cycle;
		
    \foreach \l in {1,...,6}{
      \filldraw [black] (v\l) circle (2pt) node [label=below:$v_\l$] {};
    }
\end{tikzpicture}
}
\end{center}
	\vspace{-2mm}
	\item Add an edge containing one {new} vertex $v_6$, and two {outer} vertices, e.g. $\{v_1,v_5,v_6\}$. This violates property $\ref{l2}$, since $\abs{\{v_1,v_5,v_6\}\cap \{v_2,v_3,v_5\}}=1$.
	\begin{center}
\resizebox{150pt}{!}{%
	\begin{tikzpicture}
    \coordinate (v1) at (0,0) {};
    \coordinate (v2) at (2,0) {};
    \coordinate (v3) at (4,0) {};
    \coordinate (v4) at (6,0) {};
    \coordinate (v5) at (3,-2) {};
	\coordinate (v6) at (1,-2) {};
    \foreach \i in {1,...,6}{
      \node [circle, minimum size=0.4cm, line width=0pt] (v\i') at (v\i) {};
    }
    \filldraw [draw=black, fill=red, opacity=0.2]
		(v1'.270) -- (v3'.270) arc (-90:90:0.2)
		-- (v1'.90) arc (90:270:0.2) -- cycle;
		
    \filldraw [draw=black, fill=blue, opacity=0.2]
		(v2'.90) -- (v4'.90) arc (90:-90:0.2)
		-- (v2'.-90) arc (-90:-270:0.2) -- cycle;
		
    \filldraw [draw=black, fill=green, opacity=0.2]
		(v3'.-30) -- (v5'.-30) arc (-30:-150:0.2)
		-- (v2'.210) arc (210:90:0.2)
		-- (v3'.90) arc (90:-30:0.2) -- cycle;
	
	\filldraw [draw=black, fill=yellow, opacity=0.2]
		(v1'.60) -- (v5'.60) arc (60:-90:0.2)
		-- (v6'.-90) arc (270:210:0.2)
		-- (v1'.210) arc (210:60:0.2);
		
    \foreach \l in {1,...,6}{
      \filldraw [black] (v\l) circle (2pt) node [label=below:$v_\l$] {};
    }
\end{tikzpicture}
}
\end{center}
	\item \label{c3} Add an edge containing one {new} vertex $v_6$, two {inner} vertices, e.g. $\set{v_2,v_3,v_6}$, which is the only case not violating property $\ref{l2}$.
\begin{center}
\resizebox{150pt}{!}{%
\begin{tikzpicture}
    \coordinate (v1) at (0,0) {};
    \coordinate (v2) at (2,0) {};
    \coordinate (v3) at (4,0) {};
    \coordinate (v4) at (6,0) {};
    \coordinate (v5) at (3,-2){};
	\coordinate (v6) at (3,2) {};
    \foreach \v in {v1,v2,v3,v4,v5,v6}{
      \node [circle, minimum size=0.4cm, line width=0pt] (\v') at (\v) {};
    }
    \filldraw [draw=black, fill=red, opacity=0.2]
		(v1'.270) -- (v3'.270) arc (-90:90:0.2)
		-- (v2'.90) arc (90:90:0.2)
		-- (v1'.90) arc (90:270:0.2) -- cycle;
		
    \filldraw [draw=black, fill=blue, opacity=0.2]
		(v2'.90) -- (v4'.90) arc (90:-90:0.2)
		-- (v3'.-90) arc (-90:-90:0.2)
		-- (v2'.-90) arc (-90:-270:0.2) -- cycle;
		
    \filldraw [draw=black, fill=green, opacity=0.2]
		(v3'.-30) -- (v5'.-30) arc (-30:-150:0.2)
		-- (v2'.210) arc (210:90:0.2)
		-- (v3'.90) arc (90:-30:0.2) -- cycle;
	
	\filldraw [draw=black, fill=yellow, opacity=0.2]
		(v3'.30) -- (v6'.30) arc (30:150:0.2)
		-- (v2'.150) arc (150:270:0.2)
		-- (v3'.270) arc (-90:30:0.2) -- cycle;
		
    \foreach \l in {1,...,6}{
      \filldraw [black] (v\l) circle (2pt) node [label=below:$v_\l$] {};
    }
\end{tikzpicture}
}
\end{center}
\vspace{-2mm}

		\item Add an edge containing two {outer} vertices, and one {inner} vertex, e.g. $\set{v_1,v_2,v_5}$, which violates property $\ref{l2}$, since $\abs{\{v_1,v_2,v_5\}\cap \{v_2,v_3,v_4\}}=1$.
		\begin{center}
\resizebox{150pt}{!}{%
		\begin{tikzpicture}
	    \coordinate (v1) at (0,0) {};
	    \coordinate (v2) at (2,0) {};
	    \coordinate (v3) at (4,0) {};
	    \coordinate (v4) at (6,0) {};
	    \coordinate (v5) at (3,-2) {};
	    \foreach \i in {1,...,5}{
	      \node [circle, minimum size=0.4cm, line width=0pt] (v\i') at (v\i) {};
	    }
	    \filldraw [draw=black, fill=red, opacity=0.2]
			(v1'.270) -- (v3'.270) arc (-90:90:0.2)
			-- (v1'.90) arc (90:270:0.2) -- cycle;
			
	    \filldraw [draw=black, fill=blue, opacity=0.2]
			(v2'.90) -- (v4'.90) arc (90:-90:0.2)
			-- (v2'.-90) arc (-90:-270:0.2) -- cycle;
			
	    \filldraw [draw=black, fill=green, opacity=0.2]
			(v3'.-30) -- (v5'.-30) arc (-30:-150:0.2)
			-- (v2'.210) arc (210:90:0.2)
			-- (v3'.90) arc (90:-30:0.2) -- cycle;
		
		\filldraw [draw=black, fill=yellow, opacity=0.2]
			(v1'.90) -- (v2'.90) arc (90:30:0.2)
			-- (v5'.30) arc (30:-120:0.2)
			-- (v1'.-120) arc (240:90:0.2);
			
	    \foreach \l in {1,...,5}{
	      \filldraw [black] (v\l) circle (2pt) node [label=below:$v_\l$] {};
	    }
	\end{tikzpicture}
}
	\end{center}
	
	\item \label{c5} Add an edge containing two {inner} vertices and one {outer} vertex, e.g. $\set{v_2,v_3,v_4}$, which clearly violates property $\ref{l2}$.
	\begin{center}
\resizebox{150pt}{!}{%
	\begin{tikzpicture}
    \coordinate (v1) at (0,0) {};
    \coordinate (v2) at (2,0) {};
    \coordinate (v3) at (4,0) {};
    \coordinate (v4) at (6,0) {};
    \coordinate (v5) at (3,-2){};
    \foreach \v in {v1,v2,v3,v4,v5}{
      \node [circle, minimum size=0.4cm, line width=0pt] (\v') at (\v) {};
    }
	\foreach \v in {v1,v2,v3,v4,v5}{
      \node [circle, minimum width=1cm] (\v'') at (\v) {};
    }
	
	\foreach \v in {v1,v2,v3,v4,v5}{
      \node [circle, minimum width=.74cm] (\v''') at (\v) {};
    }
	
	\filldraw [draw=black, fill=yellow, opacity=0.2]
		(v1'''.270) -- (v3'''.270) arc (-90:90:0.37)
		-- (v1'''.90) arc (90:270:0.37) -- cycle;
		
    \filldraw [draw=black, fill=blue, opacity=0.5]
		(v2'.90) -- (v4'.90) arc (90:-90:0.2)
		-- (v2'.-90) arc (-90:-270:0.2) -- cycle;
	
	\filldraw [draw=black, fill=pink, opacity=0.5]
		(v2''.90) -- (v4''.90) arc (90:-90:0.5)
		-- (v2''.270) arc (270:90:0.5) -- cycle;	
		
    \filldraw [draw=black, fill=green, opacity=0.2]
		(v3'.-30) -- (v5'.-30) arc (-30:-150:0.2)
		-- (v2'.210) arc (210:90:0.2)
		-- (v3'.90) arc (90:-30:0.2) -- cycle;
		
    \foreach \l in {1,...,5}{
      \filldraw [black] (v\l) circle (2pt) node [label=below:$v_\l$] {};
    }
	\end{tikzpicture}
}
	\end{center}
	\item Add an edge containing three {outer} vertices, e.g. $\set{v_1,v_5,v_4}$, which violates property $\ref{l2}$ since  $\abs{\{v_1,v_5,v_4\}\cap \{v_2,v_3,v_4\}}=1$.
\begin{center}
\resizebox{150pt}{!}{%
\begin{tikzpicture}
    \coordinate (v1) at (0,0) {};
    \coordinate (v2) at (2,0) {};
    \coordinate (v3) at (4,0) {};
    \coordinate (v4) at (6,0) {};
    \coordinate (v5) at (3,-2){};
    \foreach \v in {v1,v2,v3,v4,v5}{
      \node [circle, minimum size=0.4cm, line width=0pt] (\v') at (\v) {};
    }
    \filldraw [draw=black, fill=red, opacity=0.2]
		(v1'.270) -- (v3'.270) arc (-90:90:0.2)
		-- (v2'.90) arc (90:90:0.2)
		-- (v1'.90) arc (90:270:0.2) -- cycle;
		
    \filldraw [draw=black, fill=blue, opacity=0.2]
		(v2'.90) -- (v4'.90) arc (90:-90:0.2)
		-- (v3'.-90) arc (-90:-90:0.2)
		-- (v2'.-90) arc (-90:-270:0.2) -- cycle;
		
    \filldraw [draw=black, fill=green, opacity=0.2]
		(v3'.-30) -- (v5'.-30) arc (-30:-150:0.2)
		-- (v2'.210) arc (210:90:0.2)
		-- (v3'.90) arc (90:-30:0.2) -- cycle;
	
	\filldraw [draw=black, fill=yellow, opacity=0.2]
		(v1'.45) -- (v5'.120) arc(-120:-60:0.2)
		-- (v4'.120) arc(120:-60:0.2)
		-- (v5'.300) arc(300:250:0.2)
		-- (v1'.250) arc(250:45:0.2);

    \foreach \l in {1,...,5}{
      \filldraw [black] (v\l) circle (2pt) node [label=below:$v_\l$] {};
    }
\end{tikzpicture}
}
\end{center}
    \item Add two new vertices, and one {inner} or {outer} vertex. This violates property \ref{l2}.
\end{enumerate}
We conclude that the only possibility is case \ref{c3}, which grows the tight star. In other words, once we have a tight star, our only option is to continue to build a larger tight star (with distinct edges). But a tight star cannot satisfy property~\ref{l1}, since it has $n=m+2$. Thus, $G$ cannot contain a tight star (consisting of more than $2$ edges).

Let us continue our case analysis of how Figure~\ref{fig:twoedges} can be extended:
\begin{enumerate}[label=(\alph*)]
	\item Add a {new} vertex and two {outer} vertices, which violates property $\ref{l2}$.
\begin{center}
\resizebox{150pt}{!}{%
\begin{tikzpicture}
    \coordinate (v1) at (0,0);
    \coordinate (v2) at (2,0) {};
    \coordinate (v3) at (4,0);
    \coordinate (v4) at (6,0) {};
    \coordinate (v5) at (3,-2) {};
    \foreach \v in {v1,v2,v3,v4,v5}{
      \node [circle, minimum size=0.4cm, line width=0pt] (\v') at (\v) {};
    }
    \filldraw [draw=black, fill=red, opacity=0.2]
		(v1'.270) -- (v3'.270) arc (-90:90:0.2)
		-- (v2'.90) arc (90:90:0.2)
		-- (v1'.90) arc (90:270:0.2) -- cycle;
		
    \filldraw [draw=black, fill=blue, opacity=0.2]
		(v2'.90) -- (v4'.90) arc (90:-90:0.2)
		-- (v3'.-90) arc (-90:-90:0.2)
		-- (v2'.-90) arc (-90:-270:0.2) -- cycle;
	
	\filldraw [draw=black, fill=yellow, opacity=0.2]
		(v1'.45) -- (v5'.120) arc(-120:-60:0.2)
		-- (v4'.120) arc(120:-60:0.2)
		-- (v5'.300) arc(300:250:0.2)
		-- (v1'.250) arc(250:45:0.2);
    \foreach \l in {1,...,5}{
      \filldraw [black] (v\l) circle (2pt) node [label=below:$v_\l$] {};
    }
\end{tikzpicture}
}
\end{center}

	\item Add a {new} vertex, one {inner} vertex, and one {outer} vertex, which violates property $\ref{l2}$, e.g. in:
	\begin{center}
\resizebox{150pt}{!}{%
\begin{tikzpicture}
    \coordinate (v1) at (0,0) {};
    \coordinate (v2) at (2,0) {};
    \coordinate (v3) at (4,0) {};
    \coordinate (v4) at (6,0) {};
    \coordinate (v5) at (8,0) {};
    \foreach \v in {v1,v2,v3,v4,v5}{
      \node [circle, minimum size=0.4cm, line width=0pt] (\v') at (\v) {};
    }
    \filldraw [draw=black, fill=red, opacity=0.2]
		(v1'.270) -- (v3'.270) arc (-90:90:0.2)
		-- (v2'.90) arc (90:90:0.2)
		-- (v1'.90) arc (90:270:0.2) -- cycle;
		
    \filldraw [draw=black, fill=blue, opacity=0.2]
		(v2'.90) -- (v4'.90) arc (90:-90:0.2)
		-- (v3'.-90) arc (-90:-90:0.2)
		-- (v2'.-90) arc (-90:-270:0.2) -- cycle;
	\filldraw [draw=black, fill=green, opacity=0.2]
		(v3'.90) -- (v5'.90) arc (90:-90:0.2)
		-- (v4'.-90) arc (-90:-90:0.2)
		-- (v3'.-90) arc (-90:-270:0.2) -- cycle;
		
    \foreach \l in {1,...,5}{
      \filldraw [black] (v\l) circle (2pt) node [label=below:$v_\l$] {};
    }
\end{tikzpicture}
}
\end{center}

	\item The only remaining option is to add an edge such as the blue one below. This temporarily violates property $\ref{l1}$, since $m\neq n$.
\begin{center}
\resizebox{150pt}{!}{%
\begin{tikzpicture}
    \coordinate (v1) at (0,0) {};
    \coordinate (v2) at (2,0) {};
    \coordinate (v3) at (4,0) {};
    \coordinate (v4) at (6,0) {};    \foreach \v in {v1,v2,v3,v4}{
      \node [circle, minimum size=0.4cm, line width=0pt] (\v') at (\v) {};
    }
    \filldraw [draw=black, fill=red, opacity=0.2]
		(v1'.270) -- (v3'.270) arc (-90:90:0.2)
		-- (v2'.90) arc (90:90:0.2)
		-- (v1'.90) arc (90:270:0.2) -- cycle;
		
    \filldraw [draw=black, fill=blue, opacity=0.2]
		(v2'.90) -- (v4'.90) arc (90:-90:0.2)
		-- (v3'.-90) arc (-90:-90:0.2)
		-- (v2'.-90) arc (-90:-270:0.2) -- cycle;
	
	\filldraw [draw=black, fill=cyan, opacity=0.2, xshift=0cm]
	plot [smooth, tension=2] coordinates {(v1'.180) (3,1.2) (v4'.0)}
	arc(0: -90:0.2) -- (v3'.270) arc(270:180:0.2)
	plot [smooth, tension=2] coordinates {(v3'.180) (2,0.8) (v1'.0)} arc(0:-180:0.2);
    \foreach \l in {1,...,4}{
      \filldraw [black] (v\l) circle (2pt) node [label=below:$v_\l$] {};
    }
\end{tikzpicture}
}
\end{center}
	An analogous case analysis shows that the only way to proceed is to add the yellow edge below, which yields a cycle of length $4$, and which satisfies properties \ref{l1} and \ref{l2}. No further edges can be added, completing the proof.
\begin{center}
\resizebox{150pt}{!}{%
\begin{tikzpicture}
    \coordinate (v1) at (0,0) {};
    \coordinate (v2) at (2,0) {};
    \coordinate (v3) at (4,0) {};
    \coordinate (v4) at (6,0) {};    \foreach \v in {v1,v2,v3,v4}{
      \node [circle, minimum size=0.4cm, line width=0pt] (\v') at (\v) {};
    }
    \filldraw [draw=black, fill=red, opacity=0.2]
		(v1'.270) -- (v3'.270) arc (-90:90:0.2)
		-- (v2'.90) arc (90:90:0.2)
		-- (v1'.90) arc (90:270:0.2) -- cycle;
		
    \filldraw [draw=black, fill=blue, opacity=0.2]
		(v2'.90) -- (v4'.90) arc (90:-90:0.2)
		-- (v3'.-90) arc (-90:-90:0.2)
		-- (v2'.-90) arc (-90:-270:0.2) -- cycle;
	
	\filldraw [draw=black, fill=cyan, opacity=0.2]
	plot [smooth, tension=2] coordinates {(v1'.180) (3,1.2) (v4'.0)}
	arc(0: -90:0.2) -- (v3'.270) arc(270:180:0.2)
	plot [smooth, tension=2] coordinates {(v3'.180) (2,0.8) (v1'.0)} arc(0:-180:0.2);
	
	\filldraw [draw=black, fill=yellow, opacity=0.2]
	plot [smooth, tension=2] coordinates {(v1'.180) (3,-1.2) (v4'.0)}
	arc(0:180:0.2)
	plot [smooth, tension=2] coordinates {(v4'.180) (4,-0.8) (v2'.0)}
	arc(0:90:0.2) -- (v1'.90) arc(90:180:0.2)
	;
    \foreach \l in {1,...,4}{
      \filldraw [black] (v\l) circle (2pt) node [label=below:$v_\l$] {};
    }
\end{tikzpicture}
}
\end{center}

\end{enumerate}

\end{proof}

Lemma~\ref{lm1} yields the following immediate corollary, obtained by observing that the cycle of length $4$ is in fact a $2$-intersecting family.

\begin{corollary} \label{cor:2intersect}
The unique $3$-uniform hypergraph which (1) is a $2$-intersecting family, (2) has $m=n$, and (3) has an SDR is the cycle of length $4$.
\end{corollary}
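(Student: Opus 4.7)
The plan is to observe that Corollary~\ref{cor:2intersect} is essentially a direct specialization of Lemma~\ref{lm1}, so the work is already done and only a short bridging argument plus a verification of uniqueness/existence are needed. I would present the proof in two short directions.

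For the forward direction, I would note that the hypothesis of being $2$-intersecting is \emph{strictly stronger} than condition~(\ref{l2}) of Lemma~\ref{lm1}: if every pair of distinct edges satisfies $|e_i \cap e_j| = 2$, then in particular whenever a pair intersects (which here is always), the intersection has size~$2$. Combined with $m=n$ and the existence of an SDR, the hypotheses of Lemma~\ref{lm1} are met, and so $G$ must be a $3$-uniform cycle of length~$4$. This is the main content and there is no real obstacle, since Lemma~\ref{lm1} already carried out the detailed case analysis.

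For the reverse direction (to justify the word ``unique''), I would verify that the $3$-uniform cycle of length~$4$ in fact satisfies all three conditions. Concretely, take $V=\{v_1,v_2,v_3,v_4\}$ with edges $E_i = V \setminus \{v_i\}$ for $i\in\{1,2,3,4\}$ (per Definition~\ref{definition:cycle}, this is a $3$-uniform cycle of length~$4$, matching Example~\ref{ex:4cycle1}). Then $m=n=4$, and for any $i\neq j$ we have $|E_i \cap E_j| = |V \setminus \{v_i,v_j\}| = 2$, so $G$ is a $2$-intersecting family. Finally, matching $E_i \mapsto v_{i+1 \bmod 4}$ (or any cyclic shift) yields an SDR. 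Thus the cycle of length~$4$ is indeed a valid instance, and by the forward direction it is the only one.

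In short, the entire proof is one line invoking Lemma~\ref{lm1} plus a trivial verification; no new ideas or calculations are required, and I would expect the write-up to fit in a single short paragraph.
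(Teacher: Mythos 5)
Your proposal is correct and matches the paper's argument: the paper likewise derives the corollary immediately from Lemma~\ref{lm1}, noting that a $2$-intersecting family automatically satisfies condition~(2) of that lemma and that the cycle of length $4$ is itself a $2$-intersecting family (with $m=n$ and an SDR). Your slightly more explicit verification of the cycle's properties is fine but adds nothing beyond the paper's one-line observation.
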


More generally, we can relax the statement of Lemma~\ref{lm1} to enforce only $\abs{e_i\cap e_j}\geq 2$ (as opposed to $\abs{e_i\cap e_j}= 2$) and still obtain a characterization. Note that although the statement of Lemma~\ref{lm2} below does not require $G$ to \emph{a priori} be an intersecting family, the graphs obtained in the characterization are indeed intersecting families (but not necessarily $k$-intersecting for some fixed $k$).

\begin{lemma}\label{lm2}
Let $G=(V,E)$ be a $3$-uniform hypergraph with property $\ref{l1}$ in Lemma \ref{lm1}, and assume each pair of edges $e_i$ and $e_j$ which intersects satisfies $\abs{e_i \cap e_j}\geq 2$. Then $G$ is one of the following:
\begin{enumerate}[label=(\roman*)]
	\item A 3-stacked set, or
	\item a tight star with two 2-stacked sets, or
	\item a tight star with one 3-stacked set, or
	\item a cycle of length $4$.
\end{enumerate}

\end{lemma}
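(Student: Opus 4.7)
The plan is to case-split on whether $G$ contains a stacked pair, i.e.\ two edges $e_i,e_j$ with $\abs{e_i\cap e_j}=3$ (equivalently $e_i=e_j$, since both are $3$-uniform). If no such stacked pair exists, then any two intersecting edges intersect in exactly $2$ vertices (size $1$ is forbidden by the hypothesis of Lemma~\ref{lm2}, and size $3$ is excluded by this case), so the hypotheses of Lemma~\ref{lm1} are met and that lemma yields case (iv).

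Otherwise, I would fix a stacked pair $e_1=e_2=\set{v_1,v_2,v_3}$ and inspect an arbitrary other edge $e$. By hypothesis $\abs{e\cap e_1}\in\set{0,2,3}$; assuming $G$ is connected in the line-graph sense (otherwise the argument is applied to each component separately), each such $e$ meets $e_1$ and is either a \emph{triangle} edge equal to $\set{v_1,v_2,v_3}$ or a \emph{paired} edge of the form $\set{v_i,v_j,x}$ with $\set{v_i,v_j}\subset \set{v_1,v_2,v_3}$ and $x\notin \set{v_1,v_2,v_3}$. The key observation is that two paired edges using distinct pairs from $\binom{\set{v_1,v_2,v_3}}{2}$ would intersect in exactly one vertex, violating the hypothesis; hence all paired edges share a single common pair, WLOG $\set{v_1,v_2}$.

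Let $a\ge 2$ be the number of triangle edges, and for each of the $k\ge 0$ distinct ``third vertices'' $x_1,\ldots,x_k$ let $c_i$ be the multiplicity of the paired edge $\set{v_1,v_2,x_i}$. Then $n=3+k$ and $m=a+\sum_i c_i$, so the $m=n$ hypothesis yields the constraint $\sum_i(c_i-1)=3-a$. Applying Hall's marriage theorem to the $a$ triangle edges (whose vertex union has size $3$) forces $a\le 3$, and applying it to each stacked group of paired edges (vertex union size $3$) forces $c_i\le 3$. This leaves two possibilities: if $a=3$ then $c_i=1$ for all $i$, giving case (i) when $k=0$ and case (iii) when $k\ge 1$; if $a=2$ then exactly one $c_i=2$ and the rest equal $1$, giving case (ii).

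The main obstacle I anticipate is pinning down the intended reading of cases (ii) and (iii) in the statement, so that a ``tight star with one $3$-stacked set'' is allowed to contain arbitrarily many ordinary tight-star leaves in addition to the single $3$-stacked one (and analogously for case (ii)); once the terminology is fixed, the enumeration above covers every configuration arising from the $a=2,3$ analysis. A minor residual step is verifying that disjoint components of $G$ do not create new configurations, which follows by applying Lemma~\ref{lm1} or the case-(2) argument separately to each line-graph component.
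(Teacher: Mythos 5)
The step that breaks is your ``key observation'' that two paired edges built on distinct pairs from $\set{v_1,v_2,v_3}$ must intersect in exactly one vertex. That is only true when their third vertices differ: if they share the third vertex, e.g.\ $f_1=\set{v_1,v_2,x}$ and $f_2=\set{v_1,v_3,x}$, then $f_1\cap f_2=\set{v_1,x}$ has size $2$ and nothing is violated. So you cannot conclude that all paired edges sit on one common pair, and the bookkeeping that follows ($n=3+k$, $m=a+\sum_i c_i$, $\sum_i(c_i-1)=3-a$) no longer enumerates all admissible hypergraphs. The loophole is realized by a concrete example: take $V=\set{v_1,v_2,v_3,v_4}$ with edge multiset $\set{v_1,v_2,v_3}$ (taken twice), $\set{v_1,v_2,v_4}$, $\set{v_1,v_3,v_4}$. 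Every pair of edges intersects in at least two vertices, $m=n=4$, and an SDR exists (match the two copies of $\set{v_1,v_2,v_3}$ to $v_2$ and $v_3$, then $\set{v_1,v_2,v_4}\mapsto v_1$ and $\set{v_1,v_3,v_4}\mapsto v_4$). This hypergraph contains your stacked pair, but its paired edges use two different pairs of $\set{v_1,v_2,v_3}$, so it escapes your classification: it has $a=2$ yet all $c_i=1$, contradicting your claimed dichotomy.

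Note also that this configuration (three distinct edges of the $4$-cycle, one of them $2$-stacked) is not a tight star with stacked sets — its distinct edges pairwise intersect in three \emph{different} $2$-sets — nor is it the cycle of length $4$, whose edges are required to be distinct. The paper's own proof takes a different route (it reuses the edge-growth analysis of Lemma~\ref{lm1} together with the $n-m$ invariant to argue that exactly two parallel edges must be appended to a tight star), and its one-line appeal to cases (a)--(c) of Lemma~\ref{lm1} passes over precisely this branch, namely adding a parallel edge in the situation of case (c); so your attempt has landed on the configuration that any complete argument must either exclude by an additional hypothesis or add to the list of outcomes. As written, however, the proposal is not a correct proof: the structural claim it rests on is false, and consequently the case enumeration is incomplete.
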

\begin{proof}
Case (i) trivially holds, and case (iv) is given by Lemma~\ref{lm2}. For cases (ii) and (iii), we return to the case analysis in the proof of Lemma~\ref{lm1} involving Figure~\ref{fig:hyperstar}, i.e. the tight star. Following the proof of Lemma~\ref{lm1}, once we have a tight star, we only have two choices each time we add an edge: (1) Grow the tight star (case (iii) in Lemma~\ref{lm1}), or (2) create a parallel edge (case (v) in Lemma~\ref{lm1}); note this latter case is now possible since we can have $\abs{e_i\cap e_j}=3$. The former choice adds a new vertex and new edge, preserving the invariant $n-m$. The latter choice keeps $n$ fixed but increments $m$ by $1$, implying $n-m$ decreases by $1$. It follows that one must add precisely two parallel edges, yielding cases (ii) and (iii). Finally, analyzing cases (a), (b), and (c) in Lemma~\ref{lm1} yields that no other choices of $G$ are possible.
\end{proof}

\subsubsection{$1$-intersecting families}

We next consider the set of $3$-uniform hypergraphs in which each pair of edges intersects in precisely one vertex. In this case, it turns out there is a unique $3$-uniform hypergraph with $m=n$ and an SDR --- the well-known \emph{Fano plane}.

\begin{definition}[Fano plane (see, e.g.,~\cite{SJ011})]
The Fano plane is the unique projective plane of order $2$, and can be represented by the \emph{Fano hypergraph} $G=(V,E)$ (see Figure~\ref{fig:fano}) with ${V}=\set{v_1,v_2,v_3,v_4,v_5,v_6,v_7}$ and
\[
    E=\set{\{v_1,v_2,v_3\},\{v_1,v_4,v_5\},\{v_1,v_6,v_7\},\{v_2,v_4,v_6\},\{v_2,v_5,v_7\},\{v_3,v_4,v_7\}, \{v_3,v_5,v_6\}}.
\]
\end{definition}
\begin{figure}[t]
\begin{center}
\resizebox{90pt}{!}{%
	\begin{tikzpicture}
        [inner sep=0pt]
        \coordinate (v1) at (210:2cm);
        \coordinate (v2) at (270:1cm);
        \coordinate (v3) at (330:2cm);
        \coordinate (v4) at (150:1cm);
        \coordinate (v5) at (90:2cm);
        \coordinate (v6) at (30:1cm);
        \coordinate (v7) at (0,0);
        \foreach \i in {1,...,7}{
            \node [circle, font=\tiny, minimum size=0.4cm, line width=0mm] (v\i') at (v\i) {};
        }
        \filldraw [draw=black, fill=yellow, minimum size=0pt, opacity=0.2]
            (v5'.150) -- (v1'.150) arc (-210:-30:0.2cm) -- (v5'.330) arc (-30:150:0.2cm);

        \filldraw [draw=black, fill=blue, minimum size=0pt, opacity=0.2]
            (v1'.270) -- (v3'.270) arc(-90:90:0.2cm) -- (v1'.90) arc (90:270:0.2cm);

        \filldraw [draw=black, fill=red, minimum size=0pt, opacity=0.2]
            (v3'.30) -- (v5'.30) arc(30:210:0.2cm) -- (v3'.210) arc(-150:30:0.2cm);

        \filldraw [draw=black, fill=brown, minimum size=0pt, opacity=0.2]
            (v5'.180) -- (v2'.180) arc(-180:0:0.2cm) -- (v5'.0) arc(0:180:0.2cm);
        \filldraw [draw=black, fill=cyan, minimum size=0pt, opacity=0.2]
            (v4'.240) -- (v3'.240) arc(-120:60:0.2cm) -- (v4'.60) arc(60:240:0.2cm);

        \filldraw [draw=black, fill=orange, minimum size=0pt, opacity=0.2]
            (v6'.120) -- (v1'.120) arc(120:300:0.2cm) -- (v6'.300) arc(-60:120:0.2cm) ;

        \filldraw [draw=black, fill=green, minimum size=0pt, opacity=0.2, even odd rule]
            circle (1.2cm) (0:0.8) arc (0:360:0.8cm);
       \foreach \l in {1,...,7}{
      \filldraw [black] (v\l) circle (2pt) node [inner sep=5pt, label=below:$v_{\l}$] {};
    }
    \end{tikzpicture}
    }
    \caption{The Fano hypergraph.}
    \label{fig:fano}
\end{center}
\end{figure}

\begin{lemma}\label{l:1intersect}
The unique $3$-uniform hypergraph which is (1) a $1$-intersecting family, (2) has $m=n$, and (3) has an SDR is the Fano hypergraph.	
\end{lemma}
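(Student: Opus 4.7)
\medskip
\noindent\textbf{Proof plan.} The plan is to reduce everything to a degree-counting argument that forces $G$ to be $3$-regular on exactly $7$ vertices, after which the remaining edges are pinned down combinatorially. First I would double-count incidences: $\sum_{v\in V}\deg(v)=3m=3n$, so the average vertex degree is exactly $3$, hence some vertex has degree at least $3$. Next I would show that the maximum degree $d$ is actually equal to $3$. Let $v$ attain the maximum and let $e_1,\ldots,e_d$ be the edges through $v$; by the $1$-intersecting property the sets $e_i\setminus\{v\}$ are $d$ pairwise-disjoint $2$-element sets. Any edge $f$ not containing $v$ must meet each $e_i$ in exactly one element of $e_i\setminus\{v\}$, so $f$ hits at most $|f|=3$ of the $e_i$; if $d\ge 4$ this forces every edge to pass through $v$, giving $m=d$ and $n=1+2d>m$, contradicting $m=n$. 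Therefore $d=3$, and since average $=$ maximum $=3$ every vertex has degree exactly $3$; the SDR hypothesis is then automatic by Corollary~\ref{cor:jukna}, and Hall's condition plays no further role.

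Having established $3$-regularity, I would next pin down the vertex count. Fix any $v$ with edges $e_1,e_2,e_3$ through it; these cover $1+3\cdot 2=7$ vertices. I claim $n=7$: if $u$ were an eighth vertex, it would lie in three edges $f_1,f_2,f_3$ none of which contains $v$ (otherwise $f_j=e_i$ forcing $u\in e_i$). Each $f_j$ would then have to contain exactly one element of each of the three disjoint pairs $e_i\setminus\{v\}$, filling all three slots and leaving no room for $u$, a contradiction. So $n=m=7$ and there are $4$ remaining edges.

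The final step identifies the hypergraph. Writing $e_i=\{v\}\cup\{x_{i,1},x_{i,2}\}$, the argument above shows each remaining edge contains exactly one vertex from each pair $\{x_{i,1},x_{i,2}\}$, and can therefore be encoded by a string $(\epsilon_1,\epsilon_2,\epsilon_3)\in\{1,2\}^3$. Two such edges intersect in $|\{i:\epsilon_i=\epsilon_i'\}|$ vertices, so the $1$-intersecting property demands that any two chosen strings agree in \emph{exactly} one coordinate. A short case check on $\{1,2\}^3$ reveals that the only $4$-subsets with this pairwise property are $\{111,122,212,221\}$ and its complement $\{112,121,211,222\}$, which are interchanged by the relabelling $x_{1,1}\leftrightarrow x_{1,2}$. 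Either choice, after renaming $v,x_{1,1},x_{1,2},x_{2,1},x_{2,2},x_{3,1},x_{3,2}$ as $v_1,\ldots,v_7$ in an appropriate order, reproduces the edge set of the Fano hypergraph exactly.

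The main obstacle is the degree-bound step: it is crucial to argue that no $d\ge 4$ configuration can slip through with $m=n$, since a priori one might imagine mixing a high-degree ``pencil'' with a few external edges. The key observation that any external edge meets a $d$-pencil in at most $3$ of its members handles this cleanly, and after that the argument reduces to elementary bookkeeping on $\{1,2\}^3$.
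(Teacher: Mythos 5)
Your proof is correct, and it takes a genuinely different route from the paper's. The paper argues by an incremental case analysis: starting from two intersecting edges it enumerates, with diagrams, all ways to add a further edge while preserving the $1$-intersecting property, and shows one is forced step by step into the Fano configuration. You instead argue globally: double counting gives average degree $3$; the pencil argument (an edge missing a vertex of degree $d$ must meet $d$ pairwise disjoint pairs, so $d\le 3$) pins the maximum degree, whence $3$-regularity; the same pencil argument forces $n=m=7$; and the four edges off a fixed vertex become transversals of the three pairs, encoded in $\{1,2\}^3$, where the requirement of pairwise agreement in exactly one coordinate is a Hamming-distance-$2$ (parity) condition whose only $4$-element solutions are the two parity classes, each yielding the Fano edge set after relabelling. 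Your approach is shorter, avoids the long case enumeration, and buys a slightly stronger statement: uniqueness uses only conditions (1) and (2), with the SDR hypothesis (3) recovered afterwards from $3$-regularity via Corollary~\ref{cor:jukna} — exactly the observation the paper makes only at the very end of its proof. The paper's enumeration, in exchange, is more elementary and reuses the same style of analysis as its Lemma~\ref{lm1}. One small point worth making explicit in a final write-up: distinct edges meeting in exactly one vertex implies no repeated edges and that no two transversals share a string, which is what licenses the bijection with four distinct elements of $\{1,2\}^3$; with that said, your argument is complete.
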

\begin{proof}
	Let $G=(V,E)$ satisfy the properties of the claim. WLOG, we begin with an arbitrary pair of intersecting edges, $e_1=\set{v_1,v_2,v_3}$ and $e_2=\set{v_1,v_4,v_5}$, pictured below:
\begin{center}
\resizebox{90pt}{!}{%
	\begin{tikzpicture}
        [inner sep=0pt]
        \coordinate (v1) at (210:2cm);
        \coordinate (v2) at (270:1cm);
        \coordinate (v3) at (330:2cm);
        \coordinate (v4) at (150:1cm);
        \coordinate (v5) at (90:2cm);
        \foreach \i in {1,...,5}{
            \node [circle, font=\tiny, minimum size=0.4cm, line width=0mm] (v\i') at (v\i) {};
        }

        \filldraw [draw=black, fill=yellow, minimum size=0pt, opacity=0.2]
            (v5'.150) -- (v1'.150) arc (-210:-30:0.2cm) -- (v5'.330) arc (-30:150:0.2cm);

        \filldraw [draw=black, fill=blue, minimum size=0pt, opacity=0.2]
            (v1'.270) -- (v3'.270) arc(-90:90:0.2cm) -- (v1'.90) arc (90:270:0.2cm);
       \foreach \l in {1,...,5}{
      \filldraw [black] (v\l) circle (2pt) node [inner sep=5pt, label=below:$v_{\l}$] {};
    }
    \end{tikzpicture}
    }
\end{center}
We proceed by adding edges one at a time while maintaining property $(1)$. To add the third edge $e_3$, since the graph is $1$-intersecting, there are two cases:
\begin{itemize}
    \item [(a)]$e_3=\set{v_1,v_6,v_7}$ intersects both $e_1$ and $e_2$ at vertex $v_1$.
    	\begin{center}
    \resizebox{90pt}{!}{%
	\begin{tikzpicture}
        [inner sep=0pt]
        \coordinate (v1) at (210:2cm);
        \coordinate (v2) at (270:1cm);
        \coordinate (v3) at (330:2cm);
        \coordinate (v4) at (150:1cm);
        \coordinate (v5) at (90:2cm);
        \coordinate (v6) at (30:1cm);
        \coordinate (v7) at (0,0);
        \foreach \i in {1,...,7}{
            \node [circle, font=\tiny, minimum size=0.4cm, line width=0mm] (v\i') at (v\i) {};
        }

        \filldraw [draw=black, fill=yellow, minimum size=0pt, opacity=0.2]
            (v5'.150) -- (v1'.150) arc (-210:-30:0.2cm) -- (v5'.330) arc (-30:150:0.2cm);

        \filldraw [draw=black, fill=blue, minimum size=0pt, opacity=0.2]
            (v1'.270) -- (v3'.270) arc(-90:90:0.2cm) -- (v1'.90) arc (90:270:0.2cm);

        \filldraw [draw=black, fill=orange, minimum size=0pt, opacity=0.2]
            (v6'.120) -- (v1'.120) arc(120:300:0.2cm) -- (v6'.300) arc(-60:120:0.2cm) ;
       \foreach \l in {1,...,7}{
      \filldraw [black] (v\l) circle (2pt) node [inner sep=5pt, label=below:$v_{\l}$] {};
    }

    \end{tikzpicture}
    }
\end{center}

		\item [(b)] $e_3=\set{v_3,v_5,v_6}$ does not intersect $e_1$ and $e_2$ at $v_1$. By symmetry, we may represent this as below.
\begin{figure}[h]
		\begin{center}
\resizebox{90pt}{!}{%
	\begin{tikzpicture}
        [inner sep=0pt]
        \coordinate (v1) at (210:2cm);
        \coordinate (v2) at (270:1cm);
        \coordinate (v3) at (330:2cm);
        \coordinate (v4) at (150:1cm);
        \coordinate (v5) at (90:2cm);
        \coordinate (v6) at (30:1cm);
        \foreach \i in {1,...,6}{
            \node [circle, font=\tiny, minimum size=0.4cm, line width=0mm] (v\i') at (v\i) {};
        }

        \filldraw [draw=black, fill=yellow, minimum size=0pt, opacity=0.2]
            (v5'.150) -- (v1'.150) arc (-210:-30:0.2cm) -- (v5'.330) arc (-30:150:0.2cm);

        \filldraw [draw=black, fill=blue, minimum size=0pt, opacity=0.2]
            (v1'.270) -- (v3'.270) arc(-90:90:0.2cm) -- (v1'.90) arc (90:270:0.2cm);

        \filldraw [draw=black, fill=red, minimum size=0pt, opacity=0.2]
            (v3'.30) -- (v5'.30) arc(30:210:0.2cm) -- (v3'.210) arc(-150:30:0.2cm);
       \foreach \l in {1,...,6}{
      \filldraw [black] (v\l) circle (2pt) node [inner sep=5pt, label=below:$v_{\l}$] {};
    }
    \end{tikzpicture}
    }
\end{center}
\caption{\vspace{-4mm}A (linear) $3$-cycle.}
\label{fig:nohelly}
\end{figure}
\end{itemize}
Let us now add the fourth edge, which can be done in one of two ways: Either we add a new vertex, or we do not add a new vertex.
	\begin{itemize}

	\item If we add a new vertex: For case (a), we obtain a graph isomorphic to
\begin{center}
\resizebox{90pt}{!}{%
\begin{tikzpicture}
        [inner sep=0pt]
        \coordinate (v1) at (210:2cm);
        \coordinate (v2) at (270:1cm);
        \coordinate (v3) at (330:2cm);
        \coordinate (v4) at (150:1cm);
        \coordinate (v5) at (90:2cm);
        \coordinate (v6) at (30:1cm);
        \coordinate (v7) at (0,0);
        \path (v1) ++(90:1.73cm) node (v8) {};
        \path (v1) ++(90:3.46cm) node (v9) {};
        \foreach \i in {1,...,9}{
            \node [circle, font=\tiny, minimum size=0.4cm, line width=0mm] (v\i') at (v\i) {};
        }

        \filldraw [draw=black, fill=yellow, minimum size=0pt, opacity=0.2]
            (v5'.150) -- (v1'.150) arc (-210:-30:0.2cm) -- (v5'.330) arc (-30:150:0.2cm);

        \filldraw [draw=black, fill=blue, minimum size=0pt, opacity=0.2]
            (v1'.270) -- (v3'.270) arc(-90:90:0.2cm) -- (v1'.90) arc (90:270:0.2cm);

        \filldraw [draw=black, fill=orange, minimum size=0pt, opacity=0.2]
            (v6'.120) -- (v1'.120) arc(120:300:0.2cm) -- (v6'.300) arc(-60:120:0.2cm) ;

		\filldraw [draw=black, fill=red, minimum size=0pt, opacity=0.2]
            (v1'.0) -- (v9'.0) arc(0:180:0.2cm) -- (v1'.180) arc (-180:0:0.2cm);
       \foreach \l in {1,...,9}{
      \filldraw [black] (v\l) circle (2pt) node [inner sep=5pt, label=below:$v_{\l}$] {};
    }
    \end{tikzpicture}
    }
\end{center}
    This does not have $m=n$ and cannot be extended into a $1$-intersecting family since any further edge (of size $3$) would have to intersect $4$ existing edges. For (b), a case analysis reveals that the only way to extend by adding a new vertex while maintaining the $1$-intersection property is via a graph isomorphic to:
    \begin{figure}[h]
    \begin{center}
\resizebox{90pt}{!}{%
	
\begin{tikzpicture}
        [inner sep=0pt]
        \coordinate (v1) at (210:2cm);
        \coordinate (v2) at (270:1cm);
        \coordinate (v3) at (330:2cm);
        \coordinate (v4) at (150:1cm);
        \coordinate (v5) at (90:2cm);
        \coordinate (v6) at (30:1cm);
        \coordinate (v7) at (0,0);
        \foreach \i in {1,...,7}{
            \node [circle, font=\tiny, minimum size=0.4cm, line width=0mm] (v\i') at (v\i) {};
        }

        \filldraw [draw=black, fill=yellow, minimum size=0pt, opacity=0.2]
            (v5'.150) -- (v1'.150) arc (-210:-30:0.2cm) -- (v5'.330) arc (-30:150:0.2cm);

        \filldraw [draw=black, fill=blue, minimum size=0pt, opacity=0.2]
            (v1'.270) -- (v3'.270) arc(-90:90:0.2cm) -- (v1'.90) arc (90:270:0.2cm);

        \filldraw [draw=black, fill=red, minimum size=0pt, opacity=0.2]
            (v3'.30) -- (v5'.30) arc(30:210:0.2cm) -- (v3'.210) arc(-150:30:0.2cm);

        \filldraw [draw=black, fill=orange, minimum size=0pt, opacity=0.2]
            (v6'.120) -- (v1'.120) arc(120:300:0.2cm) -- (v6'.300) arc(-60:120:0.2cm) ;
       \foreach \l in {1,...,7}{
      \filldraw [black] (v\l) circle (2pt) node [inner sep=5pt, label=below:$v_{\l}$] {};
    }

    \end{tikzpicture}
}
\end{center}
\caption{\vspace{-4mm}How to add a fourth edge to extend case (b) via a new vertex.}
\label{fig:f1}
\end{figure}
\item  If we do not add a new vertex:
		\begin{enumerate}
            \item Case (a) can only be extended while maintaining the $1$-intersecting property via a graph isomorphic to that in Figure~\ref{fig:f1}.
			\item For case (b), either we obtain a graph isomorphic to Figure~\ref{fig:f1}, or we add $e_4=\{v_2,v_4,v_6\}$ to obtain Figure~\ref{fig:f2}.\\
\begin{figure}[h]			
\begin{center}
\resizebox{90pt}{!}{%

	    		\begin{tikzpicture}
        [inner sep=0pt]
        \coordinate (v1) at (210:2cm);
        \coordinate (v2) at (270:1cm);
        \coordinate (v3) at (330:2cm);
        \coordinate (v4) at (150:1cm);
        \coordinate (v5) at (90:2cm);
        \coordinate (v6) at (30:1cm);
        \foreach \i in {1,...,6}{
            \node [circle, font=\tiny, minimum size=0.4cm, line width=0mm] (v\i') at (v\i) {};
        }

        \filldraw [draw=black, fill=yellow, minimum size=0pt, opacity=0.2]
            (v5'.150) -- (v1'.150) arc (-210:-30:0.2cm) -- (v5'.330) arc (-30:150:0.2cm);

        \filldraw [draw=black, fill=blue, minimum size=0pt, opacity=0.2]
            (v1'.270) -- (v3'.270) arc(-90:90:0.2cm) -- (v1'.90) arc (90:270:0.2cm);

        \filldraw [draw=black, fill=red, minimum size=0pt, opacity=0.2]
            (v3'.30) -- (v5'.30) arc(30:210:0.2cm) -- (v3'.210) arc(-150:30:0.2cm);

        \filldraw [draw=black, fill=green, minimum size=0pt, opacity=0.2, even odd rule]
            circle (1.2cm) (0:0.8) arc (0:360:0.8cm);
       \foreach \l in {1,...,6}{
      \filldraw [black] (v\l) circle (2pt) node [inner sep=5pt, label=below:$v_{\l}$] {};
    }
    \end{tikzpicture}
}
\end{center}
\caption{\vspace{-4mm}One of two ways to add a $4$th edge to extend case (b) without adding a new vertex.}
\label{fig:f2}
\end{figure}
		\end{enumerate}
	\end{itemize}
From the fifth edge onward, an analogous case analysis yields at each step that only two types of edges can be added at this point which preserve the $1$-intersecting property: A ``crossing edge'' (as in Figure~\ref{fig:f1}) $\set{v_1,v_6,v_7}$, $\set{v_3,v_4,v_7}$, or $\set{v_2,v_5,v_7}$, or a ``circular edge'' (as in Figure~\ref{fig:f2}) $\set{v_2,v_4,v_6}$. When all such edges are added, we arrive at $7$ edges total and obtain the Fano hypergraph. No fewer edges satisfies the $m=n$ property, and clearly one cannot add another edge at this point while satisfying the $1$-intersecting property. Finally, that $G$ has an SDR can either be verified directly or by applying Corollary~\ref{cor:jukna}.
\end{proof}

\subsection{Linear hypergraphs}\label{sscn:linear}
In Section~\ref{ssscn:twointersect}, we studied the set of $3$-uniform hypergraphs in which if a pair of distinct edges intersect, then the intersection size is at least $2$. We now take the complementary approach by asking that any pair of distinct edges intersect on \emph{at most one} vertex. Such hypergraphs are called \emph{linear}; note that linear hypergraphs are not necessarily $1$-intersecting.

To set the context, linear hypergraphs in general (i.e. not necessarily with an SDR) are a complicated, but well-studied, class. Let $L_3^l$ denote the set of edge intersection graphs of linear $3$-uniform hypergraphs. (An \emph{edge intersection graph} (EIG) is the generalization of a line graph to the setting of hypergraphs $G$; namely, the vertices of the EIG are the hyperedges of $G$, and two vertices of the EIG are neighbors if and only if their hyperedges in G intersect. Note that an EIG is a graph, i.e. a $2$-uniform hypergraph.) It is known that $L_3^l$ has no ``finite characterization'' in terms of a finite list of forbidden induced subgraphs~\cite{NAIK1982159}. However, the same paper does give the following characterization: A graph $G$ is in $L_k^l$ if and only if $G$ has so-called \emph{Krausz dimension} $k$. Unfortunately, it was later shown~\cite{Hlinen1997ComputationalCO} that deciding if Krausz dimension is at most $3$ is NP-complete (even on planar graphs of degree at most $5$); thus, determining if a graph is an EIG of a linear $3$-uniform hypergraph is NP-complete, suggesting the class of linear $3$-uniform hypergraphs is quite complex.

In our setting, we study linear $3$-uniform hypergraphs $G$ with the additional guarantees that $m=n$ and that $G$ has an SDR. Nevertheless, we are not able to complete a structural characterization. Instead, we find various interesting ``canonical-looking'' examples. For brevity, we henceforth denote a linear $3$-uniform hypergraph with $m=n$ and an SDR as an LGraph.

\paragraph*{Minimum size and the Fano plane.} What is the minimum size an LGraph can have?

\begin{theorem}
    The minimum size of an LGraph is $m=n=7$, and the Fano hypergraph (Figure~\ref{fig:fano}) is the unique LGraph of this size.
\end{theorem}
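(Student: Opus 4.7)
The plan is to combine a simple degree-counting bound, the structure of linear $3$-uniform hypergraphs at the equality case, and the classical uniqueness of the Steiner triple system $S(2,3,7)$. First I would establish $n \geq 7$; then show that any LGraph with $n=m=7$ must be $3$-regular and in fact a Steiner triple system; then invoke uniqueness of $S(2,3,7)$ and observe that the Fano hypergraph has an SDR.

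To prove $n \geq 7$, I would first bound the maximum degree. In any linear $3$-uniform hypergraph, a vertex $v$ of degree $d$ lies in $d$ edges that pairwise intersect only at $v$; hence the $2d$ ``other'' vertices contributed by these edges are all distinct, so $2d \leq n-1$ and $d \leq (n-1)/2$. Since $m=n$ forces the incidence count $\sum_v \deg(v)=3m=3n$, every LGraph has average degree exactly $3$. Substituting the bound $d \leq (n-1)/2$, the sum of degrees is at most $n(n-1)/2$, which is strictly less than $3n$ whenever $n \leq 6$. This rules out $n \in \{1,\ldots,6\}$, so $n \geq 7$.

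For the equality case $n=m=7$, the bound $d \leq (n-1)/2$ gives $d \leq 3$, while $\sum_v \deg(v) = 21$ forces every vertex to have degree exactly $3$. I would then argue that such an LGraph must be a Steiner triple system of order $7$: fixing any vertex $v$, its three incident edges pairwise meet only at $v$ (by linearity), contributing exactly $6$ distinct other vertices, which must therefore be all of $V \setminus \{v\}$; in particular every pair $\{v,u\}$ is contained in exactly one edge of $G$. This is precisely the defining condition of $S(2,3,7)$. I would then invoke the classical fact that $S(2,3,7)$ is unique up to isomorphism, with its unique realization being the Fano hypergraph of Figure~\ref{fig:fano}. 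Finally, since the Fano hypergraph is $3$-regular with $m=n=7$, Corollary~\ref{cor:jukna} immediately yields that it has an SDR, so it is indeed an LGraph.

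The main potential obstacle is the appeal to uniqueness of $S(2,3,7)$: if a self-contained argument is preferred, I would replace this citation with a short direct case analysis (start from two edges meeting at a common vertex, observe that the remaining edges through that vertex are forced by the neighborhood argument above, and then check that the four edges not through $v$ are uniquely determined up to relabeling). Everything else in the proof is essentially counting, so this identification step is where the real content lies.
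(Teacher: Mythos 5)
Your proposal is correct, and it takes a genuinely different route from the paper. For the lower bound the paper invokes Corr\'adi's lemma (with $k=1$, $r=3$, $N=m=n$ this gives $n\geq 9n/(n+2)$, i.e.\ $n\geq 7$), whereas you use the elementary observation that linearity forces $2\deg(v)\leq n-1$ while $m=n$ forces $\sum_v \deg(v)=3n$; both arguments use only linearity and $m=n$, so they are interchangeable. The real divergence is in uniqueness. The paper argues that an LGraph with $m=n=7$ cannot contain two disjoint edges (any further edge would have to use the seventh vertex plus one vertex from each of the two edges, capping the edge count at $5$), hence is $1$-intersecting, and then invokes its own Lemma~\ref{l:1intersect}, whose case analysis uses the SDR hypothesis. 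You instead push the equality case of the degree bound: all degrees equal $3$, the three edges through any vertex $v$ cover all of $V\setminus\{v\}$, so every pair lies in exactly one edge and the hypergraph is an $S(2,3,7)$, which is the Fano plane by the classical uniqueness of that Steiner system; existence of an SDR for the Fano hypergraph then follows from Corollary~\ref{cor:jukna}, exactly as in the paper. Your route is arguably stronger, since it shows that \emph{every} linear $3$-uniform hypergraph with $m=n=7$ is the Fano hypergraph, without using the SDR hypothesis in the uniqueness step; its only cost is the appeal to the external uniqueness of $S(2,3,7)$, which the paper avoids by keeping everything self-contained in Lemma~\ref{l:1intersect} --- your suggested fallback case analysis would close that gap if a self-contained proof is required.
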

\begin{proof}
    The lower bound on minimum size follows from an old lemma of Corr\'{a}di.
\begin{lemma}[Corr\'adi 1969 ~\cite{SJ011}]
 Let $A_1,...,A_N$ be $r-$element sets and $X$ be their union. If $|A_i\cap A_j|\leq k$ for all $i\neq j$, then
$
\abs{X}\geq (r^2N)/(r+(N-1)k).
$
\end{lemma}
\noindent Plugging in our LGraph parameters $k=1$, $r=3$, $E=\{A_1,...,A_m\}$, ${V}=X=\{1,...,n\}$, yields that $\abs{E}=\abs{V}\geq 7$. Note that the Fano hypergraph is an LGraph with $m=n=7$, which also happens to be $1$-intersecting. To show uniqueness, we hence show that \emph{any} LGraph with $m=n=7$ must be $1$-intersecting; the claim then follows from Lemma~\ref{l:1intersect}.

      To show that any LGraph with $m=n=7$ must be $1$-intersecting, suppose, for sake of contradiction, that $G=(V,E)$ is an LGraph with $m=n=7$ and edges $e_1$,$e_2$ with $e_1\cap e_2=\emptyset$. Then, the subgraph of $G$ induced by $\set{e_1,e_2}$ is isomorphic to\\
    	\begin{center}
\resizebox{300pt}{!}{%
\begin{tikzpicture}
    \coordinate (v1) at (0,0);
    \coordinate (v2) at (2,0) {};
    \coordinate (v3) at (4,0);
    \coordinate (v4) at (6,0) {};
    \coordinate (v5) at (8,0) {};
    \coordinate (v6) at (10,0) {};
    \coordinate (v7) at (12,0) {};
    \foreach \v in {v1,v2,v3,v4,v5,v6,v7}{
      \node [circle, minimum width=0.2cm] (\v') at (\v) {};
    }
    \filldraw [draw=black, fill=red, opacity=0.2]
		(v1'.270) -- (v3'.270) arc (-90:90:0.2)
		-- (v2'.90) arc (90:90:0.2)
		-- (v1'.90) arc (90:270:0.2) -- cycle;
		
	\filldraw [draw=black, fill=green, opacity=0.2]
		(v5'.90) -- (v7'.90) arc (90:-90:0.2)
		-- (v6'.-90) arc (-90:-90:0.2)
		-- (v5'.-90) arc (-90:-270:0.2) -- cycle;
		
    \foreach \l in {1,...,7}{
      \filldraw [black] (v\l) circle (2pt) node [label=below:$v_\l$] {};
    }
\end{tikzpicture}
}
\end{center}
Since $G$ is linear, any edge we add must contain $v_4$, and in turn, must also contain a \emph{unique} vertex from each of $\set{v_1,v_2,v_3}$ and $\set{v_5,v_6,v_7}$. Thus, the maximum number of edges we may add is $3$, giving a total of $5$ edges, which is a contradiction.
\end{proof}

\paragraph*{The tiling of the torus.} There is a sense in which the Fano hypergraph is the ``most compact'' LGraph. On the opposite extreme, an example of what seems the ``least compact'' LGraph is the ``tiling of the torus'', given in Figure~\ref{fig:torus}. More precisely, this is the specialization of Example \ref{ex:torus} to the case $k=3$ and $a_1=a_2=3$.

\paragraph*{The Helly property.} Another well-studied hypergraph property is the \emph{Helly} property, which requires that any intersecting family $F$ of hyperedges of the hypergraph $G$ have non-empty intersection. An equivalent, perhaps more geometrically intuitive, characterization is the following:
\begin{corollary}[Page 23 of \cite{Be85}, stated as Corollary 5.1 of~\cite{BRETTO2001177}]
    A hypergraph $H$ has the Helly property if and only if for any three vertices $a_1,a_2,a_3$, the family of hyperedges which contains at least two of these vertices has non-empty intersection.
\end{corollary}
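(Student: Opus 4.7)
The plan is to prove the equivalence in two directions, with the forward direction being immediate and the reverse direction handled by induction on the size of the intersecting family, with the three-vertex hypothesis providing the inductive step.

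For the forward direction (Helly $\Rightarrow$ three-vertex condition), I would fix any three vertices $a_1, a_2, a_3$ and let $F$ denote the family of hyperedges containing at least two of them. It suffices to verify that $F$ is pairwise intersecting, since then Helly yields $\bigcap_{e \in F} e \neq \emptyset$. But if $e, e' \in F$, then each contains two of $\{a_1, a_2, a_3\}$, so by the pigeonhole principle they must share at least one of these vertices. This step is routine and requires no new ideas.

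The reverse direction is the substantive one and I would prove it by induction on the size $k$ of a pairwise intersecting family $F = \{e_1, \ldots, e_k\}$. The cases $k \leq 2$ are trivial, and $k = 3$ follows directly from the hypothesis: choose $a_i \in e_j \cap e_l$ for each pair $\{j,l\} = \{1,2,3\} \setminus \{i\}$, note that each $e_i$ contains at least two of $a_1, a_2, a_3$ (allowing for the case where two of these representatives coincide, which gives the conclusion immediately), and apply the hypothesis. For the inductive step with $k \geq 4$, I would apply the induction hypothesis to each of the $k$ sub-families of size $k-1$, obtaining vertices $x_i \in \bigcap_{j \neq i} e_j$. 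If any two of the $x_i$ coincide, say $x_i = x_{i'}$, then this common vertex lies in every $e_j$ and we are done; otherwise $x_1, x_2, x_3$ are three distinct vertices, and a direct check shows that \emph{every} $e_j$ (for $j \in \{1,\ldots,k\}$) contains at least two of $\{x_1, x_2, x_3\}$ --- indeed, $e_j$ contains $x_i$ for all $i \neq j$, which is at least two vertices when $j \in \{1,2,3\}$ and all three when $j \geq 4$. Applying the three-vertex hypothesis to $\{x_1, x_2, x_3\}$ then gives a vertex common to all of $F$.

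The main (very mild) obstacle is correctly bookkeeping the edge cases in the inductive step, particularly the possibility that the representatives $x_i$ are not all distinct, and ensuring the three-vertex family in the hypothesis indeed contains all of $F$ (not just a subset). Once these are handled, the argument is clean and self-contained, relying only on the hypothesis and the pigeonhole principle; no structural results about hypergraphs from earlier in the paper are needed.
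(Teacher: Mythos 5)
Your proof is correct. Note that the paper itself offers no argument for this statement: it is quoted verbatim from the literature (Berge's book, via Corollary~5.1 of the Bretto et al.\ reference), so there is no in-paper proof to compare against. Your two directions are both sound: the forward direction is indeed just pigeonhole on two $2$-element subsets of a $3$-element set, and your reverse induction is the standard argument. The two points that genuinely need care are exactly the ones you flag and handle: (i) when representatives coincide ($a_i=a_j$ in the base case, or $x_i=x_{i'}$ in the inductive step) the common vertex already lies in $\bigcap_j e_j$, so the three-vertex hypothesis is only invoked for genuinely distinct triples; and (ii) the family in the hypothesis ranges over \emph{all} hyperedges containing at least two of $x_1,x_2,x_3$, which is a superset of $F$ --- this is harmless, since a vertex common to the larger family is in particular common to every $e_j$, and you verify that each $e_j$ does contain at least two of the $x_i$. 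The only cosmetic caveat is the degenerate convention when the family of qualifying hyperedges is empty (or when the intersecting family has fewer than three members), which is an ambiguity of the quoted statement itself rather than a gap in your argument.
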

\noindent In the special case of linear hypergraphs, the Helly property forbids a ``linear 3-cycle'', as illustrated in Figure~\ref{fig:nohelly}. In this sense, linear Helly hypergraphs roughly generalize triangle-free graphs.

Observe now that the two ``canonical-looking'' LGraphs we have discussed thus far, the Fano plane and the tiling of the torus, do \emph{not} satisfy the Helly property. This raises the question: Do all LGraphs violate the Helly property? In Lemma~\ref{l:helly}, we answer this question negatively by giving an LGraph with the Helly property. Thus, even with numerous restrictions (i.e. $m=n$, linear, Helly), the set of $3$-uniform hypergraphs with SDRs seems non-trivial.
\begin{lemma}\label{l:helly}
    There exists a linear $3$-uniform hypergraph $G=(V,E)$ with $\abs{E}=\abs{V}=28$ such that $G$ satisfies the Helly property, and $G$ has an SDR.
\end{lemma}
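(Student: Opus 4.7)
The plan is to prove the lemma by exhibiting an explicit hypergraph, which the authors evidently call the iCycle, and then verifying each of the required properties (linear, Helly, $|V|=|E|=28$, SDR) by direct inspection. Given the name and the clue in the introduction that the example is ``somewhat involved,'' I would look for a construction with a cyclic symmetry so that the verification reduces to a bounded case analysis on orbit representatives.

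Concretely, I would take $V$ to be $\mathbb{Z}_{28}$ (or $\mathbb{Z}_a \times \mathbb{Z}_b$ for some convenient $a,b$ with $ab=28$) and define $E$ as the orbit of a small number of ``generator'' edges under a cyclic group action on $V$. Choosing the generators so that each vertex of the orbit representatives has degree exactly $3$ will make the construction $3$-regular and yield $|E|=28$ by a standard double counting ($3|V|=3|E|$). The generators should consist of short edges (of the form $\{0,i,j\}$ with small $i<j$) together with a small number of ``chord'' edges chosen to saturate the degree condition without colliding with translates of themselves.

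Once the construction is fixed, the verification proceeds in four parts. \emph{Linearity:} check that for every pair of generator edges $e, e'$ and every shift $e'+s$, $|e\cap (e'+s)|\le 1$; by cyclic symmetry this is a finite check on difference sets. \emph{Helly:} recall that for linear $3$-uniform hypergraphs the Helly property is equivalent to the absence of a linear $3$-cycle, i.e.\ three edges that pairwise share exactly one vertex but have empty triple intersection (see Figure~\ref{fig:nohelly}); again symmetry reduces this to checking triples consisting of one generator and two translates. \emph{$|V|=|E|=28$:} immediate from the construction. \emph{SDR:} since the construction is $3$-regular $3$-uniform with $|E|\le|V|$, Corollary~\ref{cor:jukna} provides an SDR for free (alternatively, a matching can be written down explicitly by exploiting the cyclic symmetry).

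The main obstacle is the Helly condition. Most natural $3$-regular linear $3$-uniform hypergraphs on a few dozen vertices contain many linear $3$-cycles, so the choice of generators has to be engineered to forbid \emph{every} such configuration while still achieving $3$-regularity and avoiding coincident translates that would violate linearity. This is why the construction is ``involved'': $28$ appears to be close to the smallest cardinality for which all three constraints (linear, $3$-regular with $m=n$, and Helly) can be simultaneously met. Once the generators are in hand, however, the remainder of the proof is a routine but careful finite check, organized by the symmetry group of the construction, showing that no two generator orbits collide in two points and no three mutually intersecting orbits share no common vertex.
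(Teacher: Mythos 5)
Your proposal does not actually prove the lemma: it is a search strategy, not a construction. The lemma is an existence statement whose entire content is the explicit example, and you never exhibit the generator edges your argument needs; you explicitly defer the one genuinely hard part (choosing generators so that \emph{no} linear $3$-cycle occurs while keeping linearity and $3$-regularity) to an unspecified engineering step. The paper's proof consists precisely of supplying such an example, the iCycle of Figures~\ref{fig:helly} and~\ref{fig:icycle}: $28$ vertices and $28$ edges listed explicitly, with the SDR given by the underlined entries of the table, and linearity and the Helly property verified by (tedious) inspection. Note also that the paper's example is neither $3$-regular nor vertex-transitive --- its degrees range from $2$ to $5$ --- so your appeal to Corollary~\ref{cor:jukna} presupposes an extra regularity property that the known example does not satisfy; you would have to build the SDR into your construction or exhibit it by hand, as the paper does.

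Worse, the most natural instantiation of your plan provably fails. If $V=\mathbb{Z}_{28}$ and $E$ is a union of orbits under the full translation action, then since $3\nmid 28$ every orbit of a $3$-element edge has size $28$, so $|E|=28$ forces a single generator $e=\{0,i,j\}$. Linearity forces the six differences $\pm i,\pm j,\pm(i-j)$ to be pairwise distinct, and under that condition the three edges $e$, $e+i$, $e+j$ pairwise intersect in the single, pairwise distinct vertices $i$, $j$, $i+j$ and have empty common intersection --- exactly the linear $3$-cycle of Figure~\ref{fig:nohelly}, so the Helly property fails for \emph{every} admissible generator. Hence no construction with that symmetry exists, and any variant (a proper cyclic subgroup acting, several generators, $\mathbb{Z}_a\times\mathbb{Z}_b$) would require a separate existence argument that you have not supplied. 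In short, without an explicit edge list, or a proof that suitable generators exist, there is no proof; the ``routine finite check'' you describe cannot even begin.
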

\begin{proof}
    Since the Helly property forbids a linear $3$-cycle, the idea of the construction is to build a longer linear cycle, with added ``links'' between vertices to ensure $m=n$. We call this the ``interlinked cycle'' (iCycle) for short; a formal specification is given in Figure~\ref{fig:helly}, and a graphical depiction in Figure~\ref{fig:icycle}. The claimed properties are tedious, but straightforward, to verify.
    \begin{figure}[t]
        \[
        \begin{tabular}{|c|c|c|c|c|c|c|c|c|c|}
          \hline
          e & v & e & v & e & v & e & v & e& v
           \\
          \hline
          1 & (1,2,\underline{28}) & 7 & (12,\underline{13},14) & 13 & (24,25,\underline{26}) & 19 & (8,14,\underline{19}) & 25 & (\underline{3},15,27) \\
          2 & (\underline{2},3,4) & 8 & (\underline{14},15,16) & 14 & (26,\underline{27},28) & 20 & (6,12,\underline{21}) & 26 & (1,\underline{6},24)  \\
          3 & (\underline{4},5,6) & 9 & (\underline{16},17,18) & 15 & (\underline{5},22,28) & 21 & (4,10,\underline{23}) & 27 & (\underline{1},10,20)  \\
          4 & (6,7,\underline{8}) & 10& (\underline{18},19,20) & 16 & (\underline{7},20,26) & 22 & (2,8,\underline{25}) & 28 & (1,13,\underline{17})    \\
          5 & (8,9,\underline{10}) & 11 & (20,21,\underline{22}) & 17 & (\underline{9},18,24) & 23 & (10,15,\underline{20}) &  &  \\
          6 & (10,11,\underline{12}) & 12 & (22,23,\underline{24}) & 18 & (\underline{11},16,22) & 24 & (6,\underline{15},24) & &\\
          \hline
        \end{tabular}
        \]
    \caption{A formal specification of the iCycle. The columns labelled \emph{e} and \emph{v} index the edges and $3$-tuples corresponding to those edges, respectively. The underlined element of each $3$-tuple denotes the qubit matched to that edge.}
    \label{fig:helly}
    \end{figure}
    \begin{figure}[t]
	\begin{center}
\resizebox{300pt}{!}{%
\begin{tikzpicture}
[inner sep=0pt, minimum size = 0pt]
	\coordinate (v0) at (0,0) {};
	\coordinate (v1) at (-8,0) {};
	\coordinate (v2) at (-6,1) {};
	\coordinate (v3) at (-5,3) {};
	\coordinate (v4) at (-4,1) {};
	\coordinate (v5) at (-3,3) {};
	\coordinate (v6) at (-2,1) {};
	\coordinate (v7) at (-1,3) {};
	\coordinate (v8) at (0,1) {};
	\coordinate (v9) at (1,3) {};
	\coordinate (v10) at (2,1) {};
	\coordinate (v11) at (3,3) {};
	\coordinate (v12) at (4,1) {};
	\coordinate (v13) at (5,3) {};
	\coordinate (v14) at (6,1) {};
	\coordinate (v15) at (8,0) {};
	\coordinate (v16) at (6,-1) {};
	\coordinate (v17) at (5,-3) {};
	\coordinate (v18) at (4,-1) {};
	\coordinate (v19) at (3,-3) {};
	\coordinate (v20) at (2,-1) {};
	\coordinate (v21) at (1,-3) {};
	\coordinate (v22) at (0,-1) {};
	\coordinate (v23) at (-1,-3) {};
	\coordinate (v24) at (-2,-1) {};
	\coordinate (v25) at (-3,-3) {};
	\coordinate (v26) at (-4,-1) {};
	\coordinate (v27) at (-5,-3) {};
	\coordinate (v28) at (-6,-1) {};
	
	\foreach \v in {v1,v2,v3,v4,v5,v6,v7,v8,v9,v10,v11,v12,v13,v14,v15,v16,v17,v18,v19,v20,v21,v22,v23,v24,v25,v26,v27,v28}{
      \node [circle, minimum size=0.4cm, line width=0pt] (\v') at (\v) {};
    }

	\filldraw [draw=black, fill=yellow, opacity=0.2]
		(v4'.30) -- (v3'.30) arc (30:150:0.2cm)
		-- (v2'.150) arc (150:270:0.2)
		-- (v4'.270) arc (-90:30:0.2) -- cycle;
		
	\filldraw [draw=black, fill=yellow, opacity=0.2]
		(v6'.30) -- (v5'.30) arc (30:150:0.2)
		-- (v4'.150) arc (150:270:0.2)
		-- (v6'.270) arc (-90:30:0.2) -- cycle;
	\filldraw [draw=black, fill=yellow, opacity=0.2]
		(v8'.30) -- (v7'.30) arc (30:150:0.2)
		-- (v6'.150) arc (150:270:0.2)
		-- (v8'.270) arc (-90:30:0.2) -- cycle;
	\filldraw [draw=black, fill=yellow, opacity=0.2]
		(v10'.30) -- (v9'.30) arc (30:150:0.2)
		-- (v8'.150) arc (150:270:0.2)
		-- (v10'.270) arc (-90:30:0.2) -- cycle;
	\filldraw [draw=black, fill=yellow, opacity=0.2]
		(v12'.30) -- (v11'.30) arc (30:150:0.2)
		-- (v10'.150) arc (150:270:0.2)
		-- (v12'.270) arc (-90:30:0.2) -- cycle;
	\filldraw [draw=black, fill=yellow, opacity=0.2]
		(v14'.30) -- (v13'.30) arc (30:150:0.2)
		-- (v12'.150) arc (150:270:0.2)
		-- (v14'.270) arc (-90:30:0.2) -- cycle;
		
	\filldraw [draw=black, fill=yellow, opacity=0.2]
		(v16'.-30) -- (v17'.-30) arc (-30:-150:0.2)
		-- (v18'.210) arc (210:90:0.2)
		-- (v16'.90) arc (90:-30:0.2) -- cycle;
	\filldraw [draw=black, fill=yellow, opacity=0.2]
		(v18'.-30) -- (v19'.-30) arc (-30:-150:0.2)
		-- (v20'.210) arc (210:90:0.2)
		-- (v18'.90) arc (90:-30:0.2) -- cycle;
	\filldraw [draw=black, fill=yellow, opacity=0.2]
		(v20'.-30) -- (v21'.-30) arc (-30:-150:0.2)
		-- (v22'.210) arc (210:90:0.2)
		-- (v20'.90) arc (90:-30:0.2) -- cycle;
	\filldraw [draw=black, fill=yellow, opacity=0.2]
		(v22'.-30) -- (v23'.-30) arc (-30:-150:0.2)
		-- (v24'.210) arc (210:90:0.2)
		-- (v22'.90) arc (90:-30:0.2) -- cycle;
	\filldraw [draw=black, fill=yellow, opacity=0.2]
		(v24'.-30) -- (v25'.-30) arc (-30:-150:0.2)
		-- (v26'.210) arc (210:90:0.2)
		-- (v24'.90) arc (90:-30:0.2) -- cycle;
	\filldraw [draw=black, fill=yellow, opacity=0.2]
		(v26'.-30) -- (v27'.-30) arc (-30:-150:0.2)
		-- (v28'.210) arc (210:90:0.2)
		-- (v26'.90) arc (90:-30:0.2) -- cycle;
		
	\filldraw [draw=black, fill=yellow, opacity=0.2]
		(v1'.120) -- (v2'.120) arc (120:0:0.2)-- (v28'.0) arc (0:-120:0.2)--(v1'.-120) arc (-120:-240:0.2) -- cycle;
	\filldraw [draw=black, fill=yellow, opacity=0.2]
		(v15'.60) -- (v14'.60) arc (60:180:0.2)-- (v16'.180) arc (180:300:0.2)--(v15'.-60) arc (-60:30:0.2) -- cycle;
%
	\filldraw [draw=black, fill=red, opacity=0.2, line width=0.5pt]
		plot[smooth, tension=1] coordinates { (v28'.120) (-4,0.5) (v5'.180)} arc(180:0:0.2) --
		plot[smooth, tension=1] coordinates { (v5'.0) (-2,0.5) (v22'.60)} arc(60:-120:0.2)
		plot[smooth, tension=1] coordinates { (v22'.-120) (-3,0) (v28'.-60)}
			arc(-60:-240:0.2);
	\filldraw [draw=black, fill=red, opacity=0.2, line width=0.5pt]
		plot[smooth, tension=1] coordinates { (v26'.120) (-2,0.5) (v7'.180)} arc(180:0:0.2) --
		plot[smooth, tension=1] coordinates { (v7'.0) (0,0.5) (v20'.60)} arc(60:-120:0.2)
		plot[smooth, tension=1] coordinates { (v20'.-120) (-1,0) (v26'.-60)}
			arc(-60:-240:0.2);
	\filldraw [draw=black, fill=red, opacity=0.2, line width=0.5pt]
		plot[smooth, tension=1] coordinates { (v24'.120) (0,0.5) (v9'.180)} arc(180:0:0.2) --
		plot[smooth, tension=1] coordinates { (v9'.0) (2,0.5) (v18'.60)} arc(60:-120:0.2)
		plot[smooth, tension=1] coordinates { (v18'.-120) (1,0) (v24'.-60)}
			arc(-60:-240:0.2);
	\filldraw [draw=black, fill=red, opacity=0.2, line width=0.5pt]
		plot[smooth, tension=1] coordinates { (v22'.120) (2,0.5) (v11'.180)} arc(180:0:0.2) --
		plot[smooth, tension=1] coordinates { (v11'.0) (4,0.5) (v16'.60)} arc(60:-120:0.2)
		plot[smooth, tension=1] coordinates { (v16'.-120) (3,0) (v22'.-60)}
			arc(-60:-240:0.2);
%
	\filldraw [draw=black, fill=green, opacity=0.2, line width=0.5pt]
		plot[smooth, tension=1] coordinates { (v8'.-90) (2,0) (v19'.-180)} arc(-180:0:0.2) --
		plot[smooth, tension=1] coordinates { (v19'.0) (4,0) (v14'.-90)} arc(-90:120:0.2)
		plot[smooth, tension=1] coordinates { (v14'.120) (3,0.5) (v8'.60)}
		arc(60:270:0.2);
	\filldraw [draw=black, fill=green, opacity=0.2, line width=0.5pt]
		plot[smooth, tension=1] coordinates { (v6'.-90) (0,0) (v21'.-180)} arc(-180:0:0.2) --
		plot[smooth, tension=1] coordinates { (v21'.0) (2,0) (v12'.-90)} arc(-90:120:0.2)
		plot[smooth, tension=1] coordinates { (v12'.120) (1,0.5) (v6'.60)}
		arc(60:270:0.2);
	\filldraw [draw=black, fill=green, opacity=0.2, line width=0.5pt]
		plot[smooth, tension=1] coordinates { (v4'.-90) (-2,0) (v23'.-180)} arc(-180:0:0.2) --
		plot[smooth, tension=1] coordinates { (v23'.0) (0,0) (v10'.-90)} arc(-90:120:0.2)
		plot[smooth, tension=1] coordinates { (v10'.120) (-1,0.5) (v4'.60)}arc(60:270:0.2);
	\filldraw [draw=black, fill=green, opacity=0.2, line width=0.5pt]
		plot[smooth, tension=1] coordinates { (v2'.-90) (-4,0) (v25'.-180)} arc(-180:0:0.2) --
		plot[smooth, tension=1] coordinates { (v25'.0) (-2,0) (v8'.-90)} arc(-90:120:0.2)
		plot[smooth, tension=1] coordinates { (v8'.120) (-3,0.5) (v2'.60)}
		arc(60:270:0.2);
	\filldraw [draw=black, fill=blue, opacity=0.2, line width=0.5pt]
		plot[smooth, tension=1] coordinates { (v10'.150) (3,4.5) (v15'.0) (3,-4.5)  (v20'.210)}
		arc(210:30:0.2) --
		plot[smooth, tension=1] coordinates {(v20'.30) (3,-4) (v15'.180) (3,4) (v10'.-30) }
		arc(-30:-210:0.2)
		;
	\filldraw [draw=black, fill=cyan, opacity=0.2, line width=0.5pt]
		plot[smooth, tension=1] coordinates { (v6'.150) (1.5,6) (v15'.0) (1,-5.5)  (v24'.210)}
		arc(210:30:0.2) --
		plot[smooth, tension=1] coordinates {(v24'.30) (1,-5) (v15'.180) (1.5,5.5) (v6'.-30) }
		arc(-30:-210:0.2)
		;	
	\filldraw [draw=black, fill=red, opacity=0.2, line width=0.5pt]
		plot[smooth, tension=1] coordinates { (v3'.150) (-2,8.5) (v15'.0) (-2,-8)  (v27'.210)}
		arc(210:30:0.2) --
		plot[smooth, tension=1] coordinates {(v27'.30) (-2,-7.5) (v15'.180) (-2,8) (v3'.-30) }
		arc(-30:-210:0.2)
		;	
	\filldraw [draw=black, fill=blue, opacity=0.2, line width=0.5pt]
		plot[smooth, tension=1] coordinates { (v6'.150) (-3,4) (v1'.0) (-3,-4)  (v24'.210)}
		arc(210:30:0.2) --
		plot[smooth, tension=1] coordinates {(v24'.30) (-3,-4.5) (v1'.180) (-3,4.5) (v6'.-30) }
		arc(-30:-210:0.2)
		;
	\filldraw [draw=black, fill=cyan, opacity=0.2, line width=0.5pt]
		plot[smooth, tension=1] coordinates { (v10'.150) (-1.5,5.5) (v1'.0) (-1.5,-5)  (v20'.210)}
		arc(210:30:0.2) --
		plot[smooth, tension=1] coordinates {(v20'.30) (-1.5,-5.5) (v1'.180) (-1.5,6) (v10'.-30) }
		arc(-30:-210:0.2)
		;
	\filldraw [draw=black, fill=red, opacity=0.2, line width=0.5pt]
		plot[smooth, tension=1] coordinates { (v13'.150) (2,8) (v1'.0) (2,-7.5)  (v17'.210)}
		arc(210:30:0.2) --
		plot[smooth, tension=1] coordinates {(v17'.30) (2,-8) (v1'.180) (2,8.5) (v13'.-30) }
		arc(-30:-210:0.2)
		;

    \foreach \l in {1,...,28}{
      \filldraw [black] (v\l) circle (2pt) node [inner sep=5pt, label=below:$v_{\l}$] {};
    }

\end{tikzpicture}
}
\end{center}
\caption{The iCycle.}
\label{fig:icycle}
\end{figure}
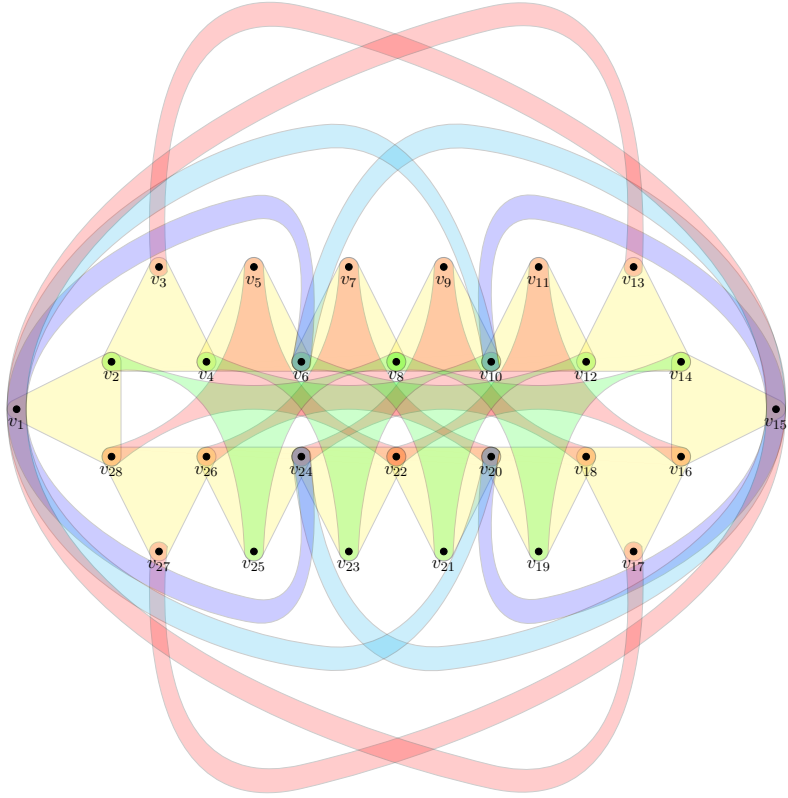
\end{proof}

\paragraph*{Do all linear $3$-uniform hypergraphs with $m=n$ have an SDR?}
The examples we have studied in this section are all LGraphs, even with constraints such as the Helly property. This raises the question: Could \emph{all} linear $3$-uniform hypergraphs with $m=n$ have SDR's? Intuitively, this is plausible, as the linear property forces a $3$-uniform hypergraph to be ``spread out'' in a rough sense. We now answer this question in the negative, and the construction of the counterexample will teach us something about the structure of LGraphs.

For this, we first require a definition of a \emph{block} and useful lemma.

\begin{definition}[Block]
    For hypergraph $G=(V,E)$, let $G_{E'}$ denote the edge-induced subgraph of $G$ by $E'\subseteq E$. Then, we call $G_{E'}$ a \emph{block} if it has an equal number of vertices and edges.
\end{definition}
\begin{lemma}\label{l:block}
    Let $G=(V,E)$ be a hypergraph with an SDR. Let $A$ denote a block in $G$. Then, the SDR matches the vertices in $A$ to the edges in $A$. Moreover, for any other block $B$ in $G$, $A$ and $B$ are vertex-disjoint.
\end{lemma}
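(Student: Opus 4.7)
The plan is to view the SDR as an injective function $M\colon E \to V$ with $M(e) \in e$ for each edge $e$, and to exploit, for a block $A$ with edge set $E_A$, that $M|_{E_A}$ necessarily lands in $V_A := \bigcup_{e \in E_A} e$ (the vertex set of the block). Since $M|_{E_A}$ is injective and $|E_A| = |V_A|$ by the block condition, this restriction is in fact a \emph{bijection} $E_A \to V_A$. This immediately yields the first conclusion: every vertex of $V_A$ is the SDR-representative of a unique edge in $E_A$, so the SDR pairs vertices of $A$ only with edges of $A$.

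For the second part, the key observation is that any vertex shared by two blocks $A$ and $B$ forces a correspondingly shared edge. Specifically, for $v \in V_A \cap V_B$, applying the bijection above separately to each block yields edges $e_A \in E_A$ and $e_B \in E_B$ with $M(e_A) = v = M(e_B)$; injectivity of $M$ then forces $e_A = e_B$, so this common edge lies in $E_A \cap E_B$. Taking the contrapositive, two blocks with disjoint edge sets must have disjoint vertex sets---the content of the lemma under the natural reading that ``another block $B$'' refers to a block whose edges are disjoint from those of $A$.

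I expect the main conceptual obstacle to be pinning down the right interpretation of ``other block'', since literally any block $B \neq A$ could be, e.g., a sub-block of $A$, which trivially shares vertices with $A$. The substantive content is the implication $E_A \cap E_B = \emptyset \implies V_A \cap V_B = \emptyset$, and a small refinement of the same matching argument (applied to the edge-induced subgraph of $E_A \cap E_B$, together with Hall-tightness of blocks) in fact shows that $E_A \cap E_B$ spans exactly $V_A \cap V_B$ and is itself a block whenever nonempty, so blocks behave well under intersection. Only the weaker implication is needed for the use of the lemma in the sequel.
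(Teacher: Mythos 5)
Your proof is correct and follows essentially the same route as the paper's: the representatives of $E_A$ must lie in $V_A$ since $A$ is edge-induced, so injectivity plus $|E_A|=|V_A|$ gives a bijection, and injectivity of the SDR then drives the disjointness claim. Your caveat about ``any other block'' is well taken rather than a quibble: the paper's own proof of the second claim implicitly makes the same edge-disjointness assumption (it declares a shared vertex impossible without noting that the vertex's unique matched edge could lie in $E_A\cap E_B$ — e.g.\ a triangle $\{1,2\},\{2,3\},\{1,3\}$ together with the larger block obtained by adding the pendant edge $\{3,4\}$ gives two distinct, vertex-sharing blocks in a hypergraph with an SDR), and the edge-disjoint reading you adopt, which your shared-vertex-forces-shared-edge argument establishes, is exactly what is needed in the application (Theorem~\ref{thm:linearcounter}), where in fact only the first claim is invoked.
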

\begin{proof}
    Let $A=(V_A,E_A)$ be a block in $G$. Since $A$ is edge-induced, the only vertices which can match to the edges in $E_A$ are those in $V_A$. Since $\abs{V_A}=\abs{E_A}$ by definition of a block, our first claim follows. The second claim now follows as a corollary, since if $A$ and $B$ intersect on vertex $v\in V$, then \emph{both} $A$ and $B$ require $v$ as part of their respective SDR's by our first claim, which is impossible since each vertex can only be matched to a single edge.
\end{proof}

\begin{theorem}\label{thm:linearcounter}
    There exists a linear hypergraph with $m=n$ which does not have an SDR.
\end{theorem}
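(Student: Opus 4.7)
The plan is to exhibit a linear $3$-uniform hypergraph $G$ with $|V(G)|=|E(G)|$ that contains two vertex-overlapping blocks, so that Lemma~\ref{l:block} immediately denies it an SDR. The guiding observation is that the Fano hypergraph is itself a block (7 vertices, 7 edges), and since Lemma~\ref{l:block} forces distinct blocks in any SDR-admitting hypergraph to be vertex-disjoint, two Fano planes glued at a single vertex cannot lie inside any hypergraph possessing an SDR.

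Concretely, I would start from two disjoint copies $F_1$ and $F_2$ of the Fano hypergraph of Section~\ref{ssscn:twointersect}. Each $F_i$ is linear and $1$-intersecting, with $|V(F_i)|=|E(F_i)|=7$. Then I glue them by identifying a single vertex $v\in V(F_1)$ with a single vertex of $V(F_2)$, producing a hypergraph $H$ with $|V(H)|=13$ and $|E(H)|=14$. Linearity is preserved: any two edges inside a common $F_i$ meet in exactly one vertex by the Fano property, and any edge of $F_1$ and any edge of $F_2$ meet in at most the identified vertex $v$ (and possibly not at all).

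Next, to equalize edge and vertex counts, I adjoin one new hyperedge $e=\{v,u_1,u_2\}$, where $u_1,u_2$ are two fresh vertices disjoint from $V(H)$. The resulting hypergraph $G$ has $13+2=15$ vertices and $14+1=15$ edges, so $m=n$. Linearity persists because $e$ meets every previously existing edge only at $v$ (if at all).

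Finally, I verify $G$ has no SDR. The edge-induced subgraphs $F_1$ and $F_2$ are each blocks of $G$, and they share the vertex $v$, so Lemma~\ref{l:block} (which requires any two blocks in an SDR-admitting hypergraph to be vertex-disjoint) forces a contradiction. Equivalently, the subset $E(F_1)\cup E(F_2)$ of $14$ edges collectively touches only $13$ vertices, so Hall's marriage condition fails directly. I expect no real obstacle in writing this out; the only items needing care are the two linearity checks performed at the gluing step and when adjoining $e$, both of which are routine.
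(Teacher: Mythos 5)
Your proof is correct, and it reaches the paper's conclusion by a closely related but genuinely leaner construction. The paper's proof keeps three copies of the Fano hypergraph vertex-disjoint, joins them by one extra edge that meets each copy in a single vertex, and then attaches further vertices and edges to restore $m=n$; the failure of an SDR is argued via the first part of Lemma~\ref{l:block}: every vertex of a Fano block is reserved for an edge of that block, so the connecting edge can never be matched. You instead overlap two Fano blocks in a single identified vertex $v$ and restore the count with one padding edge $\{v,u_1,u_2\}$ on two fresh vertices, invoking the second part of Lemma~\ref{l:block} (blocks in an SDR-admitting hypergraph must be vertex-disjoint) or, even more directly, the explicit Hall violation you point out: the $14$ edges of $E(F_1)\cup E(F_2)$ span only $13$ vertices. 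Your linearity checks are the right ones and go through: edges inside a Fano copy meet in exactly one vertex, edges from different copies meet in at most $\{v\}$, and the new edge meets all older edges only in $v$, so the resulting hypergraph is a linear $3$-uniform hypergraph with $m=n=15$ and no SDR. What your route buys is a smaller example and a self-contained verification that does not strictly need Lemma~\ref{l:block} at all; what the paper's version buys is a template in which the blocks stay disjoint and the obstruction is localized to a single unmatchable edge, which it then reuses (as remarked after the theorem) to build counterexamples with additional properties such as Helly by swapping the Fano blocks for other SDR-admitting blocks.
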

\begin{proof}
    The intuitive idea is to ``stitch together'' three copies of (say) the Fano hypergraph with an extra edge, so that there is no way to match the new edge to a vertex. This loses the $m=n$ property; we then add further edges and vertices to recover the property. The resulting hypergraph is depicted in Figure~\ref{fig:counter}. That the graph lacks an SDR follows from Lemma~\ref{l:block}; specifically, let $e$ denote the unique edge incident on all three Fano hypergraphs. Since the latter are blocks, none of their vertices can be matched to $e$, implying $e$ cannot be matched.
\end{proof}
\begin{figure}[t]
\begin{center}
\resizebox{220pt}{!}{%
\begin{tikzpicture}
	\coordinate (v0) at (0,0) {};
    \coordinate (v1) at (0,3) {};
    \coordinate (v2) at (0,2) {};
    \coordinate (v3) at (0,1){};
    \coordinate (v4) at (1,0) {};
    \coordinate (v5) at (2,0) {};
    \coordinate (v6) at (3,0) {};
    \coordinate (v7) at (-1,0) {};
    \coordinate (v8) at (-2,0) {};
    \coordinate (v9) at (-3,0) {};
    \coordinate (v10) at (-3,6) {};
    \coordinate (v11) at (3,6) {};
    \coordinate (v12) at (6,3) {};
    \coordinate (v13) at (6,-3) {};
    \coordinate (v14) at (-6,-3) {};
    \coordinate (v15) at (-6,3) {};

    \foreach \v in {v1,v2,v3,v4,v5,v6,v7,v8,v9,v10,v11,v12,v13,v14,v15}{
      \node [circle, minimum size=0.4cm, line width=0pt] (\v') at (\v) {};
    }

   \filldraw [draw=black, fill=blue, minimum size=0pt, opacity=0.2, even odd rule]
         circle (1.2cm) (0:0.8) arc (0:360:0.8cm);

   \filldraw [draw=black, fill=blue, minimum size=0pt, opacity=0.2, even odd rule]
         circle (2.2cm) (0:1.8) arc (0:360:1.8cm);
   \filldraw [draw=black, fill=blue, minimum size=0pt, opacity=0.2, even odd rule]
         circle (3.2cm) (0:2.8) arc (0:360:2.8cm);

   \filldraw [draw=black, fill=yellow, opacity=0.2]
		(v11'.-50) -- (v1'.-45) arc (-45:-140:0.2)
		-- (v10'.230) arc (230:90:0.2)
		-- (v11'.90) arc (90:-50:0.2) -- cycle;
    	
	\filldraw [draw=black, fill=yellow, opacity=0.2]
		(v12'.150) -- (v6'.150) arc (150:240:0.2)
		-- (v13'.210) arc (-150:0:0.2)
		-- (v12'.0) arc (0:150:0.2) -- cycle;
	\filldraw [draw=black, fill=yellow, opacity=0.2]
		(v15'.45) -- (v9'.45) arc (45:-45:0.2)
		-- (v14'.-45) arc (-45:-180:0.2)
		-- (v15'.180) arc (180:45:0.2) -- cycle;
   \filldraw [draw=black, fill=red, minimum size=0pt, opacity=0.2]
            (v4'.270) -- (v6'.270) arc(-90:90:0.2cm) -- (v4'.90) arc (90:270:0.2cm);
  \filldraw [draw=black, fill=red, minimum size=0pt, opacity=0.2]
            (v9'.270) -- (v7'.270) arc(-90:90:0.2cm) -- (v9'.90) arc (90:270:0.2cm);
   \filldraw [draw=black, fill=red, minimum size=0pt, opacity=0.2]
            (v1'.0) -- (v3'.0) arc(0:-180:0.2cm) -- (v1'.180) arc (180:0:0.2cm);

    \foreach \l in {1,...,9}{
      \filldraw [black] (v\l) circle (2pt) node [inner sep=5pt, label=below:$v_{\l}$] {};
    }
\end{tikzpicture}
}
\end{center}
\caption{A linear hypergraph with $m=n$ and no SDR. The three large triangles denote any block with an SDR, such as the Fano hypergraph.}
\label{fig:counter}
\end{figure}
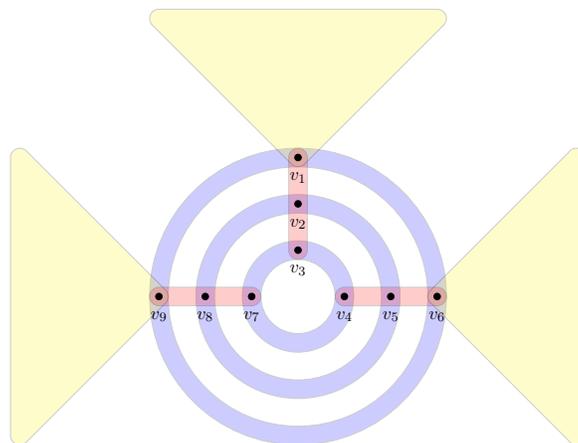
\noindent As an aside, note that the construction of Theorem~\ref{thm:linearcounter} can be modified to also satisfy the Helly property --- namely, replace each copy of the Fano hypergraph with a copy of the iCycle, discard the inner two circular edges $\set{v_2,v_8,v_5}$ and $\set{v_3,v_4,v_7}$ and the two ``straight-line'' edges $\set{v_9,v_8,v_7}$ and $\set{v_4,v_5,v_6}$, and discard vertices $v_4,v_5,v_7,v_8$. The resulting hypergraph is linear, Helly, and satisfies $m=n$, but does not have a matching. Thus, the construction of Theorem~\ref{thm:linearcounter} appears to yield a fairly systematic approach for constructing hypergraphs without SDR's, but still satisfying other desirable properties such as being linear or Helly.

\section*{Acknowledgements}
The first result of this project was partially completed while NdB was affiliated with the Centrum Wiskunde \& Informatica; SG thanks Ronald de Wolf and Centrum Wiskunde \& Informatica for their hospitality. SG thanks Howard Barnum and David Reeb regarding discussions on algebraic geometry, and David Gosset for discussions on Quantum SAT. NdB acknowledges support from the EPSRC National Quantum Technology
Hub in Networked Quantum Information Processing. SG acknowledges support from NSF grants CCF-1526189 and CCF-1617710.

\bibliography{3QSAT}

\end{document}